\newcommand{\eq}[1]{\hyperref[eq:#1]{(\ref*{eq:#1})}}
\renewcommand{\sec}[1]{\hyperref[sec:#1]{Section~\ref*{sec:#1}}}
\newcommand{\app}[1]{\hyperref[app:#1]{Appendix~\ref*{app:#1}}}
\newcommand{\theo}[1]{\hyperref[thm:#1]{Theorem~\ref*{thm:#1}}}
\newcommand{\algo}[1]{\hyperref[alg:#1]{Algorithm~\ref*{alg:#1}}}
\newcommand{\lemm}[1]{\hyperref[lem:#1]{Lemma~\ref*{lem:#1}}}
\newcommand{\defin}[1]{\hyperref[defn:#1]{Definition~\ref*{defn:#1}}}
\newcommand{\corr}[1]{\hyperref[cor:#1]{Corollary~\ref*{cor:#1}}}
\newcommand{\fig}[1]{\hyperref[fig:#1]{Figure~\ref*{fig:#1}}}
\newcommand{\propos}[1]{\hyperref[prop:#1]{Proposition~\ref*{prop:#1}}}
\newcommand{\rema}[1]{\hyperref[rem:#1]{Remark~\ref*{rem:#1}}}
\newcommand{\prim}{primitive}
\newtheorem{thm}{Theorem}[section]
\newtheorem{lem}[thm]{Lemma}
\newtheorem{prop}[thm]{Proposition}
\newtheorem{cor}[thm]{Corollary}
\theoremstyle{definition}
\newtheorem{dfn}[thm]{Definition}
\theoremstyle{remark}
\newtheorem{rem}[thm]{Remark}
\algnewcommand{\A}{\textbf{and}\space}
\algnewcommand{\Or}{\textbf{or}\space}
\algnewcommand{\Xor}{\textbf{xor}\space}
\let\OldStatex\Statex
\renewcommand{\Statex}[1][3]{%
  \setlength\@tempdima{\algorithmicindent}%
  \OldStatex\hskip\dimexpr#1\@tempdima\relax}
\author{Vadym Kliuchnikov$^{1}$\email{vadym@microsoft.com}}
\author{Jon Yard$^{1}$\email{jonyard@microsoft.com}}
\address{$^1$ Quantum Architectures and Computation Group, Microsoft Research, Redmond, WA, USA}
\def\z{\mathbb{Z}}
\def\q{\mathbb{Q}}
\def\r{\mathbb{R}}
\def\floor#1{\mathopen{}\left\lfloor #1\right\rfloor\mathclose{}}
\def\at#1{\mathopen{}\left(#1\right)\mathclose{}}
\def\of#1{\mathopen{}\left[#1\right]\mathclose{}}
\def\set#1{\mathopen{}\left\{  #1\right\}\mathclose{}}
\def\tpl#1{\mathopen{}\left\langle #1\right\rangle\mathclose{}}
\def\ket#1{\mathopen{}\left|#1\right\rangle\mathclose{}}
\def\bra#1{\mathopen{}\left\langle #1\right|\mathclose{}}
\def\UG#1#2#3#4{{\at{\begin{array}{cc} #1 & #2 \\ #3 & #4 \end{array} }}}
\DeclareMathOperator{\nrd}{Nrd}
\DeclareMathOperator{\ord}{ord}
\def\leg#1#2{{#1 \overwithdelims () #2}}
\newcommand{\nc}{\newcommand}
\DeclareMathOperator{\Tr}{Tr}
\DeclareMathOperator{\Trd}{Trd}
\DeclareMathOperator{\Nm}{N}
\DeclareMathOperator{\disc}{disc}
\DeclareMathOperator{\rad}{rad}
\let\O\@undefined
\def\O{\mathrm{O}}
\def\SU{\mathrm{SU}}
\def\leg#1#2{{#1 \overwithdelims () #2}}
\def\ox{\otimes}
\nc{\CA}{\mathcal{A}} \nc{\CB}{\mathcal{B}} \nc{\CC}{\mathcal{C}}
\nc{\CD}{\mathcal{D}} \nc{\CE}{\mathcal{E}} \nc{\CF}{\mathcal{F}}
\nc{\CG}{\mathcal{G}} \nc{\CH}{\mathcal{H}} \nc{\CI}{\mathcal{I}}
\nc{\CJ}{\mathcal{J}} \nc{\CK}{\mathcal{K}} \nc{\CL}{\mathcal{L}}
\nc{\CM}{\mathcal{M}} \nc{\CN}{\mathcal{N}} \nc{\CO}{\mathcal{O}}
\nc{\CP}{\mathcal{P}} \nc{\CQ}{\mathcal{Q}} \nc{\CR}{\mathcal{R}} 
\nc{\CS}{\mathcal{S}} \nc{\CT}{\mathcal{T}} \nc{\CU}{\mathcal{U}} 
\nc{\CV}{\mathcal{V}} \nc{\CW}{\mathcal{W}} \nc{\CX}{\mathcal{X}} 
\nc{\CY}{\mathcal{Y}} \nc{\CZ}{\mathcal{Z}}
\nc{\bA}{\mathbb{A}} \nc{\bB}{\mathbb{B}} \nc{\bC}{\mathbb{C}}
\nc{\bD}{\mathbb{D}} \nc{\bE}{\mathbb{E}} \nc{\bF}{\mathbb{F}}
\nc{\bG}{\mathbb{G}} \nc{\bH}{\mathbb{H}} \nc{\bI}{\mathbb{I}}
\nc{\bJ}{\mathbb{J}} \nc{\bK}{\mathbb{K}} \nc{\bL}{\mathbb{L}}
\nc{\bM}{\mathbb{M}} \nc{\bN}{\mathbb{N}} \nc{\bO}{\mathbb{O}}
\nc{\bP}{\mathbb{P}} \nc{\bQ}{\mathbb{Q}} \nc{\bR}{\mathbb{R}} 
\nc{\bS}{\mathbb{S}} \nc{\bT}{\mathbb{T}} \nc{\bU}{\mathbb{U}} 
\nc{\bV}{\mathbb{V}} \nc{\bW}{\mathbb{W}} \nc{\bX}{\mathbb{X}}
\nc{\bZ}{\mathbb{Z}}
\nc{\BA}{\mathbf{A}} \nc{\BB}{\mathbf{B}} \nc{\BC}{\mathbf{C}}
\nc{\BD}{\mathbf{D}} \nc{\BE}{\mathbf{E}} \nc{\BF}{\mathbf{F}}
\nc{\BG}{\mathbf{G}} \nc{\BH}{\mathbf{H}} \nc{\BI}{\mathbf{I}}
\nc{\BJ}{\mathbf{J}} \nc{\BK}{\mathbf{K}} \nc{\BL}{\mathbf{L}}
\nc{\BM}{\mathbf{M}} \nc{\BN}{\mathbf{N}} \nc{\BO}{\mathbf{O}}
\nc{\BP}{\mathbf{P}} \nc{\BQ}{\mathbf{Q}} \nc{\BR}{\mathbf{R}} 
\nc{\BS}{\mathbf{S}} \nc{\BT}{\mathbf{T}} \nc{\BU}{\mathbf{U}} 
\nc{\BV}{\mathbf{V}} \nc{\BW}{\mathbf{W}} \nc{\BX}{\mathbf{X}} 
\nc{\BY}{\mathbf{Y}} \nc{\BZ}{\mathbf{Z}}
\nc{\msA}{\mathscr{A}} \nc{\msB}{\mathscr{B}} \nc{\msC}{\mathscr{C}}
\nc{\msD}{\mathscr{D}} \nc{\msE}{\mathscr{E}} \nc{\msF}{\mathscr{F}}
\nc{\msG}{\mathscr{G}} \nc{\msH}{\mathscr{H}} \nc{\msI}{\mathscr{I}}
\nc{\msJ}{\mathscr{J}} \nc{\msK}{\mathscr{K}} \nc{\msL}{\mathscr{L}}
\nc{\msM}{\mathscr{M}} \nc{\msN}{\mathscr{N}} \nc{\msO}{\mathscr{O}}
\nc{\msP}{\mathscr{P}} \nc{\msQ}{\mathscr{Q}} \nc{\msR}{\mathscr{R}} 
\nc{\msS}{\mathscr{S}} \nc{\msT}{\mathscr{T}} \nc{\msU}{\mathscr{U}} 
\nc{\msV}{\mathscr{V}} \nc{\msX}{\mathscr{X}} \nc{\msW}{\mathscr{W}} 
\nc{\msY}{\mathscr{Y}} \nc{\msZ}{\mathscr{Z}}
\nc{\mfa}{{\mathfrak a}} \nc{\mfb}{{\mathfrak b}} \nc{\mfc}{{\mathfrak c}}
\nc{\mfd}{{\mathfrak d}} \nc{\mfe}{{\mathfrak e}} \nc{\mff}{{\mathfrak f}}
\nc{\mfg}{{\mathfrak g}} \nc{\mfh}{{\mathfrak h}} \nc{\mfi}{{\mathfrak i}}
\nc{\mfj}{{\mathfrak j}} \nc{\mfk}{{\mathfrak k}} \nc{\mfl}{{\mathfrak l}}
\nc{\mfm}{{\mathfrak m}} \nc{\mfn}{{\mathfrak n}} \nc{\mfo}{{\mathfrak o}}
\nc{\mfp}{{\mathfrak p}} \nc{\mfq}{{\mathfrak q}} \nc{\mfr}{{\mathfrak r}}
\nc{\mfs}{{\mathfrak s}} \nc{\mft}{{\mathfrak t}} \nc{\mfu}{{\mathfrak u}}
\nc{\mfv}{{\mathfrak v}} \nc{\mfw}{{\mathfrak w}} \nc{\mfx}{{\mathfrak x}}
\nc{\mfy}{{\mathfrak y}} \nc{\mfz}{{\mathfrak z}}
\nc{\mfA}{{\mathfrak A}} \nc{\mfB}{{\mathfrak B}} \nc{\mfC}{{\mathfrak C}}
\nc{\mfD}{{\mathfrak D}} \nc{\mfE}{{\mathfrak E}} \nc{\mfF}{{\mathfrak F}}
\nc{\mfG}{{\mathfrak G}} \nc{\mfH}{{\mathfrak H}} \nc{\mfI}{{\mathfrak I}}
\nc{\mfJ}{{\mathfrak J}} \nc{\mfK}{{\mathfrak K}} \nc{\mfL}{{\mathfrak L}}
\nc{\mfM}{{\mathfrak M}} \nc{\mfN}{{\mathfrak N}} \nc{\mfO}{{\mathfrak O}}
\nc{\mfP}{{\mathfrak P}} \nc{\mfQ}{{\mathfrak Q}} \nc{\mfR}{{\mathfrak R}}
\nc{\mfS}{{\mathfrak S}} \nc{\mfT}{{\mathfrak T}} \nc{\mfU}{{\mathfrak U}}
\nc{\mfV}{{\mathfrak V}} \nc{\mfW}{{\mathfrak W}} \nc{\mfX}{{\mathfrak X}}
\nc{\mfY}{{\mathfrak Y}} \nc{\mfZ}{{\mathfrak Z}}
\def\emph#1{{\bf #1}}
\nc{\proj}[1]{\ket{#1}\bra{#1}}
\nc{\braket}[2]{{\langle #1 | #2 \rangle}}
\def\Ann{\mathrm{Ann}}
\def\al{\alpha}
\def\ph{\varphi}
\def\dist{\mathrm{dist}} 
\def\Adj{\mathrm{Adj}} 
\def\nf{F} 
\def\zf{R_F} 
\def\qa{Q}
\def\mo{\mathcal{M}} 
\def\mos{\mathcal{M}_S} 
\def\ids{S} 
\def\tsm{T_{S}\at{\mo}} 
\def\gs{G_{S}\at{\mo}} 
\def\p{\mathfrak{p}} 
\def\P{P} 
\def\a{\mathfrak{a}} 
\def\Adj{\mathrm{Adj}}
\def\gens{\mathrm{gen}_S\at{\mo}} 
\def\ugens{\mathrm{gen}_{u}\at{\mo}} 
\def\gext{\mathrm{ext}_S\at{\mo}} 
\def\gone{\mathrm{ext}^{1}_S\at{\mo}} 
\def\gtwo{\mathrm{ext}^{2}_S\at{\mo}} 
\def\gtwoext{\mathrm{ext}^{2}_{S_{ext}}\at{\mo}} 
\def\ordr#1{O_r\at{#1}} 
\def\ordl#1{O_\ell\at{#1}} 
\def\cmi#1{\mu_{S}\at{#1}} 
\def\pone{\mathbb{P}^{1}} 
\def\i{\mathbf{i}}
\def\j{\mathbf{j}}
\def\k{\mathbf{k}}
\def\ve{\varepsilon} 
\begin{document}
\bibliographystyle{plainurl}

\title{A framework for exact synthesis}

\begin{abstract} Exact synthesis is a tool used in algorithms for approximating an arbitrary qubit unitary with a sequence of quantum gates from some finite set. These approximation algorithms find asymptotically optimal approximations in probabilistic polynomial time, in some cases even finding the optimal solution in probabilistic polynomial time given access to an oracle for factoring integers. In this paper, we present a common mathematical structure underlying all results related to the exact synthesis of qubit unitaries known to date, including Clifford+T, Clifford-cyclotomic and V-basis gate sets, as well as gates sets induced by the braiding of Fibonacci anyons in topological quantum computing.   The framework presented here also provides a means to answer questions related to the exact synthesis of unitaries for wide classes of other gate sets, such as Clifford+T+V and $\SU(2)_k$ anyons. 
\end{abstract}

\maketitle


\section{Introduction}

The exact synthesis of unitaries is an important tool for compiling quantum algorithms into a sequence of individual quantum gates. Compiling a quantum algorithm for a given quantum computer architecture includes expressing or rewriting the algorithm using the elementary operations supported by target quantum computer architecture. A typical quantum computer architecture has a fault tolerance layer implemented in it. Fault tolerance can be achieved on a logical level, by using fault-tolerant protocols based on quantum error correcting codes, on a physical level, by computing with topologically-protected degrees of freedom (topological quantum computers), or by using some combination of both. The set of elementary operations supported on a quantum computer typically includes a finite set of single-qubit unitaries, a finite set of two-qubit unitaries, a finite set of single-qubit states that can be prepared, and a finite set of measurements that can be performed.  We refer to the set of single-qubit unitaries as the single-qubit gate set supported by a given architecture. Quantum algorithms are usually designed and expressed in terms of multi-qubit unitaries that must be expressed using the available one- and two-qubit gates.
Below, we give a overview of methods for compiling single-qubit unitaries and highlight the role of exact synthesis in this question. For more detailed overview, the reader may consult e.g.\ Chapter 2 in \cite{TH}. 

One of the basic tasks in quantum compiling is to implement an arbitrary single qubit unitary on the target quantum computer architecture. One of the most common types of single qubit unitaries appearing in applications is $R_z\at{\phi}=e^{i\phi Z/2}$ where $Z$ is a Pauli Z matrix. However, a computer with finitely-many single-qubit operations can never exactly implement every conceivable angle $\phi$.  

For most $\phi$ that appear in applications, the corresponding unitary $R_z\at{\phi}$ can not be expressed exactly using the finitely-many available operations. Therefore, one must be content to approximate $R_z\at{\phi}$ to within some precision $\ve$. There are two common approaches to this problem. One approach involves using ancillary qubits, special states and measurements \cite{WK,DCS,FSt,J,bk:ksv,KMM2,RUS2,RUS1}. The other approach uses only the single-qubit operations available to a given target architecture  for approximating $R_z\at{\phi}$ \cite{KMM3,KBS,S,BGS,RS,BBG,DN,RossV}. Here, we mainly focus on the second class. Frequently, it is possible to extend the original single-qubit gate set available on a given architecture using multi-qubit circuits with measurements and classical feedback that effectively implements a single-qubit unitary \cite{WK,Paetznick2013,RUS1,RUS2,BKZ}. For example, the V-basis unitaries can be implemented using such circuits on any architecture that support single-qubit Clifford+T unitaries and measurements in the computational basis. The Exact synthesis of single-qubit unitaries is an important part of recently-developed algorithms for finding approximations to single-qubit unitaries using single-qubit unitaries from the Clifford+T \cite{S,RS}, V-basis \cite{BGS,BBG,RossV} and Fibonacci anyon gate sets \cite{KBS}. A distinct feature of these algorithms is that they find asymptotically-optimal \footnote{By asymptotically optimal, we mean that the length of the circuit found by the algorithm is less then $C\cdot l_{opt} + C'$, where $l_{opt}$ is the length of the shortest circuit that approximates the target unitary with precision $\ve$. The constants $C,C'$ do not depend on the target unitary or on $\ve$. } approximations in probabilistic polynomial time \cite{S,KBS,BGS}, or optimal approximations in polynomial time, given access to an oracle for factoring integers \cite{RS,BBG,RossV}. All these algorithms exploit the number-theoretic structure of the problem. For brevity we will refer to them as Number Theoretic Unitary Approximation Algorithms (NTUAAs). Next we briefly outline the high-level idea behind NTUAAs and explain the precise role played by exact synthesis in their implementations.

Each NTUAA is designed for a specific target gate set (e.g.\ single-qubit Clifford+T, V-basis, Fibonacci) and for a specific distance measure $\rho$ defined on single quit unitaries (frequently the operator norm). Given a target unitary $U'$, the algorithm outputs a circuit $U = U_n U_{n-1} \cdots U_1$ of length $n = O\at{\log\at{1/\ve}}$ achieving $\dist(U,U') \leq \ve$, with each $U_i$ in the target gate set.  When we say that such an algorithm runs in probabilistic polynomial time or in polynomial time, we mean that the running time is bounded by a polynomial in $\log\at{1/\ve}$. The first stage of any NTUAAs to find a unitary $U$ that (a) can be represented exactly using the target gate set (b) is within distance $\ve$ from $U'$ and (c) can be implemented using at most $O\at{\log\at{1/\ve}}$ gates from the target gate set. The second stage is finding a circuit over the target gate set that implements $U$. This step is performed by a polynomial-time (in the number of bits needed to represent $U'$)  exact synthesis algorithm. There are two key aspects to exact synthesis: (1) a simple description of all unitaries that can be implemented over the target gate set using the number of gates bounded by some constant (2) a polynomial-time algorithm for compiling a unitary that is known to be representable over the gate set. Previous works gave answers to this question for single qubit Clifford+T gates, V-basis, Fibonacci gate set and Clifford+$R_z(\pi/n)$ for $n=6,8,12,16$. In this paper we develop a framework for answering questions (1) and (2) and several related ones for a wide range of gate sets using the theory of maximal orders of central simple algebras and quaternion algebras. Our goal here is to set up the foundation for generalizing the ideas behind NTUAAs to a wide range of gate sets. The application of our framework to approximating unitaries will be treated elsewhere~\cite{InPrep}. 

In the next section we briefly review the necessary definitions and notation. In \sec{factor} we state the questions relevant to exact synthesis for qubits in the language of quaternion algebras. In \sec{app} we show how to use tools from \sec{factor} to answer questions related to exact synthesis and discuss how previously known results can be derived using the formalism developed in this paper. 

\section{Preliminaries and notation}



Fix an integral domain $R$ with fraction field $K$ and let $A$ be a finite-dimensional $K$-algebra $A$.  An \emph{ideal} $I$ of $A$ is a torsion-free $R$-submodule of $A$ such that $I\ox_R K = A$, i.e.\ an $R$-sublattice of $A$. 
An ideal of $A$ that is also a ring (hence an $R$-algebra) is called an \emph{order} of $A$. Equivalently, an order is an ideal in which each element is $R$-integral, i.e.\ is a zero of a monic polynomial with coefficients from $R$. The set of orders is partially ordered by inclusion such that every order is contained in some maximal order.   If $A$ is commutative then contains a unique maximal order, equal to the integral closure of $R$ in $A$.  The maximal $\bZ$-order in a number field $F$ is its \emph{ring of integers}, denoted $R_F$ throughout.  If $A$ is noncommutative, the integral closure of $R$ in $A$ is not generally a ring, but is rather equal to the union of all maximal orders in $A$.  Maximal orders may be viewed as integral closures that are constrained to be rings.   
Given an ideal $I$ of $A$, the sets $\ordl{I} = \{a \in A : aI \subset I\}$ and $\ordr{I} = \{a \in A : I a \subset I\}$ are orders, respectively called the \emph{left order} and the \emph{right order} of $I$.  When $\ordl{I} = \CO_\ell$ and $\ordr{I} = \CO_r$, we may also call $I$ a  \emph{left $\CO_\ell$-ideal}, a \emph{right $\CO_r$-ideal} or a \emph{$\CO_\ell$-$\CO_r$-ideal}, and when $\CO_\ell = \CO_r = \CO$, we call $I$ a two-sided $\CO$-ideal.  Then $\CO_\ell$ is maximal iff $\CO_r$ is maximal, in which case we follow \cite{R} and call $I$ a \emph{normal ideal} of $A$.  Normal ideals are invertible on the left and on the right.  A normal ideal $I$ is contained in its left order iff it is contained in its right order, in which case we follow \cite{R} and call $I$ an \emph{integral ideal} of $A$.  
Each integral ideal $I$ of $A$ is simultaneously a left ideal of $\CO_\ell$ and a right ideal of $\CO_r$.
An integral ideal $I$ is a maximal left ideal of $\CO_\ell$ iff it is a maximal right ideal of $\CO_r$, in which case we call $I$ a \emph{maximal ideal}  (note that \cite{R} calls such ideals ``maximal integral ideals''). 
We call a normal ideal $I$ \emph{primitive} if it is not contained in any proper two-sided ideal (equivalently, prime ideal) of $\CO_\ell$ (equivalently, of $\CO_r$).   
The product $I_1 \cdots I_n$ of ideals of $A$ is \emph{proper} if $\ordl{I_j} = \ordr{I_{j+1}}$ for $j = 1,\dotsc,n-1$.


If $\CR$ is a ring, we write $\CR^\times$ for the group of units in $\CR$.  A \emph{prime ideal} of $\CR$ is a proper two-sided ideal $P$ of $\CR$ such that $IJ \subset P$ implies that either $I \subset P$ or $J\subset P$, where $I \subset \CR$ and $J \subset \CR$ are ideals (equivalently right ideals, or equivalently left ideals).  Equivalently, a two-sided ideal $P \subset \CR$ is prime if $(a)(b) \subset P$ implies that either $a \in P$ or $b\in P$, where $(a):= \CR a \CR$.  In a general ring $\CR$, every maximal ideal is prime but we will see that  for orders over Dedekind domains, every prime ideal is also maximal.

If $R$ is a Dedekind domain and $\CO$ is an $R$-order in a separable $K$-algebra, then a two-sided ideal of $\CO$ is prime iff it is maximal.  This is because maximal two-sided ideals are always prime, whereas the reverse direction follows from Theorem 22.3 of \cite{R}).  In this case, each prime ideal $\P$ of $\CO$ determines a unique prime ideal $\p$ of $R$ via 
\begin{equation}
\P \mapsto \p = \P\cap R.
\end{equation}

If $R$ is a Dedekind domain and $\CM$ is a maximal $R$-order in a central simple $K$-algebra, there is a 1-1 correspondence between prime ideals of $R$ and of $\CM$, with the map going the other way given by (Theorem 22.4 of \cite{R})
\begin{equation}
\p \mapsto \P = \CM \cap \rad(\CM_\mfp),    
\end{equation}
where $\CM_\mfp = (R - \mfp)^{-1} \CM$ is the localization of $\CM_\mfp$ at $\mfp$ and where  $\rad$ denotes the Jacobson radical (intersection of all maximal left ideals); in this case it is the unique maximal ideal
 $\rad(\CM_\mfp) = P \CM_\mfp = \CM_\mfp P$.  Theorem 22.10 of \cite{R} shows that the two-sided $\CM$-ideals form an abelian group whose elements factor uniquely into products of prime ideals of $\CM$ and their inverses.  

The structure of left ideals (or equivalently, right ideals) of a maximal order $\CM$ over a Dedekind domain $R$ in a central simple $K$-algebra is as follows.  
By Theorem 22.15 of \cite{R}, each maximal left ideal $M$ of $\CM$ contains exactly one prime ideal $P = \Ann_\CO(\CO/M) = \{x \in \CO : x\CO \subset M\}$ of $\CM$, corresponding to the prime ideal $\mfp = \Ann_R(\CO/M) = \{x \in R : x\CO \subset M\}$.

By a \emph{quaternion algebra} over a field $K$, we mean a central simple $K$-algebra of dimension 4.  Any quaternion algebra has an explicit description, in terms of some $a,b \in K^\times$, as the $K$-algebra 
\[\leg{a,b}{K}  = \{x_0 + x_1 \i + x_2 \j + x_3 \k : x_i \in F, \i^2 = a, \j^2 = b, \i \j = \k\}.\]
We denote the \emph{Hamilton quaternions} as $\bH = \leg{-1,-1}{\bR}$.
 The map $q \mapsto \bar q$ taking a quaternion $q= q_0 + q_1 \i + q_2 \j + q_3 \k$ to its \emph{conjugate} $\bar q = q_0 - q_1 \i - q_2 \j - q_3 \k$ is $K$-linear involution (i.e.\ an order-2 automorphism such that $\bar{q r} = \bar{r} \bar{q}$) and can be used to express its \emph{reduced norm} as $\nrd(q) = q\bar q = q_0^2 -a q_1^2 -b  q_2^2 + ab q_3^2$  and \emph{reduced trace} $\Trd(q) = q + \bar q = 2q_0$.
By the \emph{norm} of an ideal $I$ of $Q$, we mean  the fractional ideal of $K$ generated by the reduced norms of the elements of $I$.  By the \emph{discriminant} of an order $\CO$ in a quaternion algebra, we mean the square root of the ideal generated by the determinants of all possible $4\times 4$ matrices $(\Tr x_i x_j)_{ij}$ where $x_1,\dotsc,x_4 \in \CO$.  The \emph{discriminant} $\disc(Q)$ of a quaternion algebra $Q$ is then defined to be the discriminant of any maximal $R_F$-order in $Q$ (these are sometimes called ``reduced discriminants'' in the literature).  It is well-known that the discriminant of any quaternion algebra factors into a square-free product of finitely many prime ideals of $\zf$.  
 A quaternion algebra \emph{splits} at a place $v$ of $K$ if $Q_v = Q \ox_K K_v \simeq K_v^{2\times 2}$ and otherwise it is \emph{ramified} at $v$.  If $v$ is a finite place, then $Q$ splits at $v$ iff the corresponding prime ideal $\mfp_v$ of $K$ divides the discriminant of $Q$.   If $v$ is an infinite place of $K$, then $Q$ is ramified at $v$ if $Q_v \simeq \bH$ and it is split at $v$ if $Q_v \simeq \bR^{2\times 2}$.  If every real place of $K$ is ramified in $Q$, then $Q$ is called \emph{totally definite} and if it is not totally definite, we call it \emph{indefinite}.
Recall that each number field $K$ satisfies 
\[K \ox_\bQ \bR \simeq \prod_v K_v = \bR^r \times \bC^c,\]
where the product is over the infinite places $v$ of $K$, $r$ is the number of real places, each corresponding to an embeddings $K \hookrightarrow \bR$, $c$ is the number of complex places, each corresponding to a pair of distinct embeddings $K \hookrightarrow \bC$ related by complex conjugation, and where $[K:\bQ] = r +2c$.   Similarly, 
\[Q\ox_\bQ \bR \simeq \prod_v Q_v = \bH^{r_0} \times (\bR^{2\times 2})^{r_1} \times  (\bC^{2\times 2})^{c},\]
 where $r_0 + r_1= r$, so $Q$ is totally definite iff it cannot be embedded into $\bR^{2\times 2}$.

\section{Factorization of quaternions} \label{sec:factor}

Let $\nf$ be a number field with ring of integers $\zf$, let $\qa$ be a quaternion algebra over $\nf$, let $\mo\subset \qa$ be a maximal $\zf$-order and let $\ids$ be a finite set of prime ideals of $\zf$. Let $\mos$ be all elements of $\mo$ such that their $\nrd$ factors into elements of $S$. In this section we are discussing the following questions: 
\begin{enumerate}
	\item Is there a finite set $\CG$ of quaternions such that any element of $\mos$ can be written as $q_1 \ldots q_n$ for $q_k \in \CG$? 
	\item How to find $\CG$ if (1) is true ?
	\item Can the set $\CG$ in (1) be chosen to be a subset of $\mos$? 
	\item How to find $\CG \subset \mos $ if (3) is true? 
\end{enumerate} 

We provide answers to questions (1) and (2) for a wide class of quaternion algebras in \sec{factor-1}. In \sec{graph}, we describe an algorithm for deciding if (3) is true and for constructing $\CG \subset \mos $. We also discuss how to estimate the size of  $\CG \subset \mos $ algorithmically without computing it. The mathematical tools for answering the above questions were developed in \cite{R,KV,V} and many other works. Our main contribution is the explanation of how to apply them to the stated questions. 

\subsection{   Factorization of elements of $\mos$ into a finite set of quaternions }  \label{sec:factor-1}

We keep the notation described in the beginning of \sec{factor}. The main result of this section is the following: 

\begin{thm} \label{thm:factor-1} Any element of $\mos$ can be written as a product $q_1 \ldots q_n u \alpha$ where quaternions $q_1,\ldots,q_n$ are from a finite set $\gext$, $u \in \CM^\times$ and $\alpha \in F^\times$. If one of the following conditions holds, then $u$ can be written as the product of a finite number elements of $\ugens$: 
\begin{enumerate}
	\item $F$ is a totally real number field, and $\qa$ is a totally definite quaternion algebra. 
	\item $F$ is a totally real number field, and $\qa$ is split in exactly one real place.  
\end{enumerate}
There are algorithms for finding $\gext$, $\ugens$ and factorization $q_1 \ldots q_n u \alpha$. 
\end{thm}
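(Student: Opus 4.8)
The plan is to translate the factorization of an element of $\mos$ into a factorization of the integral ideal it generates, compress the resulting ideal factors into a finite family using finiteness of the class and type numbers of $\mo$, and treat the unit factor separately. Concretely, given a nonzero $x\in\mos$, form the left ideal $\mo x\subseteq\mo$. Because $x\in\mo$ one has $\ordl{\mo x}=\mo$ and $\ordr{\mo x}=x^{-1}\mo x$, so $\mo x$ is an integral $\mo$-$(x^{-1}\mo x)$-ideal whose reduced norm is the fractional ideal $\nrd(x)\zf$, which by hypothesis is a product of primes of $S$. Using the factorization theory of ideals of maximal orders over Dedekind domains (the results around Theorems~22.10 and 22.15 of \cite{R}; see also \cite{V}), one writes $\mo x=I_1 I_2\cdots I_n$ as a proper product of maximal ideals, each with $\nrd(I_j)=\p_j\in S$; the number $n$ of factors equals the number of prime factors of $\nrd(x)$, counted with multiplicity.

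Next I would compress the factors $I_j$ into a finite list. Their left orders $\ordl{I_j}$ are maximal orders of $\qa$, of which there are only finitely many up to conjugacy (the type number of $\qa$ is finite; \cite{R},\cite{V}); fix representatives $\CN_1,\dots,\CN_t$ and the finitely many ideal classes connecting $\mo$ to each of them. For each $\CN_i$ and each $\p\in S$ there are only finitely many maximal left $\CN_i$-ideals of reduced norm $\p$ (namely $1$ if $\qa$ ramifies at $\p$, and $\Nm(\p)+1$ otherwise). For each such standard ideal $J$ fix, once and for all, a quaternion in $\mo$ generating a chosen representative of its ideal class --- this failure of principality is what forces the factor $\alpha\in F^\times$ --- together with a quaternion conjugating $\ordr{J}$ back into the list $\CN_1,\dots,\CN_t$. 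Let $\gext\subset\mo$ be the finite set of all these quaternions, enlarged by the finitely many generators of two-sided $\mo$-ideals of norm supported on $S$. Rewriting each $I_j$ in terms of the standard ideals and telescoping the corresponding generators along the chain reconstructs $\mo x$ as $\mo q_1 q_2\cdots q_n$ up to a left factor lying in the normalizer of $\mo$ in $\qa^\times$; since that normalizer is generated by $\CM^\times$, $F^\times$ and the two-sided-ideal generators already absorbed into $\gext$, one obtains $x=q_1 q_2\cdots q_n\, u\, \alpha$ with all $q_j\in\gext$, $u\in\CM^\times$ and $\alpha\in F^\times$.

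For the unit factor $u$: in general $\CM^\times$ is a finitely generated group (being an arithmetic group), but to produce an explicit, computable finite generating set $\ugens$ I would use the two hypotheses. Under (1), with $\qa$ totally definite, $\nrd$ restricts to a positive-definite quadratic form on the $\zf$-lattice $\mo$, so $\CM^1$ is finite and is found by short-vector enumeration, while $\nrd\colon\CM^\times\to\zf^\times$ has kernel $\CM^1$ and image of finite index in the unit group produced by Dirichlet's theorem --- this is the mechanism underlying the algorithms for definite quaternion orders of \cite{KV}. Under (2), with $F$ totally real and $\qa$ split at exactly one real place, $\CM^1$ is an arithmetic Fuchsian group, for which a fundamental domain in the upper half-plane, hence a finite generating set, is computable (\cite{V}); combined with generators of $\zf^\times$ this gives $\ugens$, and $u$ is then a word in $\ugens$. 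Every step is effective: ideal factorization reduces to $\zf$-linear algebra once $\nrd(x)$ is factored into primes, and type representatives, ideal-class representatives and maximal ideals of prescribed norm are all computable, so $\gext$, $\ugens$ and the word $q_1\cdots q_n u\alpha$ can be produced algorithmically.

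I expect the main obstacle to be the compression step: the chain $I_1\cdots I_n$ may pass through infinitely many distinct maximal orders, and one has to verify that replacing each $I_j$ by a standard ideal --- with all the attendant conjugating elements and ideal-class representatives --- genuinely collapses to a finite $\gext$ with the residual factor confined to $\CM^\times F^\times$. This is precisely where finiteness of the class and type numbers of $\qa$ and the normalizer structure of $\mo$ are indispensable. A secondary subtlety, and the reason $\gext$ is asserted only to lie in $\mo$ rather than in $\mos$, is that the chosen generators can have reduced norm not supported on $S$; arranging $\gext\subset\mos$ is the question addressed in \sec{graph}.
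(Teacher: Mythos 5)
Your proposal follows essentially the same route as the paper: factor the principal ideal generated by $q$ into a proper product of maximal ideals, exploit finiteness of the type number to encode the resulting walk along left/right orders using a fixed finite set of conjugators, reduce to a residual in the normalizer of $\mo$, and handle that residual via principal two-sided ideals and the unit group. This is precisely what the paper does through \theo{exact-one}, \propos{two-sided-dec}, and the unit-group discussion of \sec{unit-groups}, with your normalizer observation encoding the same fact used there (namely that $N_{Q^\times}(\mo)/(F^\times\CM^\times)$ is finite and controlled by principal two-sided $\mo$-ideals).

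Two small corrections. First, the scalar $\alpha\in F^\times$ does not arise from failure of principality of the standard maximal ideals: a non-principal ideal produces a class-group discrepancy, which the paper tracks by keeping conjugacy classes of maximal orders explicit in the adjacency structure, not by a scalar. Rather, $\alpha$ absorbs the central part $\mfa\CM$, $\mfa\subset\zf$, of a two-sided ideal --- see the opening line of the proof of \propos{two-sided-dec}, where $q=q_0\alpha$ is chosen to make $q_0$ integral with $S$-supported norm. Second, the conjugators and class-representative generators you fix need not have reduced norm supported on $S$, so the normalizer residual can only be assumed to have norm supported on the enlarged set $S_{\mathrm{ext}}$ of \eq{s-ext}; the paper accordingly decomposes the two-sided residual relative to $S_{\mathrm{ext}}$ and sets $\gext=\gone\cup\gtwoext$, whereas your compression step implicitly keeps $S$ throughout. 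You gesture at this issue in your final paragraph, but the residual-factor argument as written does not track it.
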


In other words, when the conditions $(1)$ or $(2)$ mentioned in the theorem above hold, any element of $\mos$ can be written as a finite product of elements of $\gext \cup \ugens$. We postpone the proof of the above theorem until the end of this section. We will first describe set $\gext$ and provide an algorithm for computing it.  Next we discuss how to compute the set $\ugens$ and finally provide the algorithm for finding the factorization described in the theorem. The theorem will immediately follow from the proof of the correctness of the algorithm.

Instead of studying the factorization of an element $q$ of $\mos$ we will study the factorization of the right ideal $q\mo$ into maximal right ideals of $\CM$. The following result is crucial for this: 

\begin{thm}[Special case of Theorem 22.18 in \cite{R}]\label{thm:factor} Let $\mo$ be a maximal $\zf$-order in a quaternion algebra $\qa$ and let $I$ be a left ideal of $\CM$ such that the left $\CM$-module $\mo/I$ has composition length $n$. Then $I$ is expressible as a proper product $I_1 \cdots I_n$ of maximal integral ideals such that $\ordl{I}=\ordl{I_1}$ and  $\ordr{I}=\ordl{I_n}$. 
\end{thm}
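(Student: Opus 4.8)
The result is the specialization of Theorem 22.18 of \cite{R} to a maximal order in a quaternion algebra, and the plan is to reprove it directly by induction on the composition length $n$. The first step is a reduction. Since $\mo/I$ has finite composition length it is an Artinian left $\mo$-module, and since it is a quotient $\mo/I$ with $I\subseteq\mo$ the ideal $I$ is integral; as $\mo$ is a maximal $\zf$-order in the central simple algebra $\qa$, the ideal $I$ is normal, so $\ordr{I}$ is again a maximal $\zf$-order in $\qa$. (When $n=0$ we have $I=\mo$ and the empty product suffices, so assume $n\ge 1$.) It is essential that the inductive hypothesis be stated for an arbitrary maximal $\zf$-order of $\qa$, not only for $\mo$, since the orders produced at intermediate stages need not coincide with $\mo$ --- a quaternion algebra over a number field may carry several types of maximal order.

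For the inductive step, $\mo/I$ is a nonzero module of finite length, so it has a maximal submodule, necessarily of the form $M/I$ for a maximal left ideal $M$ of $\mo$ with $I\subseteq M\subsetneq\mo$; by the structure of maximal left ideals recalled in the preliminaries (Theorem 22.15 of \cite{R}) the ideal $M$ is a maximal integral ideal and $\ordl{M}=\mo$. Put $I_1:=M$. Being normal, $M$ is invertible, with $M^{-1}M=\ordr{M}$ and $MM^{-1}=\mo$; since $I\subseteq M$ one has $M^{-1}I\subseteq\ordr{M}$, so $M^{-1}I$ is an integral left ideal of the maximal order $\mo':=\ordr{M}$, with left order $\ordr{M}$ and right order $\ordr{I}$. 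Because $M$ is an invertible $(\mo,\mo')$-bimodule, the functor $M\otimes_{\mo'}(-)$ is an equivalence from left $\mo'$-modules to left $\mo$-modules; right-exactness identifies its value on $\mo'/M^{-1}I$ with $M/(MM^{-1}I)$, which equals $M/I$, and the exact sequence $0\to M/I\to\mo/I\to\mo/M\to 0$ together with the simplicity of $\mo/M$ shows that $M/I$ has length $n-1$. As an equivalence preserves composition length, $\mo'/M^{-1}I$ has length $n-1$ over $\mo'$, so the inductive hypothesis applied to $\mo'$ and $M^{-1}I$ produces $M^{-1}I=I_2\cdots I_n$, a proper product of maximal integral ideals whose left order is that of $I_2$ and whose right order is that of $I_n$. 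Then $I=M\cdot(M^{-1}I)=I_1I_2\cdots I_n$; the new seam fits together as a proper product requires, because $\ordr{I_1}=\ordr{M}=\ordl{M^{-1}I}=\ordl{I_2}$, while $\ordl{I}=\mo=\ordl{I_1}$ (as $\mo$ is maximal and $\mo I\subseteq I$) and the remaining endpoint order of $I$ is the one it inherits from $M^{-1}I$, which the hypothesis controls. The base case $n=1$ is immediate, since then $I$ is itself a maximal left ideal of $\mo$, hence a maximal integral ideal.

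Every ingredient --- existence of a maximal submodule inside a finite-length module, invertibility of normal ideals together with $M^{-1}M=\ordr{M}$ and $MM^{-1}=\mo$, the fact that a normal ideal's left order is maximal precisely when its right order is, and the Morita equivalence furnished by an invertible bimodule --- is either standard or recalled in the preliminaries, so the one genuinely delicate point is the bookkeeping of left and right orders: one must peel maximal integral ideals off a single fixed side and check at each seam that the product remains proper and that both endpoint orders emerge as stated. I would organize this entirely through the bimodule arithmetic above, tracking the left and right orders of each factor while descending the composition series. No hypothesis of \theo{factor-1} is needed here: the only inputs are that $\zf$ is a Dedekind domain and $\qa$ a quaternion (central simple) algebra over its fraction field, so the ideal theory summarized in the preliminaries applies verbatim; in Reiner's formulation the length $n$ is also the number of prime factors, counted with multiplicity, of the reduced norm ideal $\Nrd(I)$.
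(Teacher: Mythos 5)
The paper does not prove this theorem itself: it is stated and used purely as a citation of Theorem~22.18 in Reiner~\cite{R}, so there is no ``paper's proof'' to compare against. Your reconstruction is correct and is in fact essentially Reiner's own argument: choose a maximal left ideal $M\supseteq I$ giving a top composition factor, peel it off as $I = M\cdot(M^{-1}I)$, identify $\mathcal{M}'/M^{-1}I$ with $M/I$ through the Morita equivalence furnished by the invertible bimodule $M$ to drop the length by one, and induct, tracking left and right orders at the seam via $\ordr{M}=\ordl{M^{-1}I}$. One small but worth-noting point: the paper's statement says $\ordr{I}=\ordl{I_n}$, which is a typo; the correct conclusion (what Reiner states, and what your induction actually produces via $\ordr{I}=\ordr{M^{-1}I}=\ordr{I_n}$) is $\ordr{I}=\ordr{I_n}$. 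Your proof quietly fixes this, but you should say so explicitly rather than leaving the endpoint bookkeeping implicit in the phrase ``the remaining endpoint order of $I$ is the one it inherits from $M^{-1}I$.''
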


There are multiple ways to determine a factorization of $q$ from the factorization of ideals described in the theorem. In this section we describe an approach that always works, but does not guarantee that the factors are contained in $\mos$. Note that the factorization of $q\mo$ is not sensitive to right multiplication of $q$ by unit $u$ of $\mo$ because $qu\mo = q\mo$. For this reason we will need to treat the unit part of the factorization separately later in the section. 

For the result of this section it is crucial that there are only finitely many conjugacy classes of maximal orders in any quaternion algebra. It is also important that there are finitely many maximal right ideals with a given norm and a given right order. For establishing our results we need to understand how to classify maximal right ideals with a given norm and what is the relation between conjugacy classes of their left and right orders. 

\subsubsection{Classification of maximal right ideals}
Now we discuss the classification of maximal right ideals with a given norm. It is related to the classification of primitive ideals of maximal orders. 

\begin{dfn} A right ideal of  $\mo$ is \emph{\prim{}} if it is not contained in any proper two-sided ideal of $\mo$.
\end{dfn}

Note that above definition is different from one given in \cite{KV}, this one might generalize better to general central simple algebras. Equivalently, we can restrict our attention above only to $J$ being a prime ideal of $\CM$. Our goal is to establish a precise relation between maximal ideals and primitive ideals. We first recall the results describing possible values of the norm of maximal right ideals and establishing relations between maximal right ideals of $\CM$ and prime ideals of $\mo$. 

\begin{thm}[Special case of Theorem 22.15 in \cite{R}]\label{thm:prime-ideals} Each maximal right ideal contains a unique prime ideal $\P$ of $\mo$.
\end{thm}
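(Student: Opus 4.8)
The statement is quoted as a special case of Theorem 22.15 of \cite{R}, so the cleanest route is to specialize that result to the quaternion setting, but let me sketch a self-contained argument that exposes what is really going on. Let $M$ be a maximal right ideal of the maximal order $\CM$. Following the discussion in the preliminaries, set $\p = \Ann_R(\CM/M) = \{x \in R : x \CM \subset M\}$; this is a two-sided ideal of $R$, and since $R$ is a Dedekind domain it is either $\p = R$ or a prime ideal. First I would rule out $\p = R$: if $1 \in \Ann_R(\CM/M)$ then $\CM = M$, contradicting that $M$ is a proper ideal. So $\p$ is a genuine prime ideal of $R$. The key technical point is that $M \supseteq \p\CM$, so $M/\p\CM$ is a right ideal of the quotient $\CM/\p\CM$. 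Because $\CM$ is a maximal order in a central simple algebra and $\p$ is a nonzero prime, $\CM/\p\CM$ is an Artinian ring whose structure is controlled by $\rad(\CM_\p) = P\CM_\p$: concretely $\CM/\p\CM$ is either simple (the split case) or a local ring with residue a matrix algebra over a division ring (the ramified case). In either case $\CM/\p\CM$ has a unique maximal two-sided ideal, namely the image of $P := \CM \cap \rad(\CM_\p)$, and this is exactly the prime ideal of $\CM$ lying over $\p$ from the correspondence (Theorem 22.4 of \cite{R}) recalled above.

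With this setup the claim becomes: $P$ is the unique prime ideal of $\CM$ contained in $M$. For \textbf{existence}, I would show $P \subset M$. Since $P\CM_\p$ is the radical of the localization, $(\CM/\p\CM)/(\text{image of }P)$ is semisimple, so every maximal right ideal of $\CM/\p\CM$ contains the image of $P$; pulling back along $\CM \twoheadrightarrow \CM/\p\CM$, every maximal right ideal of $\CM$ containing $\p\CM$ contains $P$. Our $M$ contains $\p\CM$ by construction of $\p$, so $P \subset M$. For \textbf{uniqueness}, suppose $P'$ is any prime (equivalently maximal two-sided) ideal of $\CM$ with $P' \subset M$. Then $\Ann_R(\CM/M) \supseteq \Ann_R(\CM/P')$; the latter is the prime $\p'$ of $R$ corresponding to $P'$ under Theorem 22.4. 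Since $\p = \Ann_R(\CM/M)$ is already prime (hence maximal) and is contained in... wait — I need the containment the other way. Instead: $P' \subset M$ forces $P' \subset M$, and intersecting with $R$ gives $\p' = P' \cap R \subseteq M \cap R$. But $M \cap R = \Ann_R(\CM/M) = \p$ requires justification; more robustly, $\p'\CM \subseteq P' \subseteq M$ shows $\p' \subseteq \Ann_R(\CM/M) = \p$, and since both are nonzero primes of the Dedekind domain $R$, $\p' = \p$. Then by the bijectivity of $\p \mapsto P$ in Theorem 22.4, $P' = P$. This gives uniqueness.

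The one genuinely delicate step is the local structure claim: that $\CM_\p/\p\CM_\p \cong \CM_\p/\rad(\CM_\p)$-up-to-its-radical has a \emph{unique} maximal two-sided ideal, so that "contains a prime ideal of $\CM$" pins down $P$ rather than merely asserting $P \subset M$ for some $P$. This is where I would lean hardest on \cite{R}: over a complete (or just local) Dedekind base, a maximal order in a central simple algebra is a non-commutative discrete valuation ring / hereditary order with a unique two-sided maximal ideal (Theorem 18.7 and §22 of \cite{R}), and for quaternion algebras the two cases — $\CM_\p \cong (R_\p)^{2\times 2}$ when $\p \nmid \disc(Q)$, and $\CM_\p$ the unique maximal order of the division quaternion algebra over $K_\p$ when $\p \mid \disc(Q)$ — can be checked by hand using the explicit $\leg{a,b}{K}$ description and the valuation. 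I expect the rest (existence via semisimplicity of the quotient, uniqueness via the $R$-prime it lies over) to be routine once that local dictionary is in place. Since the excerpt explicitly grants us the correspondences of Theorems 22.3, 22.4 and 22.10 of \cite{R}, in the final write-up I would simply invoke Theorem 22.15 of \cite{R} directly and remark that in the quaternion case $P$ is the unique prime of $\CM$ over $\p = \Ann_R(\CM/M)$.
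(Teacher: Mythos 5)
The paper offers no proof of this statement: it simply cites Theorem~22.15 of Reiner~\cite{R}, which is exactly what you say you would do in the final write-up, so the approach matches. Your expanded sketch of why that theorem holds is essentially correct, but one justification is wrong as stated and worth fixing. You claim $\p = \Ann_R(\CM/M)$ is either $R$ or prime ``since $R$ is a Dedekind domain'' --- but proper ideals of a Dedekind domain need not be prime (e.g.\ $(4) \subset \bZ$). The reason $\p$ is prime has nothing to do with $R$ being Dedekind; it comes from $M$ being a \emph{maximal} right ideal. Directly: if $ab \in \p$ with $a \notin \p$, then $a\CM \not\subset M$, so maximality of $M$ gives $M + a\CM = \CM$, hence $1 = m + ac$ for some $m \in M$, $c \in \CM$; multiplying by the central element $b$ yields $b = mb + (ab)c \in M$, so $b\CM \subset M$, i.e.\ $b \in \p$. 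Equivalently, $\CM/M$ is a simple $\CM$-module, its $\CM$-annihilator $P$ is a primitive (hence prime) ideal, and $\p = P \cap R$ is then prime.

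That last observation also shortens your existence step considerably. Setting $P := \Ann_{\CM}(\CM/M)$, one has $P \subset M$ automatically (apply the annihilator condition to $1 + M$), and $P$ is prime automatically (annihilator of a simple module). There is no need to invoke the semisimplicity of $(\CM/\p\CM)$ modulo the image of $P$, or the local hereditary-order structure, to get $P \subset M$; that machinery is only needed if one wants to identify $P$ explicitly with $\CM \cap \rad(\CM_\p)$. Your uniqueness argument --- $P' \subset M$ forces $\p'\CM \subset M$, hence $\p' \subseteq \p$, hence $\p' = \p$ as nonzero primes of a Dedekind domain, hence $P' = P$ by the bijection of Theorem~22.4 --- is clean and correct.
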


\begin{thm}[Special case of Theorem 24.13 in \cite{R}]\label{thm:maxid}  Let $I$ be a maximal right ideal of $\mo$.  Then $\nrd(I)$ is prime and equal to $I \cap \zf$.
\end{thm}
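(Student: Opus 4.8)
The statement specializes Theorem~24.13 of \cite{R} to quaternion algebras, and the plan is to recover it from the local structure theory of maximal orders together with \theo{prime-ideals}. First I would pin down the relevant rational prime: by \theo{prime-ideals}, $I$ contains a unique prime ideal $\P$ of $\mo$, and I set $\mfp = \P \cap \zf$, a nonzero (hence maximal) ideal of the Dedekind domain $\zf$. Since $\zf$ is central in $\qa$, the set $I \cap \zf$ is an ideal of $\zf$; it contains $\mfp$ because $\mfp \subseteq \P \subseteq I$, and it is proper because $1 \notin I$, so maximality of $\mfp$ forces $I \cap \zf = \mfp$. This settles the second assertion and reduces the first to proving $\nrd(I) = \mfp$.

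One inclusion is formal. Every $x \in \mo$ is $\zf$-integral, so $\Trd(x) \in \zf$, whence $\bar x = \Trd(x) - x \in \zf + \mo = \mo$; that is, $\mo$ is stable under the main involution. Hence for $x \in I$ we have $\nrd(x) = x \bar x \in I\mo \subseteq I$, and since $\nrd(x) \in \zf$ this gives $\nrd(x) \in I \cap \zf = \mfp$. Therefore $\nrd(I) \subseteq \mfp$.

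For the reverse inclusion I would localize at $\mfp$. Since $\mfp\mo \subseteq \P \subseteq I$, the right module $\mo/I$ is annihilated by $\mfp$, hence $I_\mfq = \mo_\mfq$ for every prime $\mfq \neq \mfp$; as $\nrd$ commutes with localization and $\nrd(\mo_\mfq) = (\zf)_\mfq$ (because $\nrd(1)=1$), we obtain $\nrd(I)_\mfq = (\zf)_\mfq$, so $\nrd(I) = \mfp^m$ for some $m \geq 1$. To see that $m = 1$, note that $\mo_\mfp$ is a maximal $(\zf)_\mfp$-order in $\qa_\mfp$, and the local classification yields two cases. If $\qa$ splits at $\mfp$, then $\mo_\mfp \cong (\zf)_\mfp^{2 \times 2}$ and $I_\mfp$ is, up to conjugacy, a standard maximal right ideal, whose reduced-norm ideal (here the determinant ideal) is readily computed to be $\mfp(\zf)_\mfp$. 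If $\qa$ is ramified at $\mfp$, then $\mo_\mfp$ is the valuation ring of the local division algebra $\qa_\mfp$, a noncommutative local ring whose unique maximal right ideal is $\CP_\mfp = \rad(\mo_\mfp)$, so $I_\mfp = \CP_\mfp$; from $\CP_\mfp^2 = \mfp\,\mo_\mfp$ and the multiplicativity of $\nrd$ on two-sided ideals one gets $\nrd(\CP_\mfp)^2 = \mfp^2(\zf)_\mfp$, hence $\nrd(\CP_\mfp) = \mfp(\zf)_\mfp$. In either case $\nrd(I)_\mfp = \mfp(\zf)_\mfp$, so $m = 1$ and $\nrd(I) = \mfp$, which is prime.

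The last step is where the actual content sits, though it is bookkeeping rather than a genuine obstacle: it rests on the classification of maximal orders in a quaternion algebra over a local field and on the fact that the ramification index of a prime in a quaternion algebra is at most $2$ (equivalently, $\mfp\,\mo_\mfp$ equals $\CP_\mfp$ or $\CP_\mfp^2$). The case split can be avoided by invoking the identity $[\mo_\mfp : I_\mfp]_{(\zf)_\mfp} = \nrd(I_\mfp)^2$ and observing that the simple right $\mo_\mfp$-module $\mo_\mfp/I_\mfp$ has dimension $2$ over $\zf/\mfp$; I would record this as a remark but perform the explicit local computation for self-containedness.
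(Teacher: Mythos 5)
Your proof is correct. The paper itself gives no argument for this statement, instead citing it as a special case of Theorem~24.13 of Reiner's \emph{Maximal Orders} and remarking that the claim $I\cap\zf$ is prime can be extracted from the proof of Reiner's Theorem~24.8; so there is no in-text proof to compare against, but your argument is a sound self-contained reconstruction and is worth recording.

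Your three main moves all check out. First, identifying $I\cap\zf$ with $\mfp = \P\cap\zf$ via \theo{prime-ideals} is immediate once one notes $\mfp\subseteq\P\subseteq I$, $1\notin I$, and $\mfp$ maximal. Second, the inclusion $\nrd(I)\subseteq\mfp$ follows cleanly from stability of $\mo$ under the standard involution ($\bar x=\Trd(x)-x$ with $\Trd(x)\in\zf$) together with $x\bar x\in I$; this is the only place the quaternionic structure enters in an essential way. Third, the localization argument is the right way to force $\nrd(I)=\mfp$: away from $\mfp$ the ideal is trivial because $\mfp\mo\subseteq I$, and at $\mfp$ the local maximal-order dichotomy (split: $\mo_\mfp\cong M_2(\zf_\mfp)$, where Morita equivalence identifies maximal right ideals with index-$\mfp$ submodules of $\zf_\mfp^2$; ramified: $\mo_\mfp$ local with $\rad(\mo_\mfp)^2=\mfp\mo_\mfp$) gives $\nrd(I)_\mfp=\mfp\zf_\mfp$ in both cases. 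You might add one sentence justifying that $I_\mfp$ is a maximal right ideal of $\mo_\mfp$ (equivalently, that $\mo_\mfp/I_\mfp\cong\mo/I$ since $\mo/I$ is $\mfp$-torsion), since that is silently used when you set $I_\mfp=\CP_\mfp$ in the ramified case. The closing remark that the case split can be replaced by $\ord_{\zf_\mfp}(\mo_\mfp/I_\mfp)=\nrd(I_\mfp)^2$ plus the fact that the simple module has $\zf/\mfp$-dimension $2$ is exactly Reiner's route (his identity is the $n=2$ instance of $\ord_R(\CM/I)=\Nrd(I)^n$ for a degree-$n$ central simple algebra), so your explicit local computation and the cited proof are genuinely the same argument, differing only in how much of the local case analysis is made visible.
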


In Theorem 24.13 in \cite{R} it is not stated that $I \cap \zf$ is a prime ideal, however this becomes clear after reading the proof of Theorem 24.8. It turns out that the converse is also true: 
\begin{prop} \label{prop:prime-norm}
Let $I$ be a right ideal of $\mo$. If $\nrd(I)$ is prime, then $I$ is maximal.
\end{prop}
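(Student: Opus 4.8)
The plan is to reduce the statement to \theo{factor} together with \theo{maxid}. First I would pin down the shape of $I$. Since $\nrd(I)$ is a prime ideal, it is a nonzero proper ideal of $\zf$, so $I\neq\CM$ (because $\nrd(\CM)=\zf$), and $I$ must be a full $\zf$-lattice in $\qa$: otherwise $I\otimes_\zf \nf$ would be a proper nonzero right ideal of $\qa$, but a quaternion algebra over a field is either a division algebra (with no such ideals) or isomorphic to $\nf^{2\times 2}$ (in which case every element of a proper right ideal is a singular matrix, hence of reduced norm $0$), so in either case $\nrd(I)=0$ --- a contradiction. Thus $I\subseteq\CM$ is a full integral right ideal with $\ordr{I}=\CM$ (automatic, since $\CM$ is maximal), and $\CM/I$ is a nonzero finitely generated torsion $\zf$-module, hence of finite composition length $n\geq 1$ as a right $\CM$-module.

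Next I would invoke the right-ideal form of \theo{factor} (Theorem 22.18 of \cite{R} holds for right ideals as well, e.g.\ by passing to $\qa^{\mathrm{op}}\cong\qa$) to write $I=I_1\cdots I_n$ as a proper product of maximal integral ideals of $\qa$. By \theo{maxid}, each $\nrd(I_j)$ is a prime ideal of $\zf$, in particular a nonunit. Because the product $I_1\cdots I_n$ is proper, the reduced norm is multiplicative along it (\cite[\S 24]{R}; see also \cite{V}), so
\[
\nrd(I) = \nrd(I_1)\cdots\nrd(I_n)
\]
exhibits the prime ideal $\nrd(I)$ as a product of $n$ nonunit ideals of $\zf$. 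Since $\zf$ is a Dedekind domain, nonzero ideals factor uniquely into primes, so a prime ideal cannot be written as a product of two or more proper ideals; hence $n=1$. Therefore $I=I_1$ is a maximal integral ideal, hence in particular a maximal right ideal of $\ordr{I}=\CM$, which is the assertion.

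I expect the argument to be routine once the cited results are in place; the only points needing care are (i) checking that ``$\nrd(I)$ prime'' already forces $I$ to be a full integral right ideal, so that $n$ is finite and positive and \theo{factor} applies; and (ii) using multiplicativity of the reduced norm only for the \textbf{proper} product delivered by \theo{factor}, since it can fail otherwise. A more self-contained alternative, avoiding the multiplicativity input, is to argue locally: writing $\mfp:=\nrd(I)$, for every prime $\mfq\neq\mfp$ one has $\nrd(I_\mfq)=\nrd(I)_\mfq=\zf_\mfq$, which forces $I_\mfq=\CM_\mfq$ since an integral right ideal of a maximal order over a local field with unit reduced norm equals the whole order; and at $\mfp$ the relation $\nrd(I_\mfp)=\mfp\zf_\mfp$ pins $I_\mfp$ down up to units --- it is $\smat{1&0\\0&\pi}\CM_\mfp$ when $\qa$ splits at $\mfp$ and the unique maximal two-sided ideal of $\CM_\mfp$ when $\qa$ is ramified there --- so that $\CM/I\cong\CM_\mfp/I_\mfp$ is a simple right $\CM$-module, i.e.\ $I$ is maximal. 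This route only uses the local structure of maximal quaternion orders.
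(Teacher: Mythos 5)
Your proof is correct, but it takes a genuinely different route from the paper's. The paper verifies maximality directly from the definition: it takes any right ideal $J$ with $I\subseteq J\subseteq\mo$, observes that $\nrd(J)\mid\nrd(I)=\mfp$ forces $\nrd(J)\in\{\mfp,\zf\}$, and then shows $\nrd(J)=\zf$ gives $J=\mo$ while $\nrd(J)=\mfp$ gives $\nrd(J^{-1}I)=\zf$, hence $J^{-1}I=\mo$ and $J=I$. You instead apply the factorization result of Theorem~\ref{thm:factor} to write $I$ as a proper product $I_1\cdots I_n$ of maximal integral ideals, pair this with Theorem~\ref{thm:maxid} (each $\nrd(I_j)$ is prime) and multiplicativity of $\nrd$ along proper products, and use unique factorization of ideals in the Dedekind ring $\zf$ to force $n=1$. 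Your argument is ``top-down'' and slightly heavier, relying on the full factorization theorem, whereas the paper's is more self-contained (it needs only divisibility and multiplicativity of the reduced norm, plus the characterization of when $\nrd(J)=\zf$). On the other hand, your route makes the conclusion feel inevitable once Theorems~\ref{thm:factor} and~\ref{thm:maxid} are granted, and your preliminary check that ``$\nrd(I)$ prime'' forces $I$ to be a full lattice is a useful precision that the paper leaves implicit. Your sketched local alternative is also sound and is essentially how the cited facts in \cite{R} are proved, so it is fine as a third option, but it proves more than is needed here.
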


\begin{proof}
Let $J$ be a right ideal of $\mo$ containing $I$ and suppose that $\mfp := \nrd(I)$ is prime. We will show that either $J = I$ or $J = \mo$. First note that $\nrd\at{J} \mid \mfp$ and therefore $\nrd\at{J}$ is either equal to $\p$ or $\zf$. The equality $\nrd\at{J}=\zf$ implies $\ord_{\zf}\mo/J = \zf$ which is true if and only if $J = \mo$ (see (24.1) and related discussion in \cite{R}). In the case when $\nrd\at{J} = \p$ we have $J^{-1}I$ is an integral $\mo$-ideal with $\nrd$ equal to $\zf$. We conclude that $J^{-1}I = \mo$ which implies that $I = J$. 
\end{proof}

Some maximal right ideals are also prime. We can distinguish when it is the case based on the norm of the maximal ideal. The following theorem is crucial for this purpose: 

\begin{thm}[Special case of Theorems 25.7, 25.10 in \cite{R}]\label{thm:disc} Let $\CM$ be a maximal order in $\qa$.  For each prime ideal $\mfp$ of $\zf$, define its local index $m(\p)$ in $\CM$ via the equality $\p \mo = \P ^ {m(\p)}$, where $\P$ is the prime ideal of $\mo$ with norm $\p$. 
Then $m(\p)=1$ for all but finitely many $\p$.  The discriminant of $\qa$ is equal to $\prod_{\p} \p^{m(\p)-1}$ where the product is over prime ideals $\p$ of $\zf$. 
\end{thm}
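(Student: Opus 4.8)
The plan is to reduce the statement to the completions $\nf_\p$ and $\qa_\p = \qa\ox_\nf\nf_\p$ of $\nf$ and $\qa$ at each prime $\p$ of $\zf$, where maximal orders are completely classified, and then to reassemble the global claim; write $R_\p$ for the valuation ring of $\nf_\p$ (the completion of $\zf$ at $\p$) and $\mo_\p$ for the corresponding completion of $\mo$. Both sides of the asserted identity are local. For the left side: by the 1-1 correspondence between primes of $\zf$ and of $\mo$ together with the unique factorization of two-sided $\mo$-ideals recalled above, $\p\mo$ is a power of the unique prime $\P$ of $\mo$ over $\p$; since $\P\mo_\p = \rad(\mo_\p)$, localizing $\p\mo = \P^{m(\p)}$ at $\p$ gives $\p\mo_\p = \rad(\mo_\p)^{m(\p)}$, so $m(\p)$ is read off from $\mo_\p$. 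For the right side: the reduced trace form and its $4\times 4$ minors commute with localization, so $\disc(\mo)$ is the product over all $\p$ of the local discriminants $\disc(\mo_\p)$ (viewed as ideals of $\zf$), almost all of which are trivial. So it suffices to compute $m(\p)$ and $\disc(\mo_\p)$ for each $\p$.

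Next I would recall the local classification of maximal orders and do these computations in the two cases. If $\qa_\p$ splits, i.e.\ $\qa_\p\cong \nf_\p^{2\times 2}$, then every maximal $R_\p$-order is conjugate to $R_\p^{2\times 2}$; since $\rad(R_\p^{2\times 2}) = \p R_\p^{2\times 2}$, we get $\p\mo_\p = \rad(\mo_\p)$ and hence $m(\p) = 1$, while a short computation with the matrix units $e_{ij}$ shows the $4\times 4$ matrix $\big(\Trd(e_{ij}e_{kl})\big)$ is, up to sign, a permutation matrix, so $\disc(\mo_\p)$ is the unit ideal. If instead $\qa_\p$ is the quaternion division algebra over $\nf_\p$, there is a unique maximal order, the valuation ring of $\qa_\p$, a noncommutative local ring whose radical has residue degree and ramification index both equal to $2$ over $\p$; hence $\p\mo_\p = \rad(\mo_\p)^2$, i.e.\ $m(\p) = 2$. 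Taking the standard $R_\p$-basis $1,\i,\j,\i\j$ of this order, with $\i$ generating the unramified quadratic extension and $\j^2$ a uniformizer of $R_\p$, makes the reduced trace form diagonal with diagonal valuations $0,0,1,1$, so $\det\big(\Trd(x_ix_j)\big)$ has valuation $2$ and $\disc(\mo_\p) = \p$.

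Then I would assemble. Writing $\qa = \qalg{a}{b}{\nf}$ with $a,b\in\nf^\times$, the Hilbert symbol $(a,b)_\p$ is trivial whenever $\p\nmid 2$ and $v_\p(a) = v_\p(b) = 0$, so $\qa$ is ramified at only finitely many finite primes; at all remaining primes $m(\p) = 1$, which proves the first assertion. For the second, combining the two local cases gives
\[\disc(\qa) := \disc(\mo) = \prod_\p \disc(\mo_\p) = \prod_{\p\ \text{ramified}}\p = \prod_\p \p^{\,m(\p)-1},\]
since $m(\p)-1$ equals $1$ at the finitely many ramified primes and $0$ elsewhere; in particular $\disc(\qa)$ is square-free.

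The main obstacle is not this bookkeeping but its two inputs. The local classification of maximal orders --- uniqueness of the maximal order in the division case, and conjugacy of every maximal order to $R_\p^{2\times 2}$ in the split case --- is classical but genuinely takes work. And the computation of $\disc(\mo_\p)$ in the ramified case at a dyadic prime $\p\mid 2$ is subtler than the odd-residue-characteristic model above, since there the clean basis $1,\i,\j,\i\j$ need not be available and one must instead use the explicit structure of the maximal order of a ramified quaternion algebra over a $2$-adic field. Everything else --- the compatibility of $\p\mo$ and of the discriminant with localization, the matrix-unit computation, the Hilbert-symbol argument, and the final bookkeeping --- is routine. A fully self-contained treatment could sidestep the two hard inputs by simply invoking Theorems 25.7 and 25.10 of \cite{R}, which is the route the paper takes.
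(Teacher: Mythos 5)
The paper gives no proof of this theorem: it is stated as a special case of Theorems~25.7 and 25.10 of Reiner's \emph{Maximal Orders} and invoked as a black box. Your proposal instead unpacks the underlying local-global argument: localize at each prime $\p$, classify the maximal orders of $\qa_\p$ (every maximal order is conjugate to $R_\p^{2\times 2}$ when $\qa_\p$ splits, and is the unique noncommutative valuation ring when $\qa_\p$ is a division algebra), read off $m(\p)$ from the radical and $\disc(\mo_\p)$ from the reduced trace form in each case, use the vanishing of the Hilbert symbol away from the support of $a$, $b$, and $2$ to see that only finitely many finite primes ramify, and reassemble. Your local computations check out: $\rad(R_\p^{2\times 2}) = \p R_\p^{2\times 2}$ gives $m(\p)=1$ with unit local discriminant (the Gram matrix of the matrix units is indeed a permutation matrix), while the $e=f=2$ structure of the valuation ring in the division case gives $m(\p)=2$ and local discriminant $\p$. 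You are also right to single out the two nontrivial inputs---the local classification of maximal orders, and the discriminant computation at ramified dyadic primes, where the diagonalizing basis $1,\i,\j,\i\j$ with $\j^2$ a uniformizer is not available and one must use the actual structure theory of the local maximal order (or compute the different directly). Since this is essentially the content of Reiner's own proof of 25.7/25.10, the two routes differ only in how much is left as a citation: the paper outsources everything, you outsource only the two flagged lemmas. What your version buys is visibility into why $\disc(\qa)$ is squarefree and why $m(\p)-1 \in \{0,1\}$, facts the paper's black-box treatment leaves opaque.
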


Using the results above we establish the following relation between maximal and primitive ideals: 

\begin{prop} \label{prop:ideal-alternative}
Let $I$ be a right ideal of $\mo$ with prime norm $\mfp$. Then there are two alternatives: 
\begin{enumerate}
	\item $\mfp \mid \disc(\qa)$ and $I$ is prime
	\item $\mfp\nmid \disc(\qa)$ and $I$ is \prim. 
\end{enumerate}
\end{prop}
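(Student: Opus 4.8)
The plan is to read off the behavior of $I$ from the unique prime ideal of $\mo$ lying over $\mfp$, and then separate the ramified and split cases numerically. First I would note that, since $\nrd(I)=\mfp$ is prime, \propos{prime-norm} shows $I$ is a maximal right ideal, so by \theo{prime-ideals} it contains a unique prime ideal $\P$ of $\mo$. By \theo{maxid} we have $I\cap\zf=\mfp$, and since $\P\subseteq I$ gives $\P\cap\zf\subseteq\mfp$ with $\mfp$ maximal and $\P\cap\zf$ a nonzero prime of $\zf$, we get $\P\cap\zf=\mfp$; thus $\P$ is the prime ideal of $\mo$ over $\mfp$, and \theo{disc} lets us write $\mfp\mo=\P^{m}$ with $m=m(\mfp)$. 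Square-freeness of $\disc(\qa)$ forces $m\in\{1,2\}$, with $m=2$ precisely when $\mfp\mid\disc(\qa)$. Taking reduced norms, multiplicativity together with the identity $\nrd(\mfp\mo)=\mfp^{2}$ gives $\nrd(\P)^{m}=\mfp^{2}$, hence $\nrd(\P)=\mfp$ when $m=2$ and $\nrd(\P)=\mfp^{2}$ when $m=1$; the same computation applies verbatim to every prime of $\zf$.

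Now I would split on whether $\mfp\mid\disc(\qa)$; the two cases are exhaustive, and in fact disjoint, since a prime ideal is a proper two-sided ideal containing itself and hence is never primitive. If $\mfp\mid\disc(\qa)$, then $m=2$ and $\nrd(\P)=\mfp$ is prime, so \propos{prime-norm} applied to $\P$ shows $\P$ is itself a maximal right ideal; from $\P\subseteq I\subsetneq\mo$ (here $I\ne\mo$ since $\nrd(I)=\mfp\ne\zf$) and the maximality of $\P$ we conclude $I=\P$, which is prime. If instead $\mfp\nmid\disc(\qa)$, so $m=1$, I would argue by contradiction: if $I$ were not primitive then, by the reformulation of primitivity in terms of prime ideals noted after its definition, $I$ would be contained in some prime ideal $\P_{\mfq}$ of $\mo$; since $I$ is a maximal right ideal and $\P_{\mfq}\subsetneq\mo$, this forces $I=\P_{\mfq}$, so $\mfp=\nrd(I)=\nrd(\P_{\mfq})$, which by the norm computation above equals $\mfq$ if $m(\mfq)=2$ and $\mfq^{2}$ if $m(\mfq)=1$. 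As $\mfp$ is prime it cannot be $\mfq^{2}$, so $m(\mfq)=2$ and $\mfp=\mfq$; but $m(\mfp)=2$ means $\mfp\mid\disc(\qa)$ by \theo{disc}, contradicting the case hypothesis. Hence $I$ is primitive.

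The one point that is more than bookkeeping is the normalization $\nrd(\mfp\mo)=\mfp^{2}$ (equivalently $\nrd(\P)=\mfp$ in the ramified case): this is exactly what makes \propos{prime-norm} applicable to $\P$ and what numerically distinguishes the two alternatives. I would establish it by localizing at $\mfp$, where $\mfp\mo_{\mfp}$ is principal on a uniformizer $\pi$ of $\zf_{\mfp}$ with $\nrd(\pi)=\pi^{2}$, together with the fact that the reduced norm of an ideal is computed place by place; alternatively one may invoke $\nrd(J)^{2}=[\mo:J]_{\zf}$ for integral two-sided $J$ and $[\mo:\mfp\mo]_{\zf}=\mfp^{4}$. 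Beyond this I anticipate no real obstacle: the argument is a short chase through \propos{prime-norm}, \theo{prime-ideals}, \theo{maxid}, and \theo{disc}.
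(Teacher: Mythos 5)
Your proof is correct and follows essentially the same route as the paper's: both proceed via \propos{prime-norm}, \theo{prime-ideals}, \theo{maxid} and \theo{disc} to pin down the unique prime $\P\subseteq I$ and compute $\nrd(\P)$ from the local index $m(\mfp)$, then case split on ramification and extract a norm contradiction. The one genuine difference is how you close the ramified case: the paper forms $I^{-1}\P$ and argues it is an integral ideal of norm $\zf$, hence equal to $\mo$, while you observe that $\nrd(\P)=\mfp$ makes $\P$ itself a maximal right ideal (by \propos{prime-norm}), so $\P\subseteq I\subsetneq\mo$ forces $I=\P$ by maximality alone. Your version is slightly tighter and sidesteps the left/right order bookkeeping implicit in the product $I^{-1}\P$. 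In the unramified case you are a bit more roundabout than the paper --- you take an arbitrary prime $\P_{\mfq}\supseteq I$ and derive $\mfq=\mfp$ numerically, whereas the uniqueness clause in \theo{prime-ideals} already gives $\P_{\mfq}=\P$ at once --- but the argument is sound. Your explicit check that $I\cap\zf=\mfp$ (via \theo{maxid}) pins $\P$ over $\mfp$, which the paper uses tacitly, is also a worthwhile detail to surface.
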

\begin{proof}
Let $\p = \nrd\at{I}$. By \propos{prime-norm}, the right ideal $I$ is maximal. By \theo{prime-ideals} there exists a prime ideal $\P$ of $\mo$ such that $\P \subset I$. First suppose that $\p$ divides the discriminant of $\qa$. By \theo{disc} we have $\nrd\at{\P}=\p$, therefore $I^{-1}\P$ is an integral $\mo$-ideal with norm equal to $\zf$. We conclude that $I^{-1}\P = \mo$ and $I = \P$, as required. Consider the case when $\p$ does not divide the discriminant of $\qa$. We seek a contradiction by assuming that $I$ is not primitive. This implies $I = \P$, but $\nrd{P}=\p^2$ and $\nrd\at{P}\ne\nrd{I}$.
\end{proof}

Recall the definition of the projective line over a commutative ring $R$: 
\[\pone\at{R} = \set{\at{x,y} : x R + y R = R } / R^\times\]
and let $\at{x:y} = R^\times \cdot (x,y)$ denote the line passing through the point $(x,y)$.   The number of points in $\pone\at{\zf/\a}$ is equal to 
\begin{equation} \label{eq:degree}
\Phi\at{\a} = \prod_{ \p^{e} | \a } \Nm_{\nf/\bQ}\at{\p^{e-1} }\at{\Nm_{\nf/\bQ}\at{\p}+1},
\end{equation}
where the product is over the highest powers $\p^{e}$ of prime ideals dividing $\mfa$ (see Page 27 in \cite{KV}). This will be also very useful for \sec{graph}. The following result classifies the maximal primitive ideals of $\mo$ via an explicit construction:  

\begin{lem}[Special case of Lemma~7.2 from \cite{KV}]\label{lem:prim-enum} Let $\a$ be an ideal of $\zf$ that is co-prime to the discriminant of $\qa$. Then the set of primitive right ideals of $\CM$ with norm $\a$ is in bijection with $\pone\at{\zf/\a}$. Explicitly, given a splitting 
\[
\phi_{\a} : \mo \hookrightarrow \mo \otimes_{\zf} {R_\a} \cong M_2\at{R_\a} \to M_2\at{\zf/ \a},
\]
where $R_\mfa$ is the corresponding localization of $R_F$, 
the bijection is given by 
\begin{align*}
\pone\at{\zf/\a} &\to \set{ I \subset \mo : \nrd\at{I} = \a } \\
\at{x:y} &\mapsto I_{\at{x:y}} = \phi_{\a}^{-1} \UG{x}{0}{y}{0}\mo + \a\mo.
\end{align*}
\end{lem}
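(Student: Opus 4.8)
The plan is to transport the problem to the finite ring $M_2(\zf/\a)$ and classify its right ideals. Since $\a$ is coprime to $\disc(\qa)$, for each prime $\p\mid\a$ the localization $\mo_\p$ is a maximal $\zf_\p$-order in the split algebra $\qa_\p\cong M_2(\nf_\p)$, hence $\mo_\p\cong M_2(\zf_\p)$; together with the Chinese remainder theorem these local isomorphisms induce an isomorphism $\mo/\a\mo\cong M_2(\zf/\a)$, and the map $\phi_\a$ in the statement is just the quotient $\mo\to\mo/\a\mo$ followed by this isomorphism, so it is surjective with kernel $\a\mo$. I would then record two consequences of coprimality to $\disc(\qa)$. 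First, the prime two-sided ideals of $\mo$ lying over primes $\p\mid\a$ are exactly the $\p\mo$ (their local indices equal $1$ by \theo{disc}), so a right ideal $I$ of $\mo$ with $\nrd(I)=\a$ is primitive iff $I_\p\not\subset\p\mo_\p$ for every $\p\mid\a$. Second, every right ideal $I\subset\mo$ with $\nrd(I)=\a$ satisfies $\a\mo\subset I$: localizing, $I_\p=\mo_\p$ for $\p\nmid\a$ (an integral right ideal of unit norm), while for $\p\mid\a$ the ideal $I_\p$ is principal with elementary divisors $\p^{s},\p^{t}$ where $s,t\ge 0$ and $s+t=v_\p(\a)$, whence $\p^{v_\p(\a)}\mo_\p\subset I_\p$. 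Therefore $I\mapsto\phi_\a(I)$ identifies the right ideals of $\mo$ of norm $\a$ with certain right ideals of $M_2(\zf/\a)$, compatibly with both descriptions above.

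Next I would use $\zf/\a\cong\prod_{\p^e\|\a}\zf/\p^e$ to reduce to a single prime power: this gives $M_2(\zf/\a)\cong\prod M_2(\zf/\p^e)$, and right ideals, their norms, the primitivity condition, and $\pone(\zf/\a)\cong\prod\pone(\zf/\p^e)$ all decompose componentwise. It then suffices, for fixed $\p^e$ and $R':=\zf_\p/\p^e$, to classify the right ideals $J$ of $M_2(R')$ that lift to a right ideal of $M_2(\zf_\p)$ of reduced norm $\p^e$ and are primitive, i.e.\ not contained in $\p M_2(R')$. By Smith normal form over the DVR $\zf_\p$ — applied by right multiplication of a generator, since that is what preserves a principal right ideal — such a lift has elementary divisors $1,\p^e$, and a short computation then shows it equals $\UG x0y0\mo_\p+\p^e\mo_\p$ for some $\p$-unimodular pair $(x,y)$, so that $J=\overline{\UG x0y0}\,M_2(R')$. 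Conversely I would check that $\overline{\UG x0y0}\,M_2(R')$ with $(x,y)$ unimodular has reduced norm exactly $\p^e$ (a direct determinant computation) and is primitive, because the generator $\UG x0y0$ has a unit entry; that the assignment $(x:y)\mapsto\overline{\UG x0y0}\,M_2(R')$ is well defined on $\pone(R')$, because $\UG{ux}0{uy}0=\UG x0y0\UG u000$ lies in $\UG x0y0\,M_2(R')$ and symmetrically with $u^{-1}$ for a unit $u$; and that it is injective, because equality of two such right ideals forces $(x,y)$ and $(x',y')$ to generate the same rank-one $R'$-submodule of $(R')^2$, hence to differ by a unit scalar, so $(x':y')=(x:y)$ in $\pone(R')$.

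Assembling these local bijections over the Chinese remainder decomposition yields the asserted bijection $\pone(\zf/\a)\to\set{I\subset\mo:\nrd(I)=\a,\ I\ \text{primitive}}$ together with the explicit formula $(x:y)\mapsto I_{(x:y)}=\phi_\a^{-1}\UG x0y0\mo+\a\mo$. I expect the main obstacle to be the local classification in the second paragraph: one must verify that ``primitive'' together with ``reduced norm exactly $\p^e$'' singles out precisely the column-type right ideals $\overline{\UG x0y0}\,M_2(R')$ and that distinct points of $\pone(R')$ give distinct ideals. This is elementary but delicate because $R'=\zf_\p/\p^e$ is not a domain; the cleanest route is to lift generators to the DVR $\zf_\p$, run Smith normal form there, and carefully track the fact that only right multiplication by units preserves a principal right ideal.
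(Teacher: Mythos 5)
The paper gives no proof of this lemma; it is cited as a special case of Lemma~7.2 of Kirschmer--Voight \cite{KV}. Your reconstruction is correct and follows what is essentially the argument of the source: reduce via the Chinese remainder theorem and localization to a single prime $\p\mid\a$, use coprimality to $\disc(\qa)$ to get $\mo_\p\cong M_2(\zf_\p)$, and classify the relevant local right ideals by Smith normal form in $M_2(\zf_\p)$ before descending to $\zf_\p/\p^e$. Two details in your sketch are handled a bit loosely and are worth making explicit. First, the displayed bijection in the lemma lands in $\set{I\subset\mo:\nrd(I)=\a}$, but the first sentence speaks of \emph{primitive} such ideals; your argument shows both that the image of $(x:y)\mapsto I_{(x:y)}$ consists of primitive ideals (a unit entry in $\UG{x}{0}{y}{0}$ rules out containment in $\p\mo$) and, via elementary divisors $(\p^s,\p^t)$ with $s+t=e$ and $\min(s,t)=0$, that primitivity plus $\nrd=\p^e$ forces divisor type $(1,\p^e)$, so every primitive ideal with norm $\a$ is in the image; stating these two inclusions cleanly closes the loop. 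Second, when you say the reduced norm of $\overline{\UG{x}{0}{y}{0}}M_2(R')$ equals $\p^e$ ``by a direct determinant computation,'' this should be phrased as a computation on the lift in $M_2(\zf_\p)$, since $R'=\zf_\p/\p^e$ is not a domain and the norm of the ideal is by definition a $\zf_\p$-ideal; you do flag this caution at the end, and indeed lifting a generator, running Smith normal form over the DVR, and descending is the right route. With those clarifications the proof is complete.
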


We finish this part showing a useful relation between integral and primitive ideals: 
\begin{prop}\label{prop:prim} Each right ideal $I$ of $\mo$ is contained in a unique \prim{} right ideal $J$ of $\mo$ such that $\ordl{I}=\ordl{J}$. There exists an algorithm for finding $J$ (see \fig{prim}). 
\end{prop}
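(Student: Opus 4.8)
The plan is to produce $J$ as the \emph{primitive part} of $I$ — the ideal obtained by dividing $I$ by its largest two-sided factor — and then to prove uniqueness by localizing at each prime of $\zf$ and reducing to a statement about a single local maximal order. As throughout this section, right ideals of $\CM$ are taken integral, i.e.\ contained in $\CM$; in the applications $I = q\CM$ with $q \in \CM$.

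\emph{Construction.} Put $\Lambda := \ordl{I}$, a maximal $\zf$-order since $\CM = \ordr{I}$ is maximal. One checks $I \subseteq \Lambda$ after localizing: if $\mfp$ ramifies in $\qa$, every $\CM_\mfp$-lattice is two-sided, so $I_\mfp \subseteq \Lambda_\mfp$ automatically; otherwise $\CM_\mfp \cong M_2(R_\mfp)$ and $I_\mfp$ corresponds to an $R_\mfp$-submodule $L \subseteq R_\mfp^{2}$, whose stabilizer $\Lambda_\mfp = \{g \in M_2(F_\mfp): gL \subseteq L\}$ contains $I_\mfp$ because the columns of every element of $I_\mfp$ already lie in $L$. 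Hence $\mfe := I\Lambda$ is an integral two-sided $\Lambda$-ideal, and (using $\Lambda I = I$) it is the smallest two-sided $\Lambda$-ideal containing $I$: every two-sided $\Lambda$-ideal $\mfb \supseteq I$ satisfies $\mfb = \Lambda\mfb\Lambda \supseteq \Lambda I\Lambda = \mfe$. Set
\[
J := \mfe^{-1}I.
\]
A routine computation of orders gives $\ordl{J} = \ordl{\mfe^{-1}} = \Lambda = \ordl{I}$ and $\ordr{J} = \ordr{I} = \CM$; also $I = \Lambda I \subseteq \mfe^{-1}I = J \subseteq \CM$, the inclusion $I \subseteq J$ because $\mfe \subseteq \Lambda$ forces $\mfe^{-1} \supseteq \Lambda$, and $J \subseteq \CM$ by a local check. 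Finally $J$ is \prim{}: if $J \subseteq \P$ for a prime ideal $\P$ of $\Lambda$, then $I = \mfe J \subseteq \mfe\P$, and $\mfe\P$ is a two-sided $\Lambda$-ideal strictly contained in $\mfe$ — by Theorem 22.10 of \cite{R} the two-sided $\Lambda$-ideals form a free abelian group on the primes, so multiplication by $\P$ strictly raises the exponent at the corresponding prime — contradicting minimality of $\mfe$.

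\emph{Uniqueness.} Let $J'$ be any \prim{} right ideal of $\CM$ with $\ordl{J'} = \Lambda$ and $I \subseteq J'$; it is enough to prove $J'_\mfp = J_\mfp$ for every prime $\mfp$ of $\zf$. If $\mfp$ ramifies in $\qa$ or $\Lambda_\mfp = \CM_\mfp$, then $\ordl{J'_\mfp} = \ordr{J'_\mfp} = \CM_\mfp$, so $J'_\mfp$ is an integral two-sided $\CM_\mfp$-ideal; not being contained in the unique maximal two-sided ideal of $\CM_\mfp$ (cf.\ \theo{disc}), it equals $\CM_\mfp$, and so does $J_\mfp$. Otherwise $\mfp$ does not ramify in $\qa$ and $\Lambda_\mfp \subsetneq \CM_\mfp \cong M_2(R_\mfp)$, so $\Lambda_\mfp$ is an Eichler order of some level $\mfp^{\ell}$, $\ell \ge 1$. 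A direct computation in $M_2(R_\mfp)$ (equivalently, a short walk in the tree of maximal orders, consistent with \lemm{prim-enum}) shows that an integral \prim{} right $\CM_\mfp$-ideal with left order $\Lambda_\mfp$ has reduced norm exactly $\mfp^{\ell}$. So $J_\mfp$ and $J'_\mfp$ have equal reduced norm, and since $\ordr{J_\mfp} = \ordr{J'_\mfp} = \CM_\mfp$ is \emph{maximal}, the two-sided $\CM_\mfp$-ideal $(J'_\mfp)^{-1}J_\mfp$ has trivial norm, hence equals $\CM_\mfp$ by unique factorization into primes, whence $J'_\mfp = J_\mfp$.

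\emph{Effectivity and the main obstacle.} The construction is algorithmic and needs no integer factorization: compute $\Lambda = \ordl{I}$ and the two-sided ideal $\mfe = I\Lambda$ by $\zf$-lattice arithmetic, form the colon ideal $\mfe^{-1}$, and output $J = \mfe^{-1}I$; equivalently, repeatedly replace $I$ by $\P^{-1}I$ whenever $I$ lies in some prime $\P$ of $\Lambda$ (detected by integrality of $\P^{-1}I$), which terminates because $\Nm_{\nf/\bQ}(\nrd(I))$ strictly decreases and is confluent because the two-sided ideal group is commutative; this is \fig{prim}. The delicate point is the local uniqueness at the split primes with $\Lambda_\mfp \subsetneq \CM_\mfp$: the argument stands or falls on the fact that, for an integral \prim{} right ideal of $M_2(R_\mfp)$, the left order determines the reduced norm (namely it is the level of the associated Eichler order), so that two such ideals with the same left order differ only by a norm-one two-sided ideal of the \emph{maximal} order $\CM_\mfp$ and therefore coincide. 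This is exactly what \lemm{prim-enum} and the point-count $\Phi$ of \eq{degree} make available.
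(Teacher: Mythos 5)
Your proof is correct and takes a genuinely different route from the paper's. The paper's proof is a correctness sketch of the iterative algorithm in \fig{prim}: starting from $J'=I$, repeatedly replace $J'$ by $J'\P^{-1}$, dividing on the right by prime two-sided ideals $\P$ of $\CM=\ordr{I}$ as long as $J'\subseteq\P$; the paper verifies the loop invariant (integrality and preservation of left and right orders) but leaves the primitivity of the output and the uniqueness claim essentially implicit. You instead give a closed-form construction: $\mfe:=I\Lambda$ with $\Lambda=\ordl{I}$ is the smallest two-sided $\Lambda$-ideal containing $I$, and $J:=\mfe^{-1}I$ is the desired primitive ideal --- so the entire two-sided factor is stripped in one step on the left (equivalent, under transport by $I$, to the paper's prime-by-prime right-division by two-sided $\CM$-ideals). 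This buys an immediate proof of primitivity from minimality of $\mfe$, and you actually supply the uniqueness argument the paper omits, by localizing at each prime of $\zf$ and treating the ramified/equal-order and split/Eichler cases separately. One minor slip: in the split non-maximal case you call $\Lambda_\mfp$ an ``Eichler order of level $\mfp^{\ell}$,'' but $\Lambda_\mfp$ is itself a maximal order; you mean the intersection $\Lambda_\mfp\cap\CM_\mfp$, whose Eichler level $\mfp^{\ell}$ records the Bruhat--Tits distance between the two maximal orders and equals the reduced norm of the unique integral primitive $\Lambda_\mfp$-$\CM_\mfp$-ideal. With that wording corrected, your local uniqueness argument is sound.
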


\begin{figure}
\begin{algorithmic}[1]
\Require $I$ -- right integral $\mo$-ideal
\Procedure{PRIMITIVE-IDEAL}{$I$}
\State Factor $\nrd\at{I}$ into a product of prime ideals $\p_1^{k(1)}\ldots\p_m^{k(m)}$ of $\zf$
\State $\CP \gets \set{\P : \P \text{ prime ideal of }\mo \text{ such that } \P \cap \zf = \p_k , k=1,\ldots,m}$
\State $J' \gets I$ \label{alg:prim:base} 
\For{$\P \in \CP$}
\While{ $J' \subset \P$ }
\State $J' \gets J'\P^{-1}$ \label{alg:prim:int} \Comment $J'$ is always integral $\mo$-ideal, $\ordl{J'}=\ordl{I}$
\EndWhile
\EndFor
\State \Return $J$
\EndProcedure
\Ensure $J$ -- \prim{} right $\mo$-ideal such that $I \subset J$ and $\ordl{I}=\ordl{J}$
\end{algorithmic}
\caption{\label{fig:prim}Algorithm for computing primitive ideal }
\end{figure}

\begin{proof}
We first show that the algorithm in \fig{prim} outputs an integral $\ordl{I}$-$\mo$-ideal that contains $I$. Note that during the execution of the algorithm ideal $J'$ always satisfies the mentioned properties. This is the case when $J'$ is initialized with $I$. Before the line~\ref{alg:prim:int} in \fig{prim} is executed we have $J' \subset \P$, which implies that $J' \P^{-1}$ is an integral ideal. Left and right orders of $J' \P^{-1}$ are the same as right and left orders of $J'$ because $\P^{-1}$ is a two-sided ideal of $\mo$. 
\end{proof}

This result implies that we we can always study the factorization of ideals by separately studying factorization of two-sided ideals and factorization of primitive ideals into product of maximal primitive ideals. 

\subsubsection{Conjugacy classes of maximal orders. Adjacency structure.}
The notion of two maximal orders being $\p$-neighbors is closely related to the factorization of integral ideals into a product of maximal integral ideals.  

\begin{dfn}\label{defn:p-neigbor} Maximal orders $\mo'$ and $\mo''$ are called \emph{$\p$-neighbors} if there exists an integral $\mo'$-$\mo''$-ideal $I$ of $\qa$ such that $\nrd\at{I}=\p$. For a set $S$ of prime ideals of $\zf$, we say that $\mo'$ and $\mo''$ are \emph{$S$-neighbors} if they are $\p$-neighbors for some $\p \in S$.  
\end{dfn}

Note that the factorization $I_1 \cdots I_n $ described in \theo{factor} determines a sequence of maximal orders 
\[
\ordl{I_1},\ordr{I_1}=\ordl{I_2},\ldots,\ordr{I_{n-1}}=\ordl{I_n}, \ordr{I_n}
\]
along which all adjacent pairs are $S$-neighbors, 
where $S$ is a set of all prime ideals dividing the norm of $I_1 \ldots I_n$. The maximal orders in any such sequence belong to a finite set of different conjugacy classes (see Problem 2.8 and related discussion in \cite{KV}). The following structure relates the $S$-neighbor relation and conjugacy classes of maximal orders. 

\begin{dfn}\label{defn:adj-desc} Given ideals $\p_1,\ldots,\p_l \subset S$ coprime to $\disc(\qa)$ and representatives $\mo_1,\ldots,\mo_m$ for the conjugacy classes of maximal orders of $\qa$,  the \emph{maximal orders adjacency description} is a sequence $\Adj_{1,1},\ldots,\Adj_{m,l}$. Each $\Adj_{i,j}$ is a sequence $a_{i,j,1}, \ldots, a_{i,j,N_j}$ where $N_j = \Nm_{\nf/\bQ}\at{\p_j}+1$. Each $a_{i,j,k}$ is a pair $\at{s,q}$ where $s$ is an integer and $q$ is an integral element of $\qa$. Each pair $\at{s,q}$ corresponds to a maximal order $q \mo_s q^{-1}$. The sequence $\Adj_{i,j}$ describes all $\p_j$-neighbors of the maximal order $\mo_i$. 
\end{dfn}

For our goals it is important that we can compute the maximal orders adjacency description for a given quaternion algebra $\qa$ and a set $S$ of prime ideals of $\zf$. The algorithms for constructing it using algorithms from \cite{KV} as building blocks is shown on \fig{adj-structure}. It uses the following algorithms  as subroutines: 
\begin{itemize} \label{adj-subroutines}
	\item CONJ-CLASSES-LIST. Returns a list of representatives for the conjugacy classes of maximal orders in a quaternion algebra. Equivalent to finding representatives of one-sided ideal classes.  See also Problem 2.8 and related discussion in \cite{KV}. 
	\item IS-CONJUGATE. Equivalent to testing if a right ideal of quaternion algebra is principal. See Problem  2.7 in \cite{KV}. 
	\item ISOMORPHISM-GENERATOR. Equivalent to finding a generator of a principal right ideal of a quaternion algebra. See Problem 2.4 in \cite{KV}. Through the paper without loss of generality we assume this procedure returns integral quaternions. 
	\item P-NEIGHBORS. Computes the $\p$-neighbors of given maximal order $\mo$. First enumerates all right ideals of $\CM$ with norm $\p$ (using \lemm{prim-enum}) and then computes their left orders. The details are described in \cite{KV}. 
\end{itemize}
Implementation of all these algorithms (or equivalent) is available in MAGMA. 

\begin{figure}[ht]
\begin{algorithmic}[1]
\Statex 
\Require Set $S$ of prime ideal of $\zf$ 
\Procedure{MAX-ORDERS-ADJ}{$S$}
\State Let $\p_1,\ldots,\p_l$ be all elements of $S$
\Statex that does not divide  the  discriminant of $\qa$ \label{line:disc}
\State  $\Adj \gets $ empty maximal orders adjacency description 
\State $\mo_1, \ldots, \mo_m \gets$ \Call{CONJ-CLASSES-LIST}{} 
\ForAll{ $i = 1,\ldots,m$ }
\ForAll{ $j = 1,\ldots,l$ }
	\State $\mo'_1, \ldots, \mo'_{N(\p_j)+1} \gets$ \Call{P-NEIGHBORS}{$\mo_i$,$\p_j$ } \label{line:p-neighbors}
	\ForAll{$k = 1,\ldots,N(\p_j)+1$}
	\State Append \Call{CONJ-CLASS-DESCR}{$\mo'_k$} to $\Adj_{i,j}$
	\EndFor
\EndFor
\EndFor
\State \Return $\Adj$
\EndProcedure 
\Ensure Maximal orders adjacency description $\Adj$ (see \defin{adj-desc}) 
\Statex 

\Require Maximal order of quaternion algebra $\mo$
\Procedure{CONJ-CLASS-DESCR}{$\mo$}
\State $\mo_1, \ldots, \mo_m \gets$ \Call{CONJ-CLASSES-LIST}{}  \Comment See \ref{adj-subroutines}
\ForAll{ $s = 1,\ldots,m$}
\If{\Call{IS-CONJUGATE}{$\mo$,$\mo_s$}}  \Comment See \ref{adj-subroutines}
\State $q \gets \text{ISOMORPHISM-GENERATOR}\at{\mo,\mo_s}$  \Comment See \ref{adj-subroutines}
\State \Return $\at{s,q}$ 
\EndIf
\EndFor
\EndProcedure 
\Ensure $\at{s,q}$, where $s$ is the index of the maximal order $\mo$
\Statex in the sequence $\mo_1,\ldots,\mo_l$ output by 
\Statex \Call{CONJ-CLASSES-LIST}{}, and $\mo=q\mo_s q^{-1}$
\end{algorithmic}
\caption{\label{fig:adj-structure} Algorithm for computing the maximal orders adjacency description (see \defin{adj-desc}) and related sub-routine. }
\end{figure}

We sketch the proof of the correctness of the MAX-ORDERS-ADJ procedure. The crucial point here is that when $\p\nmid \disc(\qa)$ and $\CM$ is a maximal order of $\qa$, there is a one-to-one correspondence between $\p$-neighbors of $\mo$ and primitive maximal right ideals $I$ of $\mo$ with norm $\mfp$. Any such ideal $I$ corresponds to the $\p$-neighbor $\ordl{I}$ of $\mo$. \propos{prim} shows how to find the ideal corresponding to a given $\p$-neighbor of $\mo$. To summarize (keeping the notation from \defin{adj-desc}), it is sufficient to enumerate all primitive right $\mo_i$-ideals with norm $\p_j$ to compute $\Adj_{i,j}$. This is precisely what is done on line~\ref{line:p-neighbors} in \fig{adj-structure}. 

\subsubsection{Reduction of the factorization question to the description of principal two-sided ideals} \label{sec:reduction-to-two-sided} Now we are going to show that, using the maximal orders adjacency description, we can reduce the factorization of an arbitrary $q \in \mos$ to the factorization of elements $q_r$ such that $q_r \mo q^{-1}_r = \mo$. Such elements $q_r$ are generators of two-sided principal ideals $q_r \mo$. 

We are going to show that any $q$ from $\mos$ can be written as $q_1\ldots q_n q_r$ for $q_r$ having properties mentioned above and with $q_1,\ldots,q_n$ belonging to the set $\gone$. We define the set keeping the notation used in \defin{adj-desc}. Let $i_0,q_\mo$ be such that $\mo = q_{\mo} \mo_{i_0} q_{\mo}^{-1}$ and define
\begin{equation}\label{eq:g-one}
\gone = \set{ q_{\mo} q q_{\mo}^{-1} : \at{s,q} \in \Adj_{i,j}, i = 1,\ldots,m, j = 1,\ldots,l }.
\end{equation}
The set $\gone$ can be easily computed given $\Adj$ output by MAX-ORDERS-ADJ and using CONJ-CLASS-DESCR from \fig{adj-structure}. 

The notion of $\p$-adic valuations $v_{\p}$ of number field elements and ideals is crucial for counting the number of terms in factorization of $q$. We also define $v_S = \sum_{\p \in S}v_{\p}$. We use it to state the main result of this part:     

\begin{thm} \label{thm:exact-one} Let $q$ be an element of $\mos$ and let $S_1 \subset S$ contain all prime ideals of $S$ that are co-prime to the discriminant of $\qa$. There exists a factorization $q = q_1 \ldots q_n q_{rem}$ such that $n \le v_{S_1}\at{\nrd\at{q}}$, $q_1,\ldots,q_n \in \gone$ and $q_{rem}\mo$ is a two-sided ideal. More precisely, $n = v_{S_1}(I)$, where $I$ is the unique primitive right ideal of $\mo$ contained in $q\mo$.  There exist algorithms for finding the factorization $q_1 \ldots q_n q_{rem}$ (EXACT-SYNTHESIS-STAGE-1 on \fig{exact-synthesis-stage-1}) and for computing the set $\gone$.  
\end{thm}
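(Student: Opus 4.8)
The strategy is to replace the element $q$ by the right ideal $q\mo$, factor that ideal, and then lift the factorization back to quaternions using the precomputed adjacency data; the unit and normalizer ambiguity — invisible at the level of $q\mo$ since $qu\mo = q\mo$ — will be collected into the trailing factor $q_{rem}$. First I would apply \propos{prim} to $q\mo$ to obtain the unique \prim{} right ideal $I$ of $\mo$ with $q\mo\subseteq I$ and $\ordl{I}=\ordl{q\mo}=q\mo q^{-1}$; then $T:=I^{-1}(q\mo)$ is an integral two-sided $\mo$-ideal and $q\mo = I\,T$. The crucial claim, which I will prove next, is that $\nrd(I)$ is coprime to $\disc(\qa)$: if a prime $\p$ divided both $\nrd(I)$ and $\disc(\qa)$, then $\qa_\p$ would be a division algebra and $\mo_\p$ its unique maximal order — a local ring whose one-sided ideals are all powers of its maximal ideal — so $I_\p$ would be a nontrivial power of that maximal ideal, forcing $I\subseteq\P$ for the prime $\P$ of $\mo$ over $\p$, contradicting primitivity of $I$. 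Since $q\in\mos$, every prime dividing $\nrd(q)$ lies in $S$; together with the above, every prime dividing $\nrd(I)$ lies in $S_1$.

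Next I would invoke \theo{factor} to write $I = I_1\cdots I_n$ as a proper product of maximal integral ideals, where $n$ is the composition length of the left $\mo$-module $\mo/I$. By \theo{maxid} each $\nrd(I_j)=\p_j$ is prime, and by the previous paragraph each $\p_j$ is coprime to $\disc(\qa)$ and lies in $S_1$. Multiplicativity of the reduced norm on proper products gives $\nrd(I)=\p_1\cdots\p_n$, so $v_{S_1}(I)=v_{S_1}(\nrd(I))=n$ and hence $v_{S_1}(\nrd(q))=v_{S_1}(\nrd(I))+v_{S_1}(\nrd(T))\ge n$, the asserted bound. The factorization also produces a chain of maximal orders
\[
\mo_{(0)}:=\ordl{I}=q\mo q^{-1},\quad \mo_{(j)}:=\ordr{I_j}\ (1\le j\le n-1),\quad \mo_{(n)}:=\ordr{I}=\mo,
\]
along which $\mo_{(j-1)}$ and $\mo_{(j)}$ are $\p_j$-neighbors (via $I_j$), with each $\p_j$ coprime to $\disc(\qa)$.

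Now I would lift this chain to quaternions using the maximal orders adjacency description (\defin{adj-desc}, computed by MAX-ORDERS-ADJ in \fig{adj-structure}). Writing $\mo = q_{\mo}\mo_{i_0}q_{\mo}^{-1}$, I build inductively quaternions $g_n,\dots,g_0$ and indices $i(n),\dots,i(0)$ with $\mo_{(j)} = g_j\mo_{i(j)}g_j^{-1}$: put $g_n = q_{\mo}$, $i(n)=i_0$; given $g_j$, the order $g_j^{-1}\mo_{(j-1)}g_j$ is a $\p_j$-neighbor of $\mo_{i(j)}$, so — because $\p_j$ is coprime to $\disc(\qa)$ — it occurs as $q'\mo_s(q')^{-1}$ for some pair $(s,q')$ in $\Adj_{i(j),j'}$ with $\p_{j'}=\p_j$; set $i(j-1)=s$, $g_{j-1}=g_j q'$, and record $q_j:=q_{\mo}\,q'\,q_{\mo}^{-1}$, which lies in $\gone$ by \eq{g-one}. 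Telescoping gives $q_n q_{n-1}\cdots q_1 = g_0 q_{\mo}^{-1}$, hence $(q_n\cdots q_1)\,\mo\,(q_n\cdots q_1)^{-1} = g_0\mo_{i_0}g_0^{-1} = \mo_{(0)} = q\mo q^{-1}$ (using $i(0)=i_0$, which holds since $\mo_{(0)}$ is conjugate to $\mo_{i_0}$). Therefore $q_{rem}:=(q_n\cdots q_1)^{-1}q$ normalizes $\mo$, so $q_{rem}\mo$ is a two-sided $\mo$-ideal, and relabelling $q_j\mapsto q_{n+1-j}$ gives $q = q_1\cdots q_n q_{rem}$ with $q_1,\dots,q_n\in\gone$ and $n = v_{S_1}(I)$.

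Every step is effective, which yields the algorithms: $\gone$ is read off from \eq{g-one} once MAX-ORDERS-ADJ has run, and EXACT-SYNTHESIS-STAGE-1 factors $\nrd(q)$ over $S$, computes $I$ by PRIMITIVE-IDEAL of \fig{prim}, factors $I = I_1\cdots I_n$ by repeatedly choosing a prime $\p$ dividing the norm of the current ideal and using \lemm{prim-enum} to locate the unique maximal right ideal of norm $\p$ containing it — uniqueness because primitivity makes the relevant local module cyclic, and primitivity is preserved when such a factor is divided out — and then walks the chain of orders, using CONJ-CLASS-DESCR, IS-CONJUGATE and ISOMORPHISM-GENERATOR to identify the pairs $(s,q')$ in $\Adj$ and assemble $q_1,\dots,q_n,q_{rem}$ as above. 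I expect the main obstacle to be the bookkeeping in the lifting step — keeping left and right orders aligned along the chain and through the conjugations by $q_{\mo}$ and the $g_j$, and checking that the discrepancy between $q$ and $q_n\cdots q_1$ is genuinely in the normalizer of $\mo$ (equivalently, that $q_{rem}\mo$ is two-sided) rather than merely a one-sided ideal; the short coprimality argument of the first paragraph is also essential, since it is what places every neighbor step at a prime covered by the adjacency description and makes $n$ equal $v_{S_1}(I)$ exactly.
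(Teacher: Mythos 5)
Your proof is correct and follows essentially the same route as the paper: pass from $q$ to the primitive right ideal $I\supseteq q\mo$, factor $I$ into maximal ideals via Theorem~\ref{thm:factor} (noting that primitivity forces every factor's prime norm into $S_1$), and lift the resulting chain of $\p_j$-neighboring maximal orders to elements of $\gone$ via the adjacency description, with the residual normalizer of $\mo$ absorbed into $q_{rem}$. The only real difference is presentational — the paper phrases this as a correctness proof of \textsc{Exact-Synthesis-Stage-1} and \textsc{Reduced-Ideal}, whereas you give the factorization all at once — and your explicit local argument (at a ramified prime $\mo_\p$ is the unique maximal order of a division algebra, so $\p\mid\nrd(I)$ would force $I\subseteq\P$) is a welcome elaboration of the paper's terse remark that the converse ``would contradict $I$ being primitive.''
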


\begin{figure}
\begin{algorithmic}[1]
\Require Maximal order $\mo$ of $\qa$, set $S$ of prime ideals of $\zf$,
\Statex an element $q$ of $\mo$ such that $\nrd\at{q}\zf$ factors into ideals from $S$ 

\Procedure{EXACT-SYNTHESIS-STAGE-1}{$\mo$,$S$,$q$}

\State $\at{s_0,q_0} \gets $ \Call{CONJ-CLASS-DESCR}{$\mo$}, $s \gets s_0$
\State $I \gets q^{-1}_0 \text{PRIMITIVE-IDEAL}\at{q\mo}  q_0 $ \Comment $\ordr{I} = \mo_{s}$
\State $C' \gets $ empty sequence of elements of $\qa$
\While{ $\ordr{I} \ne \ordl{I} $}
\State $\at{s,I,q_c} \gets$ \Call{REDUCED-IDEAL}{$s$,$I$,$S$} \Comment $\ordr{I} = \mo_{s}$
\State Add $q_c$ to the end of $C'$
\EndWhile
\State Let $C' = q'_1,\ldots, q'_n $
\State Let $C = q_0 q'_1 q_0^{-1} ,\ldots, q_0 q'_n q_0^{-1} = q_1, \ldots, q_n $ \Comment Sequence of elements of $\gone$
\State $q_{rem} \gets  q_n^{-1} \ldots q_1^{-1} q $ \Comment Now $q_{rem} \mo q_{rem}^{-1} = \mo$
\State \Return $\at{C,q_r}$
\EndProcedure 
\Ensure Sequence $C = q_1,\ldots,q_n$ and quaternion $q_r$ such that  $q_1 \ldots q_n q_r = q $ where 
\Statex $q_i$ are from $\gone$ and $q_r\mo$ is a two-sided $\mo$-ideal
\Statex 
\Require Primitive right $\mo$-ideal $I$ such that its $\nrd$ factors into elements of $S$,
\Statex set $S$ of prime ideals of $\zf$
\Statex where $s$ is an index of $\mo$ in the sequence output 
\Statex by CONJ-CLASSES-LIST
\Procedure{REDUCED-IDEAL}{$s$,$I$,$S$}
\State Let $S_1 = \set{\p_1,\ldots,\p_l}$ be all elements of $S$
\Statex that does not divide  the  discriminant of $\qa$ 
\ForAll{ $k = 1,\ldots,l$}
\State $N \gets $All right $\mo_{s}$-ideals with norm $\p_l$ \Comment Precomputed for all $s,l$
\ForAll{ $I' \in N$ }
\If{$ I \subset I' $}
\State $I \gets I \at{I'}^{-1}$
\State $\at{s,q_c}\gets $ \Call{CONJ-CLASS-DESCR}{$\ordr{I}$}
\State \Return $\at{s, q^{-1}_c I q_c ,q_c}$ 
\EndIf
\EndFor 
\EndFor
\State \Return $\at{0,I,0}$ \Comment We prove that this point is never reached
\EndProcedure
\Ensure $\at{s,I_{r},q_c}$, where $I'$ is a right $\mo$-ideal such that $s$ 
\Statex is an index of $\mo$ in the sequence output by CONJ-CLASSES-LIST,
\Statex ideal $q_c I_{r} q^{-1}_c$ is a right ideal that contains $I$, has the same
\Statex left order as $I$ and $v_{S_1}\at{\nrd\at{I_{r}}} = v_{S_1}\at{\nrd\at{I}}-1 \ge 0$. 
\Statex If $v_{S_1}\at{\nrd\at{I'}} >0 $, the ideal $I'$ is primitive, $I'$ is an order otherwise.  
\end{algorithmic}
\caption{\label{fig:exact-synthesis-stage-1} Reduction of factorization of $q$ from $\mos$ to factorization of elements that generate two-sided $\mo$-ideals}
\end{figure}

\begin{proof} To prove the theorem we will prove the correctness of EXACT-SYNTHESIS-STAGE-1 on \fig{exact-synthesis-stage-1}. We first prove that EXACT-SYNTHESIS-STAGE-1 is correct under the assumption that REDUCED-IDEAL is correct and then show the correctness of REDUCED-IDEAL.

First note that the {\bf while} loop of the procedure terminates in $n$ steps. This is the case because REDUCED-IDEAL reduces $v_{S_1}\at{\nrd\at{I}}$ by one. Recall that $\mo_1,\ldots,\mo_m$ are representatives of the conjugacy classes of maximal orders of $\qa$. Each time before and after the execution of the {\bf while} loop body the following is true: 
\[
 \ordr{I}=\mo_s, \ordl{I} = q' \mo \at{q'}^{-1}, q' = \at{q'_0 q'_1 \ldots q'_k }^{-1}q,\text{ where } C'=q'_1,\ldots,q'_k 
\]
After the execution of the while loop we have $\ordl{I} = \ordr{I}$, and they must be equal to $\mo_{s_0}$ because $\ordl{I}$ is conjugate of $\mo_{s_0}$. We get the following for $C'=q'_1,\ldots,q'_n$:
\[
 \mo_{s_0} = q' \mo \at{q'}^{-1}, q' = \at{q_0 q_1 \ldots q_n }^{-1}q, \,\,\,\,\,\mo = q_0 q' \mo \at{q'}^{-1} q^{-1}_0
\]
Next we rewrite $q_0 q'$ as $\at{ q_0 q'_n q^{-1}_0 }^{-1} \ldots \at{ q_0 q'_1 q^{-1}_0 }^{-1} q $ and see that this element generates a two-sided $\mo$-ideal.  Also note that all elements $q_0 q'_k q^{-1}$ are from $\gone$ by construction. From this we conclude that the output of the algorithm is correct and is a factorization of $q$ described in the statement of the theorem.  

Let us now show that procedure REDUCED-IDEAL on \fig{exact-synthesis-stage-1} works as specified. Let $I$ be a primitive $\mo_s$-ideal input to REDUCED-IDEAL. According to \theo{factor} there exists a factorization of $I$ into a product $I_1 \cdots I_n$ of maximal ideals. Each of maximal ideals $I_k$ must be primitive and have norm from $S_1$, because the converse would contradict to $I$ being a primitive ideal. Moreover, $I_n$ must be from the set $N$. Therefore there exist at least one element $I'$ of $N$ such that $I \subset I'$. Note that  $I \subset I'$ implies that there exits an integral ideal $J$ such that $I = J I'$. We see that $J = I (I')^{-1}$. Equality $v_{S_1}\at{\nrd\at{I'}}=1$ and multiplicativity of $\nrd$ implies that  $v_{S_1}\at{\nrd\at{J}} = v_{S_1}\at{\nrd\at{I}} -1$. The norm of $J$ factors into elements of $S_1$, therefore $J$ can be an order only in the case when  $v_{S_1}\at{\nrd\at{J}}=0$. All other claimed properties of the output of REDUCED-IDEAL can be easily checked by inspection.  
\end{proof}

Let us now consider the restriction on the $\nrd$ of $q_{rem}$. Condition $q_{rem} = \at{q_1 \ldots q_n}^{-1}q$ implies that $\nrd\at{q_{rem}}\zf$ factors into ideals from the following set: 
\begin{equation} \label{eq:s-ext}
S_{ext} = \set{ \p : \p \text{ divides } \nrd\at{q}\zf, q \in \gone } \cup S
\end{equation}

Note that there is some freedom in choosing elements $\gone$.  Recall that any element of $\gone$ has form $q_0 q q^{-1}_0$. The quaternion $q$ comes from some pair $(s,q)$ of $\Adj_{i,j}$ ($\Adj$ is the adjacency structure of maximal orders). The quaternion $q_0$ is chosen such that $q_0 \mo q^{-1}_0=\mo_k$. Maximal order $\mo_k$ is an element of the list for $\mo_1,\ldots,\mo_m$ of representatives of conjugacy classes of $\qa$ fixed throughout this section. Note that according to the definition of $\Adj$ we can replace $(s,q)$ with $(s,q')$ if $q \mo_s q^{-1} = q' \mo_s q'^{-1}$. In other words, pairs $(s,q)$ and $(s,q')$ are equivalent if $q^{-1} q'$ generates a principal two-sided ideal of $\mo_s$. 

Using discussed freedom of choosing elements of $\gone$ one might try to make the set $S_{ext}$ as small as possible. Consider first the situation when two-sided ideal class group is trivial. This means that we can always choose pair $(s,q)$ such that ideal $q\mo_s$ is primitive and has norm coprime to the discriminant of $\qa$. Indeed if $q \mo_s$ is not primitive we can always write it as $I J$ for $J$ being a two-sided $\mo_s$-ideal and $I$ being primitive ideal. Under our assumptions, $J$ is a primitive ideal, therefore $I$ is also primitive. We then replace $\at{s,q}$ with $\at{s,q'}$ where $q'\mo_s = I$.

In general, the two-sided ideal class group is a finite Abelian group. In this case, for each pair $(s,q)$ from $\Adj_{i,j}$ one might compute a primitive $\mo_s$-ideal $I_q$ such that $q\mo_s = I_q J_q$ and try to optimize over possible choices of two-sided ideals $J'_q$ with the same image in the two-sided class group as $J_q$ and such that the norm of $J'_q$: (a) has small number of factors and (b) is co-prime to discriminant of $\qa$. We will see later in this section why it is useful to ensure that $\nrd{q}$ in pairs $\at{s,q}$ is coprime to the the discriminant of $\qa$.

\subsubsection{Factorization of generators of two-sided ideals} In this section we will study the factorization of two-sided ideals with norm supported in $S$. Out goal is to prove the following result: 
\begin{prop} \label{prop:two-sided-dec} Let $q$ generate a two-sided ideal of $\CM$ whose norm is supported in $S$. Then $q$ can be written as $q_1 \ldots q_n u \alpha$, where $q_1,\ldots,q_n$ are from the finite set 
$\gtwo$,
$\alpha \in \nf^\times$ and $u \in\mo^\times$. There exist algorithms for finding the factorization and for computing the set $\gtwo$. 
\end{prop}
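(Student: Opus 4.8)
The plan is to reduce the factorization of a generator $q$ of a two-sided $\CM$-ideal to two separate tasks: (a) factoring the two-sided ideal $q\CM$ itself into prime two-sided ideals of $\CM$, and (b) accounting for the ambiguity in lifting such a prime-ideal factorization back to a product of quaternions, which lives in $\CM^\times$ together with a scalar from $\nf^\times$. Recall from the preliminaries (Theorem~22.10 of \cite{R}) that the two-sided $\CM$-ideals form an abelian group that factors uniquely into products of prime ideals $\P_\mfp$ of $\CM$ and their inverses, and by \theo{disc} each such $\P_\mfp$ satisfies $\P_\mfp^{m(\mfp)} = \mfp\CM$. Since $q\CM$ is integral (as $q \in \CM$) with norm supported in $S$, we get $q\CM = \prod_{\mfp \in S} \P_\mfp^{e_\mfp}$ for nonnegative integers $e_\mfp$; write $S = S_{ram} \sqcup S_{unr}$ according to whether $\mfp \mid \disc(\qa)$. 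For $\mfp \in S_{unr}$ we have $m(\mfp)=1$, so $\P_\mfp = \mfp\CM$ is already generated by a scalar; for $\mfp \in S_{ram}$, $m(\mfp) = 2$ and $\P_\mfp$ is a genuinely ``fractional-power'' contribution that must be handled by a finite generating set.

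The key structural input is that the \emph{two-sided ideal class group} of $\CM$ is a finite abelian group (this is a standard consequence of finiteness of the class number of $\qa$; see the discussion around Problem~2.8 in \cite{KV} and Theorem~22.10 of \cite{R}). Concretely: the group $\mathcal{G}$ generated by the classes of the $\P_\mfp$ for $\mfp \in S$ modulo principal two-sided ideals is finite. Choose, once and for all, a finite set $T$ of prime ideals of $\zf$ (including $S_{ram}$) whose classes generate $\mathcal{G}$, and for each $\mfp \in T$ fix a quaternion $g_\mfp$ and a word $w_\mfp$ in the $\P_{\mfp'}$ such that $\P_\mfp \cdot \big(\prod w_\mfp\big)$ is principal generated by $g_\mfp$. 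Since $q\CM$ is itself principal, its image in $\mathcal{G}$ is trivial, so the multiset of $\P_\mfp$ occurring in $q\CM$ can be grouped into products of these chosen principal ``relations''. Carrying this out gives $q\CM = g_1 \cdots g_n \cdot (\beta\CM)$ where $g_1,\dots,g_n$ are taken from the finite set $\gtwo := \{g_\mfp : \mfp \in T\} \cup \{g_\mfp^{-1}\}$ together with finitely many generators of $\nf^\times/(\nf^\times)$-ambiguity, and $\beta \in \nf^\times$ accounts for the leftover scalar ideal. The point where the scalars $\alpha \in \nf^\times$ and units $u \in \CM^\times$ enter is that passing from an equality of \emph{ideals} $q\CM = (g_1\cdots g_n \beta)\CM$ to an equality of \emph{elements} only determines $q(g_1\cdots g_n\beta)^{-1}$ up to a unit of $\CM$, and $\beta$ itself is only determined up to a unit of $\zf$; absorbing all of this yields $q = q_1\cdots q_n\, u\, \alpha$ as claimed. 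The algorithm is then: factor $\nrd(q)\zf$ (equivalently $q\CM$) into prime ideals of $\zf$; use \lemm{prim-enum} / the explicit description of $\P_\mfp$ to write down each prime factor; solve the discrete-logarithm problem in the finite group $\mathcal{G}$ to express $q\CM$ as a product of the precomputed principal relations; and finally divide out to recover the unit-and-scalar remainder. Precomputing $\gtwo$ uses the subroutines CONJ-CLASSES-LIST, IS-CONJUGATE and ISOMORPHISM-GENERATOR of \fig{adj-structure} applied to the (finitely many) prime two-sided ideals with norm in $S$.

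The main obstacle I expect is the bookkeeping for the ramified primes and for generators of the principal two-sided ideal group over $\zf$. Unlike the unramified case, where $\P_\mfp = \mfp\CM$ is visibly a scalar ideal, a ramified $\P_\mfp$ has $\nrd(\P_\mfp) = \mfp$ with $\mfp\mid\disc(\qa)$, so a single quaternion generator of $\P_\mfp$ (when the class is principal) is a genuine new element of $\gtwo$ — and when the class is \emph{not} principal one must combine several $\P_\mfp$'s, which is exactly the discrete-log step in $\mathcal{G}$ and requires knowing the structure of this finite group. A secondary subtlety is ensuring the scalar part is genuinely pulled out into $\nf^\times$: the group of principal scalar two-sided ideals $\{\alpha\CM : \alpha \in \nf^\times\}$ modulo those coming from $\zf^\times$ is governed by the class group of $\zf$ restricted to $S$, so one must also include a finite set of generators for the relevant $S$-class-group relations among principal ideals $\mfp^{m(\mfp)}\CM = \mfp\CM\cdot(\text{unramified part})$ — these, too, get thrown into $\gtwo$. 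Once these two finite groups (the two-sided ideal class group of $\CM$ and the piece of the $\zf$-class group supported on $S$) are made explicit, the rest is routine multiplicativity of $\nrd$ and the unique-factorization statement for two-sided $\CM$-ideals.
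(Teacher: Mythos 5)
Your overall route is the same as the paper's: write $q\CM$ as a product $\prod_{\p\in S}\P_\p^{e_\p}$ of prime two-sided ideals of $\CM$ (after normalizing $q$), and then use finiteness of the two-sided ideal class group to reduce to a finite set of generators. But there is a real gap in the reduction step. When you say that since $q\CM$ has trivial class, ``the multiset of $\P_\p$ occurring in $q\CM$ can be grouped into products of these chosen principal relations,'' you are implicitly assuming that the relations you fixed (one per prime --- after unwinding, essentially that some power $\P_\p^{h(\p)}$ is principal) generate the \emph{entire} lattice of exponent vectors $(e_\p)$ for which $\prod_\p\P_\p^{e_\p}$ is principal. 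They do not, in general: these ``diagonal'' relations always span a full-rank sublattice, but typically a proper one, so there exist principal products $\prod\P_\p^{e_\p}$ whose exponent vectors are not nonnegative integer combinations of your chosen relations. Consequently your set $\gtwo=\set{g_\p}\cup\set{g_\p^{-1}}$ is too small; the phrase ``finitely many generators of $\nf^\times/(\nf^\times)$-ambiguity'' that you append reads like a typo and does not name what is actually missing. The paper closes exactly this gap: besides the period generators $\tilde q_k$ of $\P_k^{h(k)}$, it puts into $\gtwo$ a generator of \emph{every} principal two-sided ideal $\p_1^{c(1)}\cdots\P_{l+n}^{c(l+n)}$ with all $c(k)\le h(k)$. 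This ``box'' of bounded-exponent principal products is precisely a set of representatives for the quotient of the full relation lattice by the diagonal sublattice, which is what makes the factorization terminate with a remainder drawn from a finite list.

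Two smaller points. First, the paper's opening move $q=q_0\alpha$ with $\alpha\in\nf^\times$ and $q_0\in\mo$ integral is not optional: the hypothesis only says the two-sided ideal $q\mo$ has norm supported in $S$, not that $q$ is integral, so without this normalization your exponents $e_\p$ can be negative and your ``nonnegative integers $e_\p$'' claim fails. Second, once the remainder set is added, the rest of your argument (scalar ambiguity in $\nf^\times$, the unit $u\in\CM^\times$ arising from passing from an equality of ideals to an equality of elements, algorithmics via factoring $\nrd(q)\zf$ and solving discrete logs in the finite two-sided class group) matches the paper's and is fine.
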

\begin{proof} 
We begin by writing $q  = q_0 \al$, where $\al \in F^\times$ is such that $q_0$ is an integral element of $\qa$ with norm supported in $S$ (i.e.\ an $S$-unit).
Let $\p_1,\ldots,\p_l$ be the elements of $S$ coprime to the discriminant of $\qa$ and let $\p_{l+1},\ldots,\p_{l+n}$ be the ones that divide the discriminant of $\qa$. For $k=l+1,\ldots,l+n$, let $P_k$ be the prime ideals of $\mo$ such that $\P^2_k = \p_k\CM$.  We can write the ideal $q_0 \mo $ as
\[
 \p^{c(1)}_1 \cdots \p^{c(l)}_l \P^{c(l+1)}_{l+1} \cdots \P^{c(l+n)}_{l+n}. 
\] 
 For $k=1,\ldots,l$, let $h(k)$ be the smallest positive integer such that $\p_k^{h(k)}$ is principal and such that the ideal  $\P_k^{h(k)}=\tilde{q}_k\mo$ is principal. Then we can define:
\begin{align} \label{eq:two-sided-gen}
\gtwo = & \set{ \tilde{q}_k : k = l+1, \ldots, l + n }  \cup &  \\ 
& \set{ q : q\mo = \p^{c(1)}_1 \ldots \p^{c(l)}_l \P^{c(l+1)}_{l+1} \ldots \P^{c(l+n)}_{l+n} , c(k) \le h(k) }. & \nonumber 
\end{align}
It is easy to see that any two-sided integral ideal can be written as a product of elements $q\mo$ times some element of $\zf$. If two quaternions generate the same two-sided ideal of $\mo$  they can only differ by unit of $\mo$. This concludes the proof.    
\end{proof}

When the two-sided ideal class group of $\qa$ is trivial, the above proposition can be simplified as follows: 

\begin{prop} Let $\P_1,\ldots,\P_n$ be prime ideals with norm dividing the discriminant of $\qa$ and from $S$. Let $q_1,\ldots,q_n$ be generators of $\P_1,\ldots ,\P_n$. Any element that generates a two-sided ideal with norm supported in $S$ can be written as a product $\alpha q_1^{c(1)}\ldots q_n^{c(n)}u$, where $c(k) \in {0,1}$, $\alpha \in \nf^\times$ and $u \in \mo^\times$. 
\end{prop}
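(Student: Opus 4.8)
The plan is to read off this statement as a simplification of \propos{two-sided-dec} once its hypothesis is unpacked. First I would record the consequences of the hypothesis: if the two-sided ideal class group of $\qa$ is trivial, then every prime two-sided $\mo$-ideal is principal, and the natural homomorphism $\p\mapsto[\p\mo]$ from $\Pic(\zf)$ to the two-sided ideal class group is injective, so every prime ideal of $\zf$ is principal as well. Keeping the notation from the proof of \propos{two-sided-dec}, let $\p_1,\dots,\p_l$ be the primes of $S$ coprime to $\disc(\qa)$ and $\p_{l+1},\dots,\p_{l+n}$ the ones dividing it; after relabeling indices, the $\P_i$ and $q_i$ of the present statement are the primes of $\mo$ over $\p_{l+1},\dots,\p_{l+n}$ together with chosen generators. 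Fix generators $\p_i=(\pi_i)$ for $i\le l$. Two elementary facts will drive all the bookkeeping: each $q_i$ normalizes $\mo$, since it generates the two-sided ideal $\P_i$ (so $q_i\mo=\mo q_i$); and $q_i^2$ generates $\P_i^2=\p_{l+i}\mo=\pi_{l+i}\mo$, hence $q_i^2=\pi_{l+i}u_i$ for some $u_i\in\mo^\times$.

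Next I would apply \propos{two-sided-dec} to $q$, which yields $q=w_1\cdots w_m\,u\,\alpha$ with each $w_j$ in the set $\gtwo$ of \eq{two-sided-gen}, $u\in\mo^\times$, and $\alpha\in\nf^\times$. Under the triviality hypothesis every exponent bound $h(k)$ occurring in \eq{two-sided-gen} equals $1$. Consequently each $w_j$ is either a unit multiple of one of $q_1,\dots,q_n$, or satisfies $w_j\mo=\prod_{i\le l}\pi_i^{a_i}\mo\cdot\prod_{i\le n}\P_{l+i}^{b_i}$ with all $a_i,b_i\in\set{0,1}$; in the latter case $w_j\mo=\at{\prod_{i\le l}\pi_i^{a_i}\prod_{i\le n}q_i^{b_i}}\mo$, so $w_j=\beta_j v_j u'_j$ with $\beta_j\in\nf^\times$, with $v_j$ a product of the $q_i$ to powers $0$ or $1$, and with $u'_j\in\mo^\times$. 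In every case $w_j$ therefore has the shape $\beta_j v_j u'_j$.

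Finally I would collect terms in $q=w_1\cdots w_m\,u\,\alpha$. The scalars $\beta_j,\alpha\in\nf^\times$ are central and can be pulled to the front; since each $v_j$ is a word in the $q_i$, which normalize $\mo$, any unit can be transported to the right past the $v_j$'s and the central factors, being replaced each time by a conjugate unit, so all units gather at the far right. This produces $q=\gamma\cdot W\cdot u''$ with $\gamma\in\nf^\times$, $u''\in\mo^\times$, and $W$ a word in $q_1,\dots,q_n$. Because the $\P_i$ commute in the abelian group of two-sided $\mo$-ideals and are principal, $q_iq_{i'}$ and $q_{i'}q_i$ differ by a unit, so $W$ may be rearranged into $q_1^{e_1}\cdots q_n^{e_n}$ after bubbling the new units rightward; then $q_i^2=\pi_{l+i}u_i$ lets us reduce each $e_i$ modulo $2$, absorbing the resulting powers of $\pi_{l+i}$ into the central factor and the emerging units into the right-hand unit. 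The outcome is $q=\alpha'\,q_1^{c(1)}\cdots q_n^{c(n)}\,u'$ with $c(i)\in\set{0,1}$, $\alpha'\in\nf^\times$, and $u'\in\mo^\times$, which is the asserted form. The one step that goes beyond \propos{two-sided-dec} and routine manipulation is the injectivity of $\Pic(\zf)$ in the two-sided ideal class group, which is what forces the split primes of $S$ to be principal; this is standard (one sees it by passing to the commutative quadratic subfield $\nf(x)$ generated by a would-be generator $x$ of $\p\mo$, in which $x$ and $\bar x$ differ by a unit), and it is the only place where the hypothesis does more than \propos{two-sided-dec} already provides.
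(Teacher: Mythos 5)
The approach you take --- reduce to \propos{two-sided-dec} and unpack $\gtwo$ under the triviality hypothesis --- is exactly what the paper intends; this proposition is stated immediately after \propos{two-sided-dec} without a separate proof, as a ``simplification''. Your bookkeeping in the last two paragraphs (pulling central scalars to the front, transporting units rightward past the normalizing generators $q_i$, commuting the $q_i$ up to units because the two-sided ideals $\P_i$ commute, and reducing $q_i^2$ using $\nrd(q_i)\zf = \p_{l+i}$) is sound, and you correctly observe that for the \emph{ramified} primes principality of $\p_{l+i}$ in $\zf$ comes for free from $\nrd(q_i)$.

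The gap is the claim that triviality of the two-sided ideal class group forces the \emph{split} primes of $S$ to be principal in $\zf$, which you assert via injectivity of $\Pic(\zf)\to\Picent(\mo)$. That map is \emph{not} injective in general. Your parenthetical sketch only shows that a would-be generator $x$ of $\p\mo$ makes $\p$ capitulate in the quadratic subfield $K=\nf(x)$: since $\bar x/x\in\mo^\times\cap K$ the ideal $xR_K$ is Galois-stable and one gets $xR_K=\p R_K$, i.e.\ $\p$ becomes principal \emph{after extending to $R_K$}, which is strictly weaker than $\p$ principal in $\zf$ (the norm-one unit $\bar x/x$ need not be of the form $\bar y/y$ with $y$ an \emph{integral} unit). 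In fact one can compute the kernel directly: an ideal $\a$ of $\zf$ has $\a\mo$ principal (as a right ideal, hence automatically two-sided since $\a\mo$ is two-sided and a generator then normalizes $\mo$) iff the rank-two lattice it cuts out is free, which happens iff $\a^2$ is principal; already for $B=M_2(\nf)$ this gives $\ker\bigl(\Pic(\zf)\to\Picent(\mo)\bigr)=\Pic(\zf)[2]$, so $\Picent(\mo)$ can be trivial while $\Pic(\zf)$ is not. Thus the step ``so every prime ideal of $\zf$ is principal as well'' does not follow from the stated hypothesis. What this really shows is that the proposition as written carries an implicit additional hypothesis --- that the split primes of $S$ are principal (equivalently $h(k)=1$ for $k\le l$ in the notation of \propos{two-sided-dec}) --- which holds in every application in the paper because there $\Pic(\zf)=1$, but is not a formal consequence of ``the two-sided ideal class group is trivial''. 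Once you either add that hypothesis or restrict to $\Pic(\zf)=1$, the remainder of your argument goes through.
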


Based on the discussion in the end of \sec{reduction-to-two-sided} we can also make $S_{ext} \subset S$.  This simplified situation will frequently appear in our applications. This leads to the simple algorithm TWO-SIDED-DECOMPOSE, which decomposes generators of two-sided ideals and is shown in \fig{exact-synthesis}. 

\subsubsection{Unit group elements factorization\label{sec:unit-groups}}  To the best of our knowledge the algorithmic aspect of this question is well-studied in two following situations: 
\begin{enumerate}
	\item $F$ is a totally real number field, and $\qa$ is a totally definite quaternion algebra. 
	\item $F$ is a totally real number field, and $\qa$ is split in exactly one real place.  
\end{enumerate}
In both cases there are algorithms for computing the unit group of a maximal order, for finding its generators and for solving the word problem in terms of generators \cite{KV,V}. Implementations of related algorithms are available in MAGMA. In our pseudo-code we refer to procedures for decomposing units as UNIT-DECOMPOSE. 

\subsubsection{Summary} Now we summarize results obtained above and prove \theo{factor-1}. We also combined algorithms discussed above into the algorithm for finding a factorization. In \fig{exact-synthesis-two} we show the pseudo-code for the exact synthesis algorithm in the case when two-sided ideal class group is trivial. It is not difficult to obtain an algorithm for more general cases based on results of this section. 

\begin{figure}
\begin{algorithmic}[1]
\Require Maximal order $\mo$ of $\qa$, set $S$ of prime ideals of $\zf$,
\Statex an element $q$ of $\mo$ such that $\nrd\at{q}\zf$ factors into ideals from $S$ 
\Statex (assumes that two side ideal class group of $\qa$ is trivial)

\Procedure{EXACT-SYNTHESIS-2}{$\mo$,$S$,$q$}
\State $\at{C,q} \gets$ \Call{EXACT-SYNTHESIS-STAGE-1}{$\mo$,$S$,$q$}
\State Add \Call{TWO-SIDED-DECOMPOSE}{$q$,$\mo$,$S$} to the end of $C$  \Comment \fig{exact-synthesis}
\State \Return C
\EndProcedure 
\Ensure Outputs  $q_1 \ldots q_m u_1 \ldots u_n = q $ where 
\Statex $q_i$ are from $\gext$ and $u_j$ are generators of the unit group of $\mo$ 
\end{algorithmic}
\caption{\label{fig:exact-synthesis-two} Algorithm for finding the factorization of elements of $\mos$ described in \theo{factor-1} when the case of trivial two-sided ideal class group. }
\end{figure}

\begin{proof}[Proof of Theorem~\ref{thm:factor-1}] To prove the theorem, we combine results of \theo{exact-one} and \propos{two-sided-dec}. For the decomposition of units and finding the generators of the unit group we use results discussed in \sec{unit-groups}. Recalling the definition of $S_{ext}$ (equation \eq{s-ext}) and $\gtwo$ (equation \eq{two-sided-gen}) we conclude that $\gext = \gone \cup \gtwoext$. An algorithm for finding the factorization is shown on \fig{exact-synthesis-two} in the special case when two-sided ideal class group is one. Algorithms for finding $S_{ext}$, $\gone$ and $\gtwoext$ were discussed above. 
\end{proof}

\subsection{  Factorization of elements of $\mos$ into a finite subset of $\mos$   }  \label{sec:graph}


In this subsection we use an approach similar to \sec{factor-1}, but describe another way of coming back from the factorization of an ideal $q\mo$ into a product of maximal ideals $I_1 \cdots I_n$ of $\qa$ (described in \theo{factor}) to the factorization of $q$ into a product of finitely many elements of $\mos$. The basic idea is to group consecutive maximal ideals in $I_1 \cdots I_n$ into products $\at{I_{k(0)} \cdots I_{k(1)}} \cdots \at{I_{k(m-1)} \cdots I_{k(m)}}$ such that each block $\at{I_{k(j-1)} \cdots I_{k(j)}}$ is a principal ideal. This approach finds a factorization of $q$ into finite set of elements of $\mos$ under some technical assumptions. To formulate them we introduce the notion of the {\em ideal principality graph}. 

\subsubsection{Ideal principality graph} We first introduce the definitions required to state the main result of this subsection. 

\begin{dfn} The ideal principality graph $\gs$ is an oriented graph whose vertices are maximal orders of the quaternion algebra $\qa$ and is defined by the following rules: 
\begin{itemize}
\item $\mo$ is a vertex of $\gs$ (called a root of $\gs$). 
\item Let $\mo'$ be $\mo$ or a vertex of $\gs$ that it is not a conjugate of $\mo$ and let $\mo''$ be a maximal order such that there primitive integral $\mo''-\mo'$-ideal $I$ with norm from $S$, then $\mo''$ is a vertex of $\gs$ connected to $\mo'$ by arc $\at{\mo',\mo''}$. All such maximal orders $\mo''$ are vertices of $\gs$.
\item Vertices of $\gs$ that are conjugate of $\mo$ and not equal to $\mo$ are called leaves of $\gs$. 
\end{itemize}
\end{dfn}

Note that $\gs$ is  connected because by definition every vertex of $\gs$ is connected to $\mo$, all leaves have outdegree $0$, outdegree of all other vertices is the same. Leaves of $\gs$ are closely related to all possible principle ideals we can get when considering blocks $\at{I_{k(j-1)} \cdots I_{k(j)}}$ mentioned above. 

\begin{dfn} Let $\mo'$ be a leaf of $\gs$ and $q(\mo')$ be an element of $\mos$ such that $\mo' = q \mo q^{-1}$, then the set of leaf generators is
\[
\gens = \set{ q\at{\mo'} : \mo' \text{ is a leaf of } \gs }.
\]
\end{dfn}

The following theorem is the main result of this section:
\begin{thm} \label{thm:factor-2} Suppose that: 
\begin{enumerate}
	\item graph $\gs$ is a finite graph,
	\item each element of $\gens$ can be chosen such that its $\nrd$ factors into elements of $S$,
\end{enumerate}
then any element of $\mos$ can be written as a product $q_1 \ldots q_n u \alpha$, where $q_1,\ldots,q_n \in \gens \cup \gtwo$, $u \in \CM^\times$ and $\alpha \in \nf^\times$. 

There are algorithms for checking that $(1)$ and $(2)$  are true, for computing $\gens \cup \gtwo$ and for finding a factorization $q_1 \ldots q_n u \alpha$ when the conditions $(1)$ and $(2)$ are true. 

If one of the following conditions holds, the unit $u$ can be written as a product of finite number elements of $\ugens$: 
\begin{enumerate}
	\item $F$ is a totally real number field, and $\qa$ is a totally definite quaternion algebra. 
	\item $F$ is a totally real number field, and $\qa$ is split in exactly one real place.  
\end{enumerate}
There are algorithms for finding $\ugens$ (see \sec{unit-groups}) in both cases mentioned above.
\end{thm}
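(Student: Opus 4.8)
The plan is to reuse the strategy of \sec{factor-1} — factor the right ideal $q\mo$ into maximal ideals via \theo{factor} and read the factorization as a walk through a chain of maximal orders — but to group that walk into blocks running from the root of $\gs$ to one of its leaves, so that each block is, up to a two-sided $\mo$-ideal, a leaf generator that already lies in $\mos$. First I would reduce to primitive ideals: given $q\in\mos$, by \propos{prim} the ideal $q\mo$ sits inside a unique primitive right $\mo$-ideal $J$ with $\ordl{J}=\ordl{q\mo}=q\mo q^{-1}$; since $J$ is primitive its norm is coprime to $\disc(\qa)$ (at a ramified prime every one-sided $\mo$-ideal is two-sided), so by \theo{maxid} and \propos{ideal-alternative} the factorization $J=I_1\cdots I_n$ from \theo{factor} consists of \emph{primitive} maximal ideals, each of norm a prime in $S$. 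Reading this as a walk gives a chain $\mo=\mo^{(0)}\to\mo^{(1)}\to\cdots\to\mo^{(n)}=q\mo q^{-1}$ whose consecutive orders are joined by a primitive $S$-ideal, and a short argument shows the $\mo^{(i)}$ are pairwise distinct (if $\mo^{(a)}=\mo^{(b)}$ with $a<b$ the intermediate sub-product is a proper two-sided ideal $\mfc\mo^{(a)}$, whence $J\subseteq\mfc\mo$, contradicting primitivity).

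The heart of the proof is to cut this path at its visits to conjugates of $\mo$. I would let $0=k_0<k_1<\cdots<k_m=n$ be exactly the indices $i$ with $\mo^{(i)}$ conjugate to $\mo$ (when $n=0$ the ideal $q\mo$ is two-sided and this is the base case). For each $j$, conjugating the segment $\mo^{(k_{j-1})}\to\cdots\to\mo^{(k_j)}$ by $g_{j-1}^{-1}$, where $g_{j-1}$ carries $\mo$ to $\mo^{(k_{j-1})}$ and $g_0=1$, turns it into a path in $\gs$ from the root to a vertex $\CN_j$ that is conjugate to $\mo$, is not $\mo$ (the $\mo^{(k_i)}$ are distinct), and meets no intermediate conjugate of $\mo$ — hence $\CN_j$ is a leaf. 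Taking $h_j=q(\CN_j)\in\gens$, which by hypothesis (2) may be chosen with $\nrd(h_j)$ supported on $S$, and setting $g_j=g_{j-1}h_j$, one gets inductively $g_j\mo g_j^{-1}=\mo^{(k_j)}$ with $\nrd(g_j)$ supported on $S$; so $g_m=h_1\cdots h_m$ with $g_m\mo g_m^{-1}=q\mo q^{-1}$. Then $g_m^{-1}q$ normalizes $\mo$, so $(g_m^{-1}q)\mo$ is a (possibly fractional) two-sided $\mo$-ideal with norm $\nrd(q)\nrd(g_m)^{-1}$ supported on $S$; \propos{two-sided-dec} factors it as $g_m^{-1}q=r_1\cdots r_l\,u\,\alpha$ with $r_i\in\gtwo$, $u\in\CM^\times$, $\alpha\in\nf^\times$, and therefore $q=h_1\cdots h_m\,r_1\cdots r_l\,u\,\alpha$ is the asserted factorization (the number of leaf-generator factors equals the number of times the chain meets a conjugate of $\mo$ after the start).

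On the algorithmic side, $\gtwo$ is computed from \eq{two-sided-gen} after factoring $\disc(\qa)$, the graph $\gs$ and its leaves come from the breadth-first construction in its definition using the P-NEIGHBORS and IS-CONJUGATE routines of \sec{factor-1}, and $\gens$ is obtained by attaching to each leaf a conjugator of $\mo$ with $S$-supported norm (an adjustment of ISOMORPHISM-GENERATOR); condition (1) holds exactly when this construction terminates, and (2) is checked leaf by leaf since the admissible reduced norms of conjugators of $\mo$ form a computable coset of a subgroup built from $\nrd(\CM^\times)$, the primes dividing $\disc(\qa)$ and the two-sided class group. The factorization $q_1\cdots q_n u\alpha$ is then produced by a variant of EXACT-SYNTHESIS-STAGE-1 (\fig{exact-synthesis-stage-1}) that peels off whole root-to-leaf segments into $\gens$ instead of single maximal ideals into $\gone$, finishing with TWO-SIDED-DECOMPOSE on the residual $g_m^{-1}q$; and in cases (1) and (2) for the unit, the algorithms of \sec{unit-groups} (from \cite{KV,V}) give generators of $\CM^\times$ and solve the word problem, so $u$ decomposes into elements of $\ugens$.

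The purely algebraic core — reduction to primitive ideals, distinctness of the $\mo^{(i)}$, and the telescoping of the $g_j$ — is routine once \propos{prim}, \propos{ideal-alternative}, \theo{disc} and \theo{factor} are in hand. The step I expect to be the main obstacle is the algorithmic side of condition (1): certifying finiteness of $\gs$ (equivalently, bounding the length of primitive $S$-walks out of $\mo$ that avoid conjugates of $\mo$) and detecting when it fails, together with making the interface with \propos{two-sided-dec} airtight — i.e.\ checking that $g_m^{-1}q$ really does generate a (possibly fractional) two-sided $\mo$-ideal with $S$-supported norm, independently of which leaf generators $h_j\in\gens$ were chosen.
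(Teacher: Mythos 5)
Your proposal matches the paper's proof of \theo{mos-factor} (and hence of \theo{factor-2}) essentially step for step: pass to the unique primitive ideal via \propos{prim}, factor it by \theo{factor}, read off the chain of left orders, observe the orders are pairwise distinct by primitivity, cut the chain at the visits $k_0<k_1<\cdots<k_m$ to conjugates of $\mo$, recognize each conjugated segment as a root-to-leaf path in $\gs$, telescope the leaf generators, and hand the normalizing residual to \propos{two-sided-dec} and the unit to the methods of \sec{unit-groups}. The only noteworthy difference is that you explicitly justify why the $I_k$ are primitive with norms in $S$ coprime to $\disc(\qa)$, a point the paper glosses over with the slightly imprecise ``any maximal ideal is primitive''; this is a welcome sharpening rather than a divergence in strategy.
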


For the algorithmic solution to $(2)$ in the theorem above see \rema{bad-primes}. For the algorithm for finding a factorization, see procedure EXACT-SYNTHESIS-1 on \fig{exact-synthesis}. For the algorithm for deciding if $\gs$ is finite graph, see procedure SPANNING-TREE-SIZE on \fig{tree-size}. For the algorithm that finds $\gens$, see procedure FIND-S-GENERATORS on \fig{gs-alg}. See also equation \eq{two-sided-gen} for the definition of $\gtwo$ and related discussion about computing it. 

Now we show how to use the ideal principality graph to reduce the problem of factoring elements of $\mos$ to factoring elements that generate two-sided ideals of $\mo$. We will use the following integral complexity measure to state the result:

\begin{dfn} The integral complexity measure of a primitive ideal $I$ is defined as $ \cmi{I} = v_{S}\at{\nrd\at{I}} $. For $q$ being an elements of $\mos$ and $I_q$ being the unique primitive $\mo$-ideal that contains $q\mo$, the $\cmi{q}=\cmi{I_q}$. For two maximal orders $\mo',\mo''$ we define the distance $\cmi{\mo',\mo''}$ to be a complexity measure $\cmi{I}$ of the unique primitive ideal $I$ that contains ideal $\mo'\mo''$. 
\end{dfn}

Note that for $q$ corresponding to a leaf of $\gs$ its integral complexity measure $\cmi{q}$ is equal to the distance from the leaf to the root of $\gs$.

\begin{thm}\label{thm:mos-factor}
If $\gs$ is a finite graph then, for any element $q$ of $\mos$ there exist elements $q_1,\ldots,q_m$ of $\gens$, and a quaternion $q_0$ such that $\mo = q_0 \mo q_0^{-1}$, $q= q_1 \ldots q_m q_0$ and $\cmi{q} = \sum_{k=1}^{m}\cmi{q_k}$.
\end{thm}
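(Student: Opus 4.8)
The plan is to induct on the integral complexity measure $\cmi{q}$, using the ideal principality graph $\gs$ to control the factorization. The base case is $\cmi{q}=0$: then the unique primitive right $\mo$-ideal $I_q$ contained in $q\mo$ has $v_S(\nrd(I_q))=0$, so $I_q$ is an order, forcing $I_q = \mo$ and hence $q\mo$ is a two-sided ideal, i.e.\ $q$ itself plays the role of $q_0$ with $m=0$. For the inductive step, first I would invoke \theo{factor} to write the primitive ideal $I_q \subset q\mo$ as a proper product $I_1 \cdots I_n$ of maximal primitive ideals whose norms lie in $S$; following the root $\mo = \ordr{I_n}$ back along this chain, the sequence of right orders $\ordr{I_n}, \ordr{I_{n-1}}, \ldots, \ordl{I_1}$ is exactly a walk in $\gs$ starting at the root. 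Since $\gs$ is finite, this walk must reach a leaf $\mo'$ (a conjugate of $\mo$ other than $\mo$ itself) after some $\cmi{\mo',\mo} \le \cmi{q}$ steps — this is where finiteness of $\gs$ is essential, guaranteeing that \emph{every} sufficiently long walk from the root hits a leaf rather than wandering forever among non-conjugates of $\mo$.

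Concretely, I would split $q\mo$ as a product of ideals $J \cdot K$ where $J$ is the (principal) ideal taking $\mo$ to the first leaf $\mo'$ encountered, and $K$ accounts for the rest. By the definition of $\gens$, there is $q_1 \in \gens$ with $q_1\mo q_1^{-1} = \mo'$, i.e.\ $q_1\mo$ is a two-sided $\mo$-$\mo'$... more precisely $q_1$ generates the $\mo$-$\mo'$ ideal chain of the first block, with $\cmi{q_1}$ equal to the distance from $\mo'$ to the root. Then $q_1^{-1} q$ generates a right ideal whose left order is $\mo'$, which is conjugate to $\mo$; conjugating by $q_1$, the element $q' := q_1^{-1} q q_1$ lies in $\qa$ with $q'\mo$ a right $\mo$-ideal, and I would check $q' \in \mos$ and $\cmi{q'} = \cmi{q} - \cmi{q_1} < \cmi{q}$ using multiplicativity of $\nrd$ and additivity of $v_S$ (the key point being that the block $J$ is primitive with norm supported in $S$, so removing it strictly decreases the complexity measure while keeping everything in $\mos$). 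Applying the inductive hypothesis to $q'$ gives $q' = q_2 \ldots q_m q_0$ with $q_k \in \gens$, $q_0\mo q_0^{-1} = \mo$, and $\cmi{q'} = \sum_{k=2}^m \cmi{q_k}$; substituting back yields $q = q_1 (q_1 q' q_1^{-1}) \cdots$ — more carefully, $q = q_1 q' $... I need to be careful that the conjugation bookkeeping reassembles to $q = q_1 \tilde q_2 \cdots \tilde q_m q_0$ with each $\tilde q_k$ again in $\gens$, which follows because $\gens$ is defined up to the conjugation by generators of two-sided ideals and the complexity measure is conjugation-invariant.

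The additivity $\cmi{q} = \sum_{k=1}^m \cmi{q_k}$ then follows by telescoping the additive identities $\cmi{q} = \cmi{q_1} + \cmi{q'}$ and $\cmi{q'} = \sum_{k=2}^m \cmi{q_k}$ from the induction. I expect the main obstacle to be the conjugation bookkeeping: ensuring that after peeling off the first leaf generator $q_1$ and conjugating, the recursively produced factors genuinely land in the \emph{same} set $\gens$ (defined relative to the fixed root $\mo$) rather than in a conjugated variant of it, and that the product telescopes back to $q$ exactly rather than up to a unit. Resolving this cleanly requires using the freedom in the choice of representatives $q(\mo')$ for leaves — that two elements $q(\mo')$ and $q'(\mo')$ differ by an element generating a two-sided $\mo$-ideal — together with the earlier observation (from \sec{reduction-to-two-sided}) that such two-sided generators can be absorbed, so that the factorization is well-defined on the level needed for the statement. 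A secondary but routine point is verifying that the intermediate quaternions remain integral and $S$-supported at each stage, which follows from \propos{prim} and multiplicativity of the reduced norm exactly as in the proof of \theo{exact-one}.
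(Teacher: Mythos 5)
Your plan has the same skeleton as the paper's argument — factor the primitive ideal $I$ inside $q\mo$ via \theo{factor}, read off a walk of maximal orders, cut the walk at visits to conjugates of $\mo$, and extract a leaf generator from $\gens$ at each cut — but two of the points you flag as routine are exactly where the work lies, and your framing of them is off. Finiteness of $\gs$ is not what forces the walk to hit a leaf: the walk ends at $\mo_n = \ordl{I} = q\mo q^{-1}$, which is conjugate to $\mo$ and (since the $\mo_k$ are pairwise distinct by primitivity of $I$) is distinct from $\mo$ whenever $n\ge 1$, so a first visit $k(1)$ to a conjugate exists for free. What finiteness of $\gs$ buys is that $\gens$ is a finite set. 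Also, the full chain is \emph{not} a walk in $\gs$: leaves of $\gs$ have outdegree zero, so the chain leaves $\gs$ the moment it passes a leaf; you must conjugate the tail $\mo_{k(1)},\dotsc,\mo_{k(2)}$ by $q_1^{-1}$ to restart at the root, exactly as the paper does.

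The genuine gap is the inductive step. Applying the inductive hypothesis to $q'=q_1^{-1}q$ (the version $q'=q_1^{-1}qq_1$ you wrote down conjugates the wrong thing: $\ordl{q_1^{-1}q\mo}=q_1^{-1}(q\mo q^{-1})q_1$, which is $q_1^{-1}$-conjugate to the \emph{end} of the walk, not the first leaf $\mo'$) requires $q_1^{-1}q\in\mos$, i.e.\ $q\mo\subset q_1\mo$. Writing $q\mo = I J_0$ and $q_1\mo = I(q_1) J_1$ with $I(q_1)=I_{k(1)}\cdots I_1$ the first-block primitive ideal and $J_0,J_1$ two-sided, this containment is equivalent to $J_0\subset J_1$, which has no reason to hold — and your parenthetical "(principal)" applied to the first block is already a symptom of eliding this distinction, since $I(q_1)$ is primitive and $q_1\mo$ only agrees with it up to the two-sided factor $J_1$. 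The paper sidesteps this entirely by never asking the partial quotients to be integral: it builds the entire sequence $q_1,\dotsc,q_m$ non-inductively from the walk, then proves the ideal identity $q\mo = q_1\cdots q_m\mo\cdot\at{J_0\prod_k J_k^{-1}}$ by telescoping, and only afterwards sets $q_0=(q_1\cdots q_m)^{-1}q$, which normalizes $\mo$ without needing to lie in $\mo$. If you insist on an induction, it must run on the primitive ideal $I$ (or on $\cmi{I}$) rather than on the element $q$, with the two-sided parts $J_k$ carried separately and collected into $q_0$ at the end — which is, in effect, the paper's telescoping argument reorganized.
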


\begin{proof}
By \propos{prim} there exists a unique primitive right $\mo$-ideal $I$ such that $q\mo = I J_0$, for $J_0$ being a two-sided $\mo$-ideal. By \theo{factor}, the ideal $I$ can be written as a proper product of maximal ideals $I_n \cdots I_1$. Let $\mo_0 = \mo$ and define $\mo_k = \ordl{I_k}$ for $k=1,\ldots,n$. The maximal orders $\mo_k$ must be distinct, as otherwise the ideal $I$ would be contained in a nontrivial two-sided ideal, contradicting its primitivity. Note that $\mo_n = q \mo q^{-1}$, in other words $\mo_n$ is isomorphic to $\mo$. Let $k(0) = 0$ and choose $k(1)$ such $\mo_{k(1)}$ is the first order in the sequence $\mo_1,\ldots,\mo_n$ isomorphic to $\mo$. Then choose $k(2)$ such that  $\mo_{k(2)}$ is the second order isomorphic to $\mo$ in the sequence, etc.\ Build the sequence $k(0),k(1),\ldots,k(m)$ as above. Note that $k(m)=n$ and the length of the sequence is at most two, because $\mo_n$ is isomorphic to $\mo$. 

Now we establish a correspondence between sequences of maximal orders 
\[
\mo_{k(j-1)} , \ldots,  \mo_{k(j)}, j=1,\ldots,m
\]
and paths in the graph $\gs$ from the root to the one of the leaves. We first show that the sequence $\mo_0,\ldots,\mo_{k(1)}$ corresponds to the path of $\gs$. We use induction by the number of maximal orders in the sequence. For the base case, we show that $\at{\mo_0,\mo_1}$ is an edge of $\gs$ and $\mo_1$ is a vertex of $\gs$. It is sufficient to show that $\nrd\at{I_1}$ is in $S$ because any maximal ideal is primitive. By \theo{maxid}, $\nrd\at{I_1}$ must be a prime ideal of $\zf$. By multiplicativity of the norm of ideals, we have that $\nrd\at{q\mo} \subset \nrd\at{I} \subset \nrd\at{I_k}$ for any $k$, so therefore $\nrd\at{I_k}$ must be an element of $S$. 
If for some $k \le k(1)$, maximal order $\mo_{k-1}$ is a vertex of $\gs$, then $\mo_{k}$ is also a vertex of $\gs$ and $(\mo_{k-1},\mo_k)$ is an arc of $\gs$. This is because $I_k$ is a primitive ideal with norm from $S$ and $\mo_{k-1}$ is not a conjugate of $\mo$ (by construction of the sequence $\mo_0,\ldots,\mo_{k(1)}$). Finally we note that $\mo_{k(1)}$ is a leaf of $\gs$, therefore $\mo_{k(0)} = q^{-1}_1 \mo_{k(1)} q_1$ for $q_1$ from $\gens$. Now we apply the isomorphism $x \mapsto q^{-1}_1 x q_1$ to the sequence of orders $\mo_{k(1)},\ldots,\mo_{k(2)}$ and ideals $I_{k(1)+1},\ldots,I_{k(2)}$ and get a new sequence of orders and ideals. The argument above now applies to the new sequence of orders as it starts with $\mo$. We find $q_2$ such that 
\[
   \mo_{k(0)} = q^{-1}_2 q^{-1}_1 \mo_{k(2)} q_1 q_2. 
\] 
We proceed in a similar way and find a sequence $q_1,\ldots,q_m$ of elements of $\gens$. 

Let us now decompose the ideal $I$ into principal ideals generated by $q_k$ and two-sided $\mo$-ideals. Consider a family of two-sided ideals $J_k$ such that $q_k \mo = I\at{q_k} J_k$, where $I\at{q_k}$ is the unique primitive ideal containing $q_k \mo$. We show that 
\begin{equation} 
 q \mo = q_1 \cdots q_m \mo \at{ J_0 \prod_{k=1}^{m} J^{-1}_k } \label{eq:ideal-dec}
\end{equation}
where $q_1 \cdots q_m \mo$ is a product of two-sided principal ideals generated by $q_m$. Consider the ideal $I_{k(1)}\cdots I_1$. Is is a primitive a ideal contained in $q_1 \mo $. It is unique and equal to $q_1 \mo J^{-1}_1$. Now consider $q^{-1}_1 I_{k(2)} q_1\cdots q^{-1}_1 I_{k(1)+1} q_1$. It is the unique primitive ideal contained in $q_2 \mo$, therefore  
\[
q_2 \mo J_2^{-1} = q^{-1} I_{k(2)} q_1\cdots q^{-1}_1 I_{k(1)+1} q_1 
\]
Next we express $I_{k(2)} \cdots I_{k(1)+1}$ as  
$
  \at{q_1 q_2 q^{-1}_1} \mo_{k(1)} q_1  J_2^{-1} q^{-1}_1
$. Therefore the product $I_{k(2)} \cdots I_{1}$ is equal to
\[
  \at{q_1 q_2 q^{-1}_1  \mo_{k(1)} } \at{q_1  J_2^{-1} q^{-1}_1} \at{ q_1 \mo_{k(0)} }= \at{q_1 q_2 q^{-1}_1 \mo_{k(1)} } \at{ q_1 \mo_{k(0)} } J_2^{-1} J_1^{-1}. 
\]
Next we note that $\at{q_1 q_2 q^{-1}_1 \mo_{k(1)} } \at{ q_1 \mo_{k(0)} } = q_1 q_2 \mo_{k(0)}$. Using the same argument repeatedly we get equality \eq{ideal-dec}. Multiplicativity of the norm of ideals implies that $\cmi{q} = \sum_{k=1}^{m}\cmi{q_k}$. We set $q_0 = q q_1^{-1} \cdots q_m^{-1}$ and note that $q_0$ generates two-sided $\mo$-ideal $ J_0 \prod_{k=1}^{m} J^{-1}_k$ which is equivalent to $q_0 \mo q^{-1}_0 = \mo$.

\end{proof}

\begin{rem}\label{rem:bad-primes} There is some freedom in choosing elements of $\gens$. Also, the definition of $\gens$ does not guarantee that elements of $\gens$ have norm from $S$. We will use ideas similar to ones we used to refine the maximal orders adjacency structure. Consider an element $q \in \gens$ and let $I_q$ be the unique primitive ideal that contains $q\mo$. We can always write $q\mo = I_q J_q$, where $J_q$ is a two-sided ideal of $\mo$. If the two-sided ideal class group of $\qa$ is trivial, then the ideal $I_q$ is principal and we can choose $q$ to be its generator. This will ensure that $q$ has norm from $S$. 

Let now consider how to choose $q$ in the case when two-sided ideal class group is non-trivial. It is a finite Abelian group and therefore isomorphic to the group $\bigoplus_{i=1}^D \z/d_i\z $ for some positive integers $d_1,\ldots,d_D$. Let $x=\at{x_1, \ldots, x_D}$ be an element of ideal class group corresponding to $J_q$ and let $c_i$ for $i=1,\ldots,|S|$ be elements of two-sided ideal class group corresponding to prime two-sided $\mo$-ideals $\P_i$ with norm in $S$. We would like to find prime ideal $J'_q = \prod_{i=1}^{|S|}\P_i^{a_i}$ such that the class group element corresponding to it is equal to $x$. To do this it is sufficient to solve the following system of congruences 
\[
  \sum_{i = 1}^{|S|} c_{i,j} a_i = x_j \at{\mathrm{mod}\,d_j} , j = 1,\ldots,D   
\]
for integers $a_i$. If the system is solvable for all $q$ from $\gens$ we can choose all $\gens$ ho have norm from $S$. This leads to a straightforward algorithm for deciding if all elements of $\gens$ can be chosen to have norm from $S$. 
\end{rem}

\begin{cor} Let $\gs$ be a finite graph. For any element $q$ of $\mos$ such that $\mu\at{q} > 0$ there always exists $q_1$ from $\gens$ such that $\cmi{q_1^{-1} q} < \cmi{q}$. 
\end{cor}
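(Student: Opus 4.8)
The plan is to read the corollary off the block decomposition that already underlies \theo{mos-factor}, taking $q_1$ to be the first factor it produces. Since $\gs$ is finite, \theo{mos-factor} applies to $q$ and yields a factorization $q = q_1 \cdots q_m q_0$ with $q_1,\dots,q_m \in \gens$, a quaternion $q_0$ satisfying $q_0 \mo q_0^{-1} = \mo$, and $\cmi{q} = \sum_{k=1}^m \cmi{q_k}$. I claim $q_1$ already witnesses the statement, so it remains to check two things: (a) that $m \ge 1$ and $\cmi{q_1} \ge 1$, so that a genuine factor is peeled off, and (b) that $\cmi{q_1^{-1}q} = \cmi{q} - \cmi{q_1}$.

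For (a): if $m = 0$ then $q = q_0$ generates a two-sided $\mo$-ideal, so the unique primitive right ideal containing $q\mo$ is $\mo$ itself and $\cmi{q} = v_S(\nrd(\mo)) = v_S(\zf) = 0$, contradicting $\cmi{q} > 0$; hence $m \ge 1$. For $\cmi{q_1} \ge 1$, recall from the proof of \theo{mos-factor} that the unique primitive ideal $I$ contained in $q\mo$ is written there as a proper product $I_n \cdots I_1$ of maximal ideals and that $q_1$ comes from the first block, with $I\at{q_1} = I_{k(1)} \cdots I_1$ and $k(1) \ge 1$. By \theo{maxid} each $\nrd(I_j)$ is a prime ideal of $\zf$, and by multiplicativity of the ideal norm it divides $\nrd(q\mo) = \nrd(q)\zf$, which factors into primes of $S$ because $q \in \mos$; hence $v_S(\nrd(I_j)) = 1$ for every $j$. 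Consequently $\cmi{q} = v_S(\nrd(I)) = n$ and $\cmi{q_1} = v_S(\nrd(I\at{q_1})) = k(1) \ge 1$.

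For (b): the proof of \theo{mos-factor} is effectively recursive — after extracting $q_1$ (corresponding to the first leaf met), one repeats the identical reasoning on the remainder, whose product is exactly $q_1^{-1}q = q_2 \cdots q_m q_0$. Carrying \eq{ideal-dec} one step further, the unique primitive ideal contained in $q_1^{-1}q\,\mo$ is $q_1^{-1}(I_n \cdots I_{k(1)+1})q_1$; since conjugation leaves reduced norms unchanged, its $v_S$-value is $n - k(1)$. Therefore $\cmi{q_1^{-1}q} = n - k(1) = \cmi{q} - \cmi{q_1} \le \cmi{q} - 1 < \cmi{q}$, which is the claim. The only delicate point — and where I expect to spend the most care in writing — is checking that $\cmi{q_1^{-1}q}$ is legitimately defined, i.e.\ that $q_1^{-1}q$ has an honest primitive part with norm supported on $S$; this follows from $\gens \subseteq \mos$ together with the explicit description above, since $q_1^{-1}q\,\mo$ is the primitive integral ideal $q_1^{-1}(I_n \cdots I_{k(1)+1})q_1$ times a two-sided $\mo$-ideal of $S$-supported norm. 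Apart from this bookkeeping the corollary is just one step of the recursion already inside \theo{mos-factor}, so no new difficulty should arise.
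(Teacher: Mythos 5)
Your proof is correct and follows the route the paper intends: the corollary is stated without a separate proof as an immediate consequence of \theo{mos-factor}, and your argument of peeling off the first block $q_1$ from $q = q_1 \cdots q_m q_0$ and using the additivity $\cmi{q}=\sum_k\cmi{q_k}$ together with $\cmi{q_1}=k(1)\ge 1$ is precisely the intended reading. The integrality bookkeeping you flag at the end, ensuring $q_1^{-1}q$ legitimately lies in $\mos$ so that $\cmi{q_1^{-1}q}$ is defined, is a genuine subtlety the paper also leaves implicit (compare \rema{bad-primes}, where $\gens$ is arranged to consist of generators of primitive ideals), so your caution there is well placed.
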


The Corollary above immediately implies the correctness of procedure EXACT-SYNTHESIS-1 on \fig{exact-synthesis} for finding a factorization described in \theo{mos-factor}. Procedure EXACT-SYNTHESIS-1 is restricted to the case when two-sided ideal class group of $\qa$ is trivial. It is not difficult to extend it to the general case, but would make exposition more tedious. For more details related to this see discussion related to TWO-SIDED-DECOMPOSE procedure. 

\subsubsection{How unique is the factorization of elements of $\mos$? Spanning trees of $\gs$. } 

The rest of this section is devoted to the theory needed to prove the correctness of procedures SPANNING-TREE-SIZE on \fig{tree-size} and FIND-S-GENERATORS on \fig{gs-alg}. Our ultimate goal is to compute the description of $\gs$ sufficient for finding its leaves. In other words, it is sufficient for us to find a subgraph of $\gs$ that contains at least one path connecting the root of $\gs$ and each of the the leaves of $\gs$. Each path connecting the leaf $\mo'$ of $\gs$ and its root $\mo$ correspond to a factorization of a unique primitive ideal $I$ that contains ideal $\mo'\mo$. The number of such paths is equal to the number of different factorizations of $I$ into primitive maximal ideals. We will show that we can build a required subgraph of $\gs$ by restricting the class of factorizations of $I$. Moreover, we will show that this subgraph is a spanning tree of $\gs$. 

First we show the connection between the notion of $S$-neighbors and the distance between orders. We call two maximal orders $\mo'$ and $\mo''$ \emph{$S$-connected} if the unique primitive ideal $I$ containing $\mo'\mo''$ has $\nrd$ that factors into elements of $S$. Equivalently, they are $S$-connected if there exists a sequence of maximal orders $\mo' = \mo'_1, \ldots, \mo'_n= \mo''$ such that for each $i$, $\mo'_i$ and $\mo'_{i+1}$ are $S$-neighbors.  

\begin{prop} \label{prop:connect} Let $\mo'$ be $S$-connected to $\mo$ and let $\mo''$ be an $S$-neighbor of $\mo'$. Then either $\cmi{\mo,\mo''} = \cmi{\mo,\mo'} + 1$, or $\cmi{\mo,\mo''} = \cmi{\mo,\mo'} - 1$. 
\end{prop}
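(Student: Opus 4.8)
The plan is to analyze the factorization of the unique primitive ideal $I''$ containing $\mo \mo''$ by relating it to the factorization of the unique primitive ideal $I'$ containing $\mo \mo'$, using the fact that $\mo'$ and $\mo''$ are $\p$-neighbors for some $\p \in S$. Concretely, let $K$ be a primitive integral $\mo''$-$\mo'$-ideal with $\nrd(K) = \p$ (such $K$ exists by \defin{p-neigbor} and \propos{prim}, after dividing out a two-sided ideal if necessary; since $\p \in S$, this does not change $v_S$). By \theo{factor}, $I'$ factors as a proper product of maximal primitive ideals $I'_1 \cdots I'_k$ with $\ordl{I'_1} = \mo$, $\ordr{I'_k} = \mo'$, and $k = \cmi{\mo,\mo'}$ (each factor has norm in $S$ by the norm-multiplicativity argument used in the proof of \theo{mos-factor}, hence $v_S$ of the whole product is $k$). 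Then $I'_1 \cdots I'_k K$ is a proper product of maximal primitive ideals going from $\mo$ to $\mo''$, but it need not be primitive: it is primitive iff this product is not contained in a proper two-sided ideal of $\mo$.

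The key dichotomy is whether the last factor $I'_k$ of $I'$ "cancels" against $K$. First I would observe that $\ordl{K} = \mo'' = \ordr{I'_k}$ cannot hold simultaneously with $K = \bar{I'_k}$-type cancellation unless $I'_k K$ is a two-sided ideal of $\ordl{I'_k} = \mo_{k-1}$; more precisely, by \theo{prime-ideals} and \propos{ideal-alternative} (using $\p \nmid \disc(\qa)$, which we may assume since the relevant primes of $S$ defining neighbors are coprime to the discriminant), $I'_k$ is itself primitive with prime norm $\p_k$, and $I'_k K$ fails to be primitive precisely when $\nrd(I'_k) = \p$ and $I'_k$, $K$ are "opposite" in the sense that $I'_k K = \p \mo_{k-1}$ as a two-sided ideal, i.e. $K = (I'_k)^{-1}\p \mo_{k-1}$ restricted appropriately. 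In that cancelling case, $I'_1 \cdots I'_{k-1}$ is the (necessarily primitive, being a sub-product of a primitive product) proper product realizing a path from $\mo$ to $\mo''$, so $\cmi{\mo,\mo''} \le k-1$; conversely $\cmi{\mo,\mo'} \le \cmi{\mo,\mo''}+1$ by prepending $\bar K$, giving $\cmi{\mo,\mo''} = k-1 = \cmi{\mo,\mo'}-1$. In the non-cancelling case, $I'_1 \cdots I'_k K$ is primitive of length $k+1$, so by uniqueness of the primitive ideal containing $\mo\mo''$ (\propos{prim}) and \theo{factor}, $\cmi{\mo,\mo''} = k+1 = \cmi{\mo,\mo'}+1$.

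To finish, I would need to rule out that both $\cmi{\mo,\mo''} = k$ and $\cmi{\mo,\mo''}$ differs from $k\pm 1$ by more: the point is that from the primitive factorization of $I''$ of length $\cmi{\mo,\mo''}$, appending a $\p$-neighbor step ($\bar K$, from $\mo''$ back to $\mo'$) yields a proper product of length $\cmi{\mo,\mo''}+1$ from $\mo$ to $\mo'$; projecting to its primitive part can only decrease the length, and by the same cancellation analysis it decreases by $0$ or $1$, forcing $\cmi{\mo,\mo'} \in \{\cmi{\mo,\mo''}-1, \cmi{\mo,\mo''}+1\}$ — note parity cannot be equal since a single $\p$-neighbor step changes the length by exactly one and the primitive truncation changes it by $0$ or $1$, but one checks the two truncation outcomes force the two opposite-sign conclusions. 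The main obstacle I anticipate is making the "cancellation" step rigorous: precisely characterizing when a proper product $I'_1 \cdots I'_k K$ of maximal primitive ideals fails to be primitive, and showing the failure is confined to the single junction $I'_k K$ (equivalently, that a product of maximal primitive ideals with all distinct left orders along the chain is automatically primitive). This should follow from \theo{disc}, \propos{ideal-alternative}, and the observation in the proof of \theo{mos-factor} that distinctness of the orders $\mo_0, \ldots, \mo_n$ in the chain is equivalent to primitivity of the product — so the real content is checking that $\mo'' = \ordl{K}$ coincides with some earlier $\mo_j$ iff the corresponding truncation of the chain is the primitive ideal, and that this can only happen at $j = k-1$.
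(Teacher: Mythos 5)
Your overall plan---form the proper product of the primitive ideal from $\mo$ to $\mo'$ with the maximal ideal $K$ from $\mo'$ to $\mo''$, and split into cases by whether that product is still primitive---is the same idea as the paper's, but the gap you flag at the end is genuine and is not closed by the results you cite. The paper never decomposes $I_0:=I'$ into maximal ideals; you do, and that is where the trouble starts.

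First, the claimed equivalence ``the orders $\mo_0,\ldots,\mo_n$ along a chain are pairwise distinct iff the proper product is primitive'' is not what the proof of Theorem~\ref{thm:mos-factor} establishes: it only proves that primitivity forces distinctness. You use the converse, which would require its own argument. Second, and more seriously, factorizations of a primitive ideal into maximal ideals are not unique (Theorem~\ref{thm:ordering}), so even when $I'K$ fails to be primitive there is no reason for \emph{your chosen} chain $\mo_0,\ldots,\mo_k$ to exhibit the failure as $\mo''=\mo_{k-1}$. What is true (this is what the paper actually proves) is that \emph{some} factorization of $I'$ ends in a maximal ideal $K^c$ with $K^cK$ equal to the prime of $\mo''$; this $K^c$ can have a different norm from your $I'_k$, so $\ordl{I'_k}$ need not be $\mo''$ at all. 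Your fallback argument only yields $\cmi{\mo,\mo''}\in\{k-1,k,k+1\}$ via ``append $K$ or $\bar K$ and truncate,'' and the parity step you sketch to rule out $k$ is exactly the lemma you are missing stated differently.

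The paper sidesteps all of this by working with the ideals $I_0I_1$ directly. If $I_0I_1$ is not primitive it lies in a prime $\P$ of $\ordr{I_1}=\mo''$. One first shows $\nrd(I_1)\mid\nrd(\P)$ (otherwise $I_0I_1$ lies in two primes with coprime norms, whence $I_0\subset\P$, contradicting primitivity of $I_0$). Since $\nrd(I_1)\nmid\disc(\qa)$, Proposition~\ref{prop:ideal-alternative} makes $I_1$ primitive and $\nrd(\P)=\nrd(I_1)^2$, so $\P=I_1^{c}I_1$ with $I_1^{c}$ a primitive maximal ideal. Writing $I_0I_1=I J$ with $I$ primitive and $J$ two-sided, one gets $I_0=I\,(J\P^{-1})\,I_1^{c}$, and primitivity of $I_0$ forces $J\P^{-1}$ trivial. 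Hence $I_0=I I_1^{c}$, $\cmi{I_0}=\cmi{I}+1$, and $\cmi{\mo,\mo''}=\cmi{\mo,\mo'}-1$. No chain of intermediate orders is needed; the single prime $\P$ does all the bookkeeping. That analysis of $\P$ is the missing step in your proposal.
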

\begin{proof} Let $I_0$ be the primitive ideal contained in the ideal $\mo \mo'$ and $I_1$ be the primitive ideal contained in $\mo' \mo''$. If $I_0 I_1$ is primitive we get $\cmi{\mo,\mo''} = \cmi{ I_0 } + \cmi{I_1} = \cmi{\mo,\mo'} + 1$. Let us now consider the case when $I_0 I_1 $ is not primitive and equal to $I J$ where $I$ is a primitive ideal and $J$ is a two-sided ideal. Let $\P$ be the $\ordr{I_1}$ prime ideal that contains $I_0 I_1$. Note that $\nrd\at{I_1}$ must divide $\nrd\at{\P}$. Suppose $\nrd\at{I_1}$ does not divide $\nrd\at{\P}$. Then there is a prime ideal $\P'$ of $\ordr{I_1}$ such that $I_1^{c} I_1  = \P'$ and $I_0 I_1 \subset \P'$. This implies that $I_0 I_1 \subset \P' \cap \P = \P \P'$. This implies that $I_0 \subset \P$ which is a contradiction to $I_0$ being primitive ideal. Because $I_1$ is a maximal ideal such that its $\nrd$ divides $\nrd\at{\P}$ we can write $\P = I_1^{c} I_1$ for $I_1^{c}$ being an integral ideal. We see that $I_0 = I (J \P^{-1}) I_1^{c}$. Using that $I_0$ is primitive we conclude that two-sided ideal $(J \P^{-1})$ is trivial and $I_0 = I I_1^{c}$. By \propos{ideal-alternative} $\nrd{I_c}$ does not divide the discriminant of the algebra $\qa$ and $\cmi{I_c}=1$. We conclude that $\cmi{\mo,\mo''} = \cmi{\mo,\mo'} - 1$. 
\end{proof}

The following result shows when a primitive ideal has a unique factorization. In particular, it implies that if $S$ contains only one prime that is coprime to the discriminant of $\qa$ then the graph $\gs$ is a tree.  

\begin{prop} Suppose that $\p$ does not divide the discriminant of $\qa$. Let $I$ be a primitive right $\mo$-ideal with $\nrd(I)=\p^n$, then the factorization of $I$ described in \theo{factor} into a product of maximal ideals is unique.  
\end{prop}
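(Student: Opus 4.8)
The plan is to reduce the uniqueness of the factorization of $I$ to the uniqueness of the factorization of the two-sided ideal $\mfp^n \CM$, using that $\mfp \nmid \disc(\qa)$. By \theo{disc}, the hypothesis $\mfp \nmid \disc(\qa)$ means the local index $m(\mfp) = 1$, so $\mfp\CM = \P$ is itself a prime ideal of $\CM$ with $\nrd(\P) = \mfp$. Since $\nrd(I) = \mfp^n$ and $\nrd$ is multiplicative on ideals, any proper factorization $I = I_n \cdots I_1$ into maximal ideals (which exists by \theo{factor}) has each $\nrd(I_k) = \mfp$, hence each $I_k$ is a primitive maximal right ideal of its right order with norm $\mfp$ — primitivity here follows from \propos{ideal-alternative}, since $\mfp \nmid \disc(\qa)$.

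First I would set up the local picture: localize at $\mfp$. The key input is \lemm{prim-enum}, which gives, for any maximal order $\CM'$ of $\qa$, a bijection between primitive right $\CM'$-ideals of norm $\mfp$ and the points of $\pone(\zf/\mfp) \cong \pone(\mathbb{F}_{\mfp})$. Equivalently, after fixing the splitting $\phi_\mfp : \CM'_\mfp \cong M_2(\zf/\mfp)$, such ideals correspond to lines in $(\zf/\mfp)^2$, i.e.\ to the columns of a matrix of rank one. The factorization $I = I_n \cdots I_1$ then corresponds, after localization, to writing a single rank-one ``column reduction'' of $I_\mfp$ as a chain of $n$ such elementary steps, each step passing through one of the $\Nm_{\nf/\bQ}(\mfp)+1$ neighbors.

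Next, the heart of the argument is showing this chain is forced. I would argue by induction on $n$. The last factor $I_1$ is the unique primitive right $\mo$-ideal contained in $I$ with norm $\mfp$: indeed, if $I \subseteq I_1$ and $I \subseteq I_1'$ for two distinct such ideals, then $I \subseteq I_1 \cap I_1'$; but locally at $\mfp$, two distinct lines in $\pone(\mathbb{F}_\mfp)$ generate the whole localized order's left ideal structure in a way that forces $\nrd(I_1 \cap I_1') \mid \mfp$ combined with $I$ having norm $\mfp^n$ — more precisely, I would show $\nrd(I_1 I_1{}') \nmid \mfp^n$ unless $I_1 = I_1'$, using that $I$ is primitive so it is not contained in $\P = \mfp\CM$, which rules out $I_1 = I_1' = \P$. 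This pins down $I_1$ uniquely as a function of $I$; then $I I_1^{-1}$ is again primitive (it is not contained in any two-sided ideal, as $I$ isn't) with norm $\mfp^{n-1}$ and right order $\ordl{I_1}$, so the inductive hypothesis applies and gives uniqueness of $I_n \cdots I_2$.

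The step I expect to be the main obstacle is the base/uniqueness claim that $I$ contains a \emph{unique} primitive maximal right ideal of norm $\mfp$ — i.e.\ that the ``next'' neighbor in the chain is determined by $I$ alone. Abstractly this is \propos{prim} applied after dividing by the appropriate prime, but one must be careful that primitivity is preserved and that the uniqueness is not spoiled by the two-sided ideal $\P$ lurking among the norm-$\mfp$ ideals; here is exactly where $\mfp \nmid \disc(\qa)$ is used, via \propos{ideal-alternative}, to guarantee that a norm-$\mfp$ ideal containing the primitive ideal $I$ is itself primitive rather than prime. Once that is in place, the induction closes and the final sentence of the excerpt — that $\gs$ is a tree when $S$ has only one prime coprime to $\disc(\qa)$ — follows because distinct paths from the root to a leaf would give distinct factorizations of the associated primitive ideal, contradicting what we have just proved.
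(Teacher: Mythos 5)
Your proof is correct but takes a genuinely different route from the paper. Both arguments induct on $n$, but you isolate the \emph{rightmost} factor $I_1$ while the paper isolates the \emph{leftmost} one and closes the induction with a global counting argument: using formula~\eq{degree} for the number of primitive right $\mo$-ideals of norm $\p^{k+1}$, it shows this count exactly matches what can be produced from uniquely-factored length-$k$ ideals, so no ideal can admit two factorizations. Your approach is local and avoids the degree formula entirely: two distinct primitive right $\mo$-ideals $I_1 \ne I_1'$ of norm $\p$ correspond under \lemm{prim-enum} to distinct lines in $\pone(\zf/\p)$, whence $I_1 \cap I_1' = \p\mo = \P$, so $I \subseteq \P$ would contradict the primitivity of $I$; then $I I_1^{-1}$ is a primitive right $\ordl{I_1}$-ideal of norm $\p^{n-1}$ and the inductive hypothesis finishes. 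This makes the uniqueness mechanism (rigidity of lines in $\pone$) more transparent than the counting. Two small warnings, though. First, the arithmetic in your ``more precisely'' sentence is off: $\nrd(I_1 I_1')$ always equals $\p^2$ and hence does divide $\p^n$ for $n \ge 2$, and the containment gives $\nrd(I_1\cap I_1') = \p^2$, not $\nrd(I_1\cap I_1') \mid \p$; the clean and correct statement is simply that $I_1 \ne I_1'$ forces $I_1 \cap I_1' = \P$, which together with $I \subseteq I_1 \cap I_1'$ and primitivity of $I$ gives the contradiction. Second, since $I I_1^{-1}$ is a right $\ordl{I_1}$-ideal rather than a right $\mo$-ideal, the inductive hypothesis must be invoked with $\ordl{I_1}$ in place of $\mo$; the proposition — like \lemm{prim-enum} and \propos{ideal-alternative} on which it rests — holds for any maximal order, so this is harmless, but worth stating explicitly. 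With those repairs the argument is complete and, to my mind, cleaner than the paper's.
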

\begin{proof} We will show the result using an induction on $n$. For $n=1$ it is trivial. Assume that it is true of $k$ and suppose that there are two factorizations $I_{k+1} I_k \cdots  I_{1} $ and $I'_{k+1} I'_k \cdots  I'_{1} $ of $I$. First suppose that $I_{k+1} = I'_{k+1}$. We note that $\ordl{I_k}=\ordl{I'_k}$ and $I_k \cdots I_1 = I'_k \cdots I'_1$ as a unique primitive ideal connecting $\ordl{I_k}$ and $\mo$. Factorizations $I'_1 \cdots I'_k$ and $I_1 \cdots I_k $ must be the same by induction hypothesis. Let us now consider the case $I_{k+1} \ne I'_{k+1}$. We are seeking a contradiction by counting the total number of ideals that has factorization length $k+1$. Every primitive right $\mo$-ideal with the factorization length $k+1$ can be written as $J_{(1)}J_{(k)}$ where $J_{(k)}$ is a primitive right $\mo$-ideal with factorization length $k$ and $J_{(1)}$ is a maximal right ideal with norm $\p$. Therefore $\ordl{J_{(1)}}$ is an $S$ neighbor of $\ordl{J_{(k)}}$. Each maximal order has $(N\at{\p}+1)$ $\p$ neighbours according to \lemm{prim-enum}. According to \propos{connect} they can be distance $k+1$ or $k-1$ from $\mo$. In our case there is only one  $\p$ neighbour that is distance $k-1$ from $\mo$. If we had more than one we would get two different factorizations of length $k$ of the same ideal, which contradicts induction hypothesis. We conclude that each maximal order distance $k$ from $\mo$ with $\nrd$ $\p^k$ must have $N\at{\p}$ $\p$ neighbours of distance $k+1$ from $\mo$. Therefore there is at most $N\at{\p^{k}}(N\at{\p}+1)$ such maximal orders. On the other hand by equation \eq{degree} we know that there is precisely that many of them. This implies that $I_{k+1} \ne I'_{k+1}$ is not possible, because the inequality would imply that $\ordl{I}$ is a $\p$ neighbor of two different maximal orders $\ordr{I_{k+1}}$ and $\ordr{I'_{k+1}}$. 
\end{proof}

Out next goal is to show that ordering ideals in the factorization by their norms makes the factorization unique. Next we note that we can reorder maximal ideals in the factorization, such that their norms are in any preassigned order. This follows from the following: 

\begin{thm}[Special case of Theorem 22.28 in \cite{R}] \label{thm:ordering} Let $I$ be a left ideal of $\mo$ and let $\set{S_1,\ldots,S_n}$ be the composition factors of the $\mo$-module $\mo/I$ arranged in any preassigned order. Then there is a factorization of $I$ as described in \theo{factor} such that the factor modules
\[
\ordl{I_1} / I_1,\, I_1 / I_1 I_2,\, \ldots,\, I_1 \cdots I_{n-1} / I_1 \cdots I_{n} 
\] are precisely $S_1,\dotsc,S_n$ in that order. 
\end{thm}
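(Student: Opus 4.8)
This statement is exactly Theorem~22.28 of \cite{R}, specialized to the case where $\mo$ is a maximal $\zf$-order in the quaternion algebra $\qa$; so the shortest route is to check that our hypotheses are a special case of those in \cite{R} and invoke it. If instead one wants a self‑contained argument, the plan is as follows. The observation that turns ``ordering'' into a purely combinatorial problem is that a simple left $\mo$-module is determined up to isomorphism by the prime $\p$ of $\zf$ it lies over: if $\P$ is the prime of $\mo$ over $\p$, then $\mo/\P \cong M_2(\zf/\p)$ when $\p \nmid \disc(\qa)$ and $\mo/\P$ is a finite division ring when $\p \mid \disc(\qa)$, and either way this ring has a unique simple module. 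Combined with \theo{maxid} and \theo{prime-ideals} (for a maximal ideal $I_k$ the module $\ordl{I_k}/I_k$ is simple with $\Ann_{\zf}(\ordl{I_k}/I_k) = \nrd(I_k)$) and the fact that left multiplication by an invertible ideal sends a simple module to an isomorphic one, the $k$-th successive quotient $I_1\cdots I_{k-1}/I_1\cdots I_k$ of any factorization produced by \theo{factor} is isomorphic to the unique simple module over $\nrd(I_k)$. Hence prescribing the order of the composition factors $S_1,\dots,S_n$ is the same thing as prescribing the sequence of primes $\nrd(I_1),\dots,\nrd(I_n)$.

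Next I would prove a transposition lemma: two consecutive factors with \emph{coprime} norms can always be swapped. Suppose $I_j I_{j+1}$ is a proper product of maximal integral ideals with $\nrd(I_j)=\p\ne\mfq=\nrd(I_{j+1})$; set $\CO := \ordl{I_j}$ and $J := I_j I_{j+1}$, so $\nrd(J)=\p\mfq$ and $\CO/J$ has length $2$ with composition factors the simple modules over $\p$ and over $\mfq$. Since $\p,\mfq$ are coprime, $\Ann_{\zf}(\CO/J)=\p\mfq$, so $\CO/J$ is a module over $\CO/\p\mfq\CO \cong \CO/\p\CO \times \CO/\mfq\CO$ and splits as the direct sum of its $\p$-primary and $\mfq$-primary parts, each simple. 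Put $I_j' := J + \mfq\CO$; then $\CO/I_j'$ is the $\mfq$-simple quotient, so $\nrd(I_j')=\mfq$ and $I_j'$ is a maximal integral left $\CO$-ideal by the left analogue of \propos{prime-norm}. Being normal it is invertible, so $I_{j+1}' := (I_j')^{-1}J$ satisfies $I_j' I_{j+1}' = J$, is integral because $J\subset I_j'$, has $\ordr{I_{j+1}'}=\ordr{J}=\ordr{I_{j+1}}$ and $\nrd(I_{j+1}')=\p\mfq/\mfq=\p$, hence is maximal with the required norm; and the two successive quotients of $I_j'I_{j+1}'$ are the $\mfq$-simple then the $\p$-simple module. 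Thus $I_j I_{j+1} = I_j' I_{j+1}'$ realizes the transposed order of norms while leaving $\ordl{I_j}$ and $\ordr{I_{j+1}}$ unchanged.

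To assemble the theorem, start from any proper factorization $I=I_1\cdots I_n$ given by \theo{factor}; its sequence of norms is some permutation of the multiset of primes determined by $\{S_1,\dots,S_n\}$. Any target permutation of a multiset is reachable by a sequence of adjacent transpositions each interchanging two \emph{distinct} entries (transpositions of equal entries are never needed), and each such transposition is carried out on the relevant consecutive pair by the lemma above without disturbing the other factors or the outer left and right orders. Passing to isomorphism classes of factor modules via the first paragraph, this yields a factorization whose successive quotients are $S_1,\dots,S_n$ in the prescribed order. The analogous statement for right ideals follows by applying the conjugation involution $q\mapsto\bar q$, which fixes $\mo$ and every order and interchanges left and right ideals.

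The only genuinely delicate point is the bookkeeping in the last paragraph: verifying that a swap applied in positions $j,j+1$ truly leaves $I_1,\dots,I_{j-1}$, $I_{j+2},\dots,I_n$ and all intermediate partial products intact, and that successive swaps do not interfere. I expect this to be cleanest after localizing at each prime of $\zf$ separately, where $\mo$ becomes a local ring (a matrix ring over a complete DVR, or a maximal order in a local division algebra), the factorization becomes a literal chain of elementary-divisor moves, and swaps over distinct primes are visibly independent.
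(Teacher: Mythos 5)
The paper offers no proof of this statement: it is stated, as you recognize, as a direct citation of Theorem 22.28 in Reiner's \emph{Maximal Orders}, specialized to a maximal order in a quaternion algebra over $\zf$. Your first sentence is therefore exactly what the paper does.

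Your self-contained alternative is correct and is a genuinely different argument from Reiner's. Reiner's proof of 22.28 proceeds top-down by induction: one shows directly that $\CM$ has a maximal left submodule $M_1 \supset I$ with $\CM/M_1 \cong S_1$, peels off $I_1 = M_1$, and recurses on $I_1^{-1}I$. You instead go bottom-up: take \emph{any} factorization from \theo{factor} and reorder it by adjacent transpositions, each justified by a CRT-style splitting $\CO/\p\mfq\CO \cong \CO/\p\CO \times \CO/\mfq\CO$ which lets you swap two consecutive maximal factors of coprime norm without touching the outer orders. The two ingredients in your reduction — that a simple $\mo$-module is determined up to isomorphism by the prime of $\zf$ under it, and that a permutation of a multiset of primes is reachable by adjacent transpositions of \emph{distinct} entries (move the leftmost copy of the desired first prime to the front, then recurse) — are both correct for maximal orders in quaternion algebras, and they reduce the theorem cleanly to the transposition lemma. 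Note that your reduction genuinely uses the quaternion case: in Reiner's general setting a prime of $\zf$ can support several non-isomorphic simples, so prescribing the sequence of norms would be weaker than prescribing the sequence of composition factors; Reiner's top-down argument does not need this identification.

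Two small points. First, the assertion $\nrd(I_j')=\mfq$ deserves one extra line: from $\mfq\CO\subset I_j'$ one gets $\nrd(I_j')\mid\mfq^2$, and combined with $\nrd(I_j')\mid\p\mfq$ and primality of the norm of a maximal ideal, $\nrd(I_j')=\mfq$; alternatively $\nrd(I_j')=\Ann_{\zf}(\CO/I_j')=\mfq$ by the left analogue of \theo{maxid} together with \theo{prime-ideals}. Second, you need not worry about transpositions of equal primes being required, and you should say so affirmatively: the bubblesort strategy of always promoting the \emph{leftmost} matching entry guarantees every swap involves distinct primes, and this is not merely a simplification but is essential, since the transposition lemma visibly fails for $\p=\mfq$. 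With those clarifications, your proof is complete as a replacement for the citation.
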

Note that according to the proof of Theorem 22.24 of \cite{R}, the composition factors $S_k$ uniquely define prime ideals $\P_k$ contained in maximal ideals $I_k$. The immediate corollary of the theorem above is: 
\begin{cor} \label{cor:ordering} Let $I$ be a right ideal of $\mo$ and let $\p_1,\ldots,\p_n$ be any sequence of prime ideals of $\zf$ (not necessarily distinct) such that $\nrd\at{I} = \prod_{k=1}^n \p_k $. Then there exists a factorization $I_1 \cdots I_n$ of $I$ into maximal ideals of $\qa$ such that $\nrd\at{I_k} = \p_k$. 
\end{cor}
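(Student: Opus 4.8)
The plan is to obtain the corollary directly from \theo{ordering}, by translating the freedom to permute the composition factors of $\mo/I$ into the freedom to permute the reduced norms of the maximal factors. Concretely, I would set up the following dictionary. Each composition factor $S$ of $\mo/I$ determines, by the remark following \theo{ordering} together with \theo{prime-ideals}, a unique prime ideal $\P(S)$ of $\mo$ --- the one contained in whichever maximal factor realizes $S$ --- and hence a prime ideal $\p(S) := \P(S)\cap\zf$ of $\zf$. If $I_S$ denotes a maximal ideal realizing $S$, then $\nrd(I_S) = I_S\cap\zf = \p(S)$: the first equality is \theo{maxid}, and the second holds because $\P(S)\subseteq I_S$ gives $\P(S)\cap\zf\subseteq I_S\cap\zf$ while both are nonzero prime ideals of the Dedekind domain $\zf$, hence maximal, hence equal. (Throughout I use the right-module and right-ideal analogues of \theo{factor} and \theo{ordering}, which are stated for left ideals but hold verbatim on the other side.)

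First I would apply \theo{factor} to write $I$ as some proper product $J_1\cdots J_m$ of maximal integral ideals, where $m$ is the composition length of the right $\mo$-module $\mo/I$. By \theo{maxid} each $\nrd(J_j)$ is a prime ideal of $\zf$, and multiplicativity of the reduced norm of ideals gives $\nrd(I) = \prod_{j=1}^{m}\nrd(J_j)$. Since $\zf$ is a Dedekind domain this is the prime factorization of $\nrd(I)$, so comparing with the hypothesis $\nrd(I) = \prod_{k=1}^{n}\p_k$ forces $m = n$ and an equality of multisets $\{\nrd(J_1),\dots,\nrd(J_n)\} = \{\p_1,\dots,\p_n\}$. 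Through the dictionary this says that the multiset $\{\p(S) : S \text{ a composition factor of } \mo/I\}$ equals $\{\p_1,\dots,\p_n\}$.

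Using this, I would reindex the composition factors of $\mo/I$ as $S_1,\dots,S_n$ (with multiplicity) so that $\p(S_k) = \p_k$ for every $k$. Then \theo{ordering} provides a factorization $I = I_1\cdots I_n$ into maximal ideals whose successive quotients are $S_1,\dots,S_n$ in that order, and by the dictionary $\nrd(I_k) = \p(S_k) = \p_k$ for each $k$, which is the assertion.

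The only genuinely non-formal step, and the one I would write out carefully, is the dictionary entry $\nrd(I_k) = \p(S_k)$, i.e.\ that a composition factor of $\mo/I$ pins down the reduced norm of the maximal ideal realizing it. This is the content of the remark after \theo{ordering} (the factor $S_k$ determines the prime $\P_k$ of $\mo$ inside $I_k$) combined with \theo{maxid} ($\nrd(I_k) = I_k\cap\zf$) and the remark after \theo{prime-ideals}. Everything else --- multiplicativity of ideal norms, uniqueness of factorization in $\zf$, and the reindexing --- is routine bookkeeping.
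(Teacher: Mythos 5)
Your proof is correct and matches the paper's reasoning, which treats the corollary as an immediate consequence of \theo{ordering} together with the preceding remark that composition factors determine prime ideals $\P_k \subset I_k$ of $\mo$ (hence, via \theo{maxid} and $\P_k \cap \zf \subset I_k \cap \zf = \nrd(I_k)$, the norms $\nrd(I_k)$). You have simply spelled out the ``immediate'' step — the dictionary between composition factors and primes of $\zf$ and the multiset comparison — in full detail.
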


We use the corollary above to define class of factorizations that is unique. 

\begin{prop} \label{prop:ordered} Let $\p_1,\ldots,\p_l$ be distinct prime ideals of $\zf$ coprime to the discriminant of $\qa$ and let $I$ be a primitive right ideal of $\mo$ with $\nrd\at{I} = \p_1^{k(1)} \ldots \p_l^{k(l)}$.  Then $I$ factors uniquely into a proper product 
\begin{equation}\label{eq:ordered}
 I = I_{1,1} \cdots I_{1,k(1)} I_{2,1} \cdot I_{2,k(2)}\cdots I_{l,k(1)}\cdots I_{l,k(l)},
\end{equation}
of maximal ideals such that $\nrd{I_{i,j}} = \p_i$. 
\end{prop}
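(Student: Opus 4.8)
The plan is to get existence directly from \corr{ordering} and to prove uniqueness in two scales: first isolate, for each $i$, the ``$\p_i$-block'' $I_{i,1}\cdots I_{i,k(i)}$ and show it is intrinsic to $I$, and then invoke the uniqueness of prime-power factorizations proved just above \emph{inside} each block.

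For existence I would apply \corr{ordering} to the right ideal $I$ with the concatenated list of primes $\p_1,\dots,\p_1,\p_2,\dots,\p_2,\dots,\p_l,\dots,\p_l$ in which $\p_i$ occurs $k(i)$ times; this produces a proper product $I=I_{1,1}\cdots I_{l,k(l)}$ of maximal ideals with $\nrd(I_{i,j})=\p_i$, which is a factorization of the form \eq{ordered}.

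For uniqueness, take any factorization $I=I_{1,1}\cdots I_{l,k(l)}$ of that form and group its factors into blocks $J_i:=I_{i,1}\cdots I_{i,k(i)}$, so that $I=J_1\cdots J_l$ is a proper product of integral ideals with $\nrd(J_i)=\p_i^{k(i)}$; write $\CM_0=\ordl{J_1}=\ordl{I}$, $\CM_i=\ordr{J_i}=\ordl{J_{i+1}}$ (so $\CM_l=\ordr{I}=\CM$), and $L_i:=J_1\cdots J_i$. The first step is to show each $J_i$ is primitive. If not, $J_i$ lies in some prime ideal of $\CM_{i-1}$; such a prime has reduced norm a power of $\nrd(J_i)=\p_i^{k(i)}$, hence lies over $\p_i$, and since $\p_i\nmid\disc(\qa)$ it equals $\p_i\CM_{i-1}$, so $J_i\subseteq\p_i\CM_{i-1}$. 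Then $\p_i^{-1}J_i$ is an integral $\CM_{i-1}$-$\CM_i$-ideal, hence $\p_i^{-1}I=J_1\cdots J_{i-1}(\p_i^{-1}J_i)J_{i+1}\cdots J_l$ is again a proper product of integral ideals (the matching of left and right orders is untouched) and therefore integral; this forces $I\subseteq\p_i\CM$, contradicting the primitivity of $I$. The second step is to show $L_i$ depends only on $I$. Localizing $I=J_1\cdots J_l$ at a prime $\p$ of $\zf$, every factor $J_j$ with $\p_j\neq\p$ has reduced norm a unit at $\p$, so it localizes to the full local maximal order and its left and right orders coincide over $\zf_\p$; telescoping then shows $(L_i)_\p$ equals $I_\p$ when $\p\in\{\p_1,\dots,\p_i\}$ and equals $(\ordl{I})_\p$ otherwise. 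Both $I$ and $\ordl{I}$ are determined by $I$, and a lattice is the intersection of its localizations, so $L_i$ is determined by $I$, and hence so is $J_i=L_{i-1}^{-1}L_i$. The third step: each $J_i$ is a primitive right ideal of the maximal order $\CM_{i-1}$ with $\nrd(J_i)=\p_i^{k(i)}$ and $\p_i\nmid\disc(\qa)$, so the preceding proposition (unique prime-power factorization) applies and the factorization $J_i=I_{i,1}\cdots I_{i,k(i)}$ into maximal ideals is unique; all its factors have norm $\p_i$ by multiplicativity of $\nrd$ together with \theo{maxid}. Combining the three steps yields uniqueness of \eq{ordered}.

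I expect the second step to be where care is needed: one must check that the reduced-norm bookkeeping under localization really does collapse every factor outside the $i$-th block, which rests on the fact that an integral normal ideal whose reduced norm is a power of $\p_i$ localizes to the whole order away from $\p_i$, and that $\p_j$-neighbouring maximal orders agree locally outside $\p_j$. An alternative route for this step is to use the primary decomposition of the finite-length $\CM$-module $\CM/I$ over the Dedekind domain $\zf$: the chain $\CM_0=L_0\supseteq L_1\supseteq\cdots\supseteq L_l=I$ has successive quotients $L_{i-1}/L_i\cong(J_1\cdots J_{i-1})\otimes_{\CM_{i-1}}(\CM_{i-1}/J_i)$ (invertibility of the normal ideal $J_1\cdots J_{i-1}$), each $\p_i$-primary, so $L_i/I$ must be the sum of the $\p_j$-primary parts of $\CM/I$ with $j>i$, which is manifestly intrinsic to $I$. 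The primitivity hypothesis on $I$ enters only in the first step, and it is genuinely needed there: without it a block $J_i$ could be a two-sided ideal such as $\p_i\CM_{i-1}$ when $k(i)=2$, which has many factorizations into two maximal ideals. The remaining ingredients are either short manipulations with proper products or immediate appeals to results already established.
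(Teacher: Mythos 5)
Your proof is correct, but it takes a genuinely different route from the paper's. The paper proves uniqueness by a single counting argument: the number of ordered products of the form \eq{ordered} that can yield a primitive ideal is at most $N(\p_1)^{k(1)-1}(N(\p_1)+1)\cdots N(\p_l)^{k(l)-1}(N(\p_l)+1)$, which by \eq{degree} is exactly the number of primitive right $\CM$-ideals with that norm; since existence (from \corr{ordering}) makes the map from ordered products to primitive ideals surjective, the count forces it to be a bijection. Your argument is instead structural: you isolate the $\p_i$-blocks $J_i$, prove each is primitive (otherwise $\p_i^{-1}I$ would be integral, contradicting primitivity of $I$), show the partial products $L_i=J_1\cdots J_i$ are intrinsic to $I$ by localization, and then invoke the preceding proposition (uniqueness for primitive ideals of prime-power norm) inside each block. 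Both are sound. Your version is longer but exhibits the factorization as canonically built from $I$ rather than inferring uniqueness from an exact numerical coincidence; it also localizes the role of the primitivity hypothesis (used only to show each $J_i$ is primitive), whereas the paper's count treats the whole product at once. One small slip worth fixing: in the primitivity-of-$J_i$ step, ``such a prime has reduced norm a power of $\nrd(J_i)$'' should read ``reduced norm \emph{dividing} $\nrd(J_i)$,'' which together with the classification of prime ideals over $\p_i\nmid\disc(\qa)$ gives $P=\p_i\CM_{i-1}$ as you intend.
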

\begin{proof} \label{prop:unique-factor}
The existence of the factorization follows from \corr{ordering}. The total number of products of the form \eq{ordered} that correspond to a primitive ideal is at most 
\[
 N\at{\p_1}^{k(1)-1}\at{N\at{\p_1}+1}  \cdots   N\at{\p_l}^{k(l)-1}\at{N\at{\p_l}+1} .
\]
On the other hand, by equation \eq{degree}, there are precisely that many distinct primitive two-sided ideals with  norm $\p_1^{k(1)} \cdots \p_l^{k(l)}$. We conclude that no two products of the form \eq{ordered} can be equal and therefore the factorization of any integral ideal of the form \eq{ordered} is unique. 
\end{proof}

The proposition above allows us to define a spanning tree of $\gs$. We first note that path connecting root $\mo$ and any vertex $\mo'$ of $\gs$ corresponds to the factorization of the unique primitive ideal containing $\mo'\mo$ into primitive maximal ideals. We now restrict $\gs$ to a subgraph that contains only path described in the proposition above. This subgraph is connected because described type of factorization exists and there is a unique path between any vertex and the root. We conclude that the subgraph is a tree We call it a spanning tree of $\gs$ and use notation $\tsm$ for it. 

It is useful to observe that the product of two primitive ideals is not always primitive. The following proposition shows under which condition it is true. The following  proposition will be useful for understanding canonical forms of factorizations. It also gives an idea why formula \eq{degree} has separate factors for different exact prime factors. 

\begin{prop} \label{prop:product} Let $I$ and $J$ be primitive ideals such that $\ordr{I}=\ordl{J}$ and $\nrd\at{I}$ is coprime to $\nrd\at{J}$, then $IJ$ is a primitive ideal. 
\end{prop}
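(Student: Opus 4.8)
The plan is to show directly that $IJ$ is contained in no prime ideal of its left order, arguing one prime of $\zf$ at a time via localization. Set $\CO_\ell=\ordl I$, $\CO_m=\ordr I=\ordl J$ and $\CO_r=\ordr J$; since the product is proper, $\ordl{IJ}=\CO_\ell$ and $\ordr{IJ}=\CO_r$, and $IJ$ is integral (as $J\subseteq\CO_m$ gives $IJ\subseteq I\CO_m=I\subseteq\CO_\ell$). Recall from the preliminaries that, because the two-sided $\CO_\ell$-ideals factor uniquely into primes, $IJ$ is primitive iff it lies in no prime ideal $\P$ of $\CO_\ell$, and that for a prime $\P$ of $\CO_\ell$ lying over $\p=\P\cap\zf$ one has $IJ\subseteq\P$ iff $(IJ)_\p\subseteq\P_\p=\rad((\CO_\ell)_\p)$ (for $\mathfrak q\neq\p$ the localization $\P_{\mathfrak q}$ equals $(\CO_\ell)_{\mathfrak q}$, which already contains the integral ideal $(IJ)_{\mathfrak q}$). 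So it suffices to fix such a $\P$, assume $(IJ)_\p=I_\p J_\p\subseteq\rad((\CO_\ell)_\p)$, and derive a contradiction.

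First, $\p$ must divide $\nrd(IJ)=\nrd(I)\nrd(J)$: otherwise $\nrd(I_\p)$ and $\nrd(J_\p)$ would both be the unit ideal of $(\zf)_\p$, which forces $I_\p=(\CO_\ell)_\p$ and $J_\p=(\CO_m)_\p=(\CO_\ell)_\p$ — an integral one-sided ideal of a maximal order over the discrete valuation ring $(\zf)_\p$ equals that order precisely when its reduced norm is trivial, by the discussion around (24.1) in \cite{R} — and then $(IJ)_\p=(\CO_\ell)_\p$ is not contained in its own Jacobson radical, a contradiction. Since $\nrd(I)$ and $\nrd(J)$ are coprime, $\p$ divides exactly one of them.

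Suppose $\p\mid\nrd(I)$ and $\p\nmid\nrd(J)$. Then $\nrd(J_\p)$ is trivial, so by the same fact $J_\p=(\CO_m)_\p$; since $\CO_m=\ordr I$ this yields $(IJ)_\p=I_\p(\CO_m)_\p=I_\p$, hence $I_\p\subseteq\rad((\CO_\ell)_\p)=\P_\p$, and together with $I_{\mathfrak q}\subseteq(\CO_\ell)_{\mathfrak q}=\P_{\mathfrak q}$ for $\mathfrak q\neq\p$ we get $I\subseteq\P$, contradicting the primitivity of $I$. The case $\p\mid\nrd(J)$, $\p\nmid\nrd(I)$ is the mirror image: now $I_\p=(\CO_\ell)_\p$, so $(\CO_m)_\p=\ordr{I_\p}=(\CO_\ell)_\p$ and $(IJ)_\p=(\CO_\ell)_\p J_\p=(\CO_m)_\p J_\p=J_\p$; thus $J_\p\subseteq\rad((\CO_m)_\p)=\mathfrak Q_\p$, where $\mathfrak Q$ is the prime of $\CO_m$ over $\p$, and since $J_{\mathfrak q}\subseteq(\CO_m)_{\mathfrak q}=\mathfrak Q_{\mathfrak q}$ for $\mathfrak q\neq\p$ we obtain $J\subseteq\mathfrak Q$, contradicting the primitivity of $J$. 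Either way, no prime ideal can contain $IJ$, so $IJ$ is primitive.

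The only substantive ingredient is the local collapse used twice above: at a prime $\p$ where a one-sided ideal has unit reduced norm, the ideal equals the order and the ambient localized left and right orders coincide. Everything else is formal — faithful flatness of localization, the local nature of containment, and the fact that coprimality of $\nrd(I)$ and $\nrd(J)$ guarantees that at every prime at most one of $I$, $J$ is nontrivial. I therefore do not expect a genuine obstacle; the one thing to keep straight when writing it out is the bookkeeping of which maximal $(\zf)_\p$-order of $\qa$ serves as the left and right order after localization, so that the cancellations $I_\p(\CO_m)_\p=I_\p$ and $(\CO_m)_\p J_\p=J_\p$ are applied to the correct orders.
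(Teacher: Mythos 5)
Your proof is correct, but it takes a genuinely different route from the one in the paper. The paper argues globally: it assumes $IJ\subseteq\P$ for a prime $\P$ of $\ordr J$ with (WLOG) $\nrd(\P)\mid\nrd(I)$, writes $J=J_n\cdots J_1$ as a proper product of maximal ideals, and strips off the factors $J_1,J_2,\dotsc$ one at a time while conjugating $\P$ along the chain of orders, ultimately trapping $I$ inside a two-sided ideal and contradicting its primitivity. Your argument instead localizes at each prime $\p$ of $\zf$ and uses the coprimality hypothesis to observe that at any given $\p$ at most one of $I_\p$, $J_\p$ is a proper ideal, the other collapsing to the full localized order; the contradiction then drops out directly, with $\P$ landing on $I$ or $J$ according to which norm $\p$ divides. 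Two things your version buys: it treats $I$ and $J$ completely symmetrically (avoiding the paper's ``without loss of generality,'' whose justification it leaves implicit), and it replaces the delicate intersection computation $\P'\cap(J_1^{c})^{-1}\P J_1^{c}=\P'(J_1^{c})^{-1}\P J_1^{c}$ and the iterative peeling by a one-shot local collapse, which is easier to verify. The paper's argument, on the other hand, stays entirely in the language of the ideal factorization theory (\theo{factor}) already set up, so it doesn't need to invoke the local characterization of primes via the Jacobson radical; both are reasonable choices, and both proofs rest on the same two facts (multiplicativity of $\nrd$ under proper products, and trivial reduced norm forcing an integral one-sided ideal to equal the order).

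One small bookkeeping point worth keeping explicit when you write it up: in your second case you pass from $\rad((\CO_\ell)_\p)$ to $\rad((\CO_m)_\p)$ via $(\CO_\ell)_\p=(\CO_m)_\p$, and then conclude $J\subseteq\mathfrak Q$ for $\mathfrak Q$ the prime of $\CO_m$ over $\p$. That step is fine, but it silently uses that the \emph{global} prime $\mathfrak Q$ of $\CO_m$ over $\p$ localizes to $\rad((\CO_m)_\p)$, which is exactly the correspondence $\p\mapsto\P=\CM\cap\rad(\CM_\p)$ recalled in the preliminaries; citing that explicitly would close the loop.
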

\begin{proof} 
Suppose that $IJ$ is not primitive. Then there must exist a prime ideal $\P$ of $\ordr{J}$ such that $IJ \subset \P$. Without loss of generality, assume that $\nrd\at{\P}$ divides $\nrd\at{I}$. Let $J_n \ldots J_1$ be a factorization of $J$ into a product of maximal ideals. Let $\P'$ be the prime ideal of $\ordl{J_1}$ contained in $J_1$ and let $J_1^{c}$ be the integral ideal such that $ J_1 J_1^{c} = \P'$. Note that $IJ J_1^{c} \subset \P'$ and $IJ J_1^{c} \subset \P J_1^{c} = J_1^{c} \at{J_1^{c}}^{-1} \P J_1^{c} \subset \at{J_1^{c}}^{-1} \P J_1^{c}$. We conclude that $IJJ_1^{c} \subset \P' \cap \at{J_1^{c}}^{-1} \P J_1^{c} = \P'\at{J_1^{c}}^{-1} \P J_1^{c}$. The equality $\P' \cap \at{J_1^{c}}^{-1} \P J_1^{c} = \P'\at{J_1^{c}}^{-1} \P J_1^{c}$ is true because $\P'$  and $\at{J_1^{c}}^{-1} \P J_1^{c}$ have different norms and therefore different prime ideals. Above implies that $I J_n \ldots J_2 \subset \at{J_1^{c}}^{-1} \P J_1^{c}$. We proceed in a similar way to show that $I \subset \P''$ for some two-sided ideal $\P''$. This contradicts to the assumption that $I$ is a primitive ideal. 
\end{proof}

Proposition above also allows one to find relation between different elements of $\gens$. Let us consider an example to illustrate the idea. Let $S = \set{\p_1,\p_2}$ where both $\p_1$ and $\p_2$ does not divide the discriminant of the algebra. Suppose also that all one sided ideals are principal. Let $g_1$ and $g_2$ be elements of $\gens$ whose norms generate ideals $\p_1$ and $\p_2$. Consider now primitive ideal $g_1 g_2 \mo$ that factors into maximal ideals $I_1 I_2$ with norm $\p_1$ and $\p_2$. However there exist factorization $I'_2 I'_1$ such that $\nrd\at{I'_2} = \p_2, \nrd\at{I'_1} = \p_1$. Therefore we find that $g_1 g_2 = g'_2 g'_1 u$ for some unit $u$ and $g'_1, g'_2$ where $\nrd{g'_k}$ generate ideal with norm $\p_k$. This idea also can be used to find relations with more then two generators.  

\subsubsection{The algorithm for constructing spanning tree of $\gs$. Estimating its depth and size. }

As we discussed before, \propos{unique-factor} implies that any ordering of the elements of $S$ coprime to the discriminant of $\qa$ defines the spanning tree $\tsm$ of $\gs$. Now we discuss the algorithm for building the spanning tree of $\gs$ and finding elements of $\gens$. We first give a formal definition of the data structure we use to store $\tsm$ and then describe its interpretation.

\begin{dfn} \label{defn:tree-desc} Let the description of $\tsm$ be a sequence $V_j, j = 0, \ldots, M$ where $M$ is possibly $\infty$. Each $V_k$ is a finite sequence $v_{k,1},\ldots,v_{k,M(k)}$. Each $v_{k,j}$ is a tuple $(s_1,s_2,s_3,q)$ such that $s_1,s_2,s_3$ are integers and $q$ is from $\qa$. Let also 
\begin{enumerate}
	\item IDEAL-ID$(s_1,s_2,s_3,q)=s_1$
	\item PARENT-ID$(s_1,s_2,s_3,q)=s_2$
	\item MAX-ORDER$(s_1,s_2,s_3,q)=(s_3,q)$
	\item $V[k]=V_k$, $V[k,j] = v_{k,j}$ 
\end{enumerate}
We call $V$ the sequence of levels of $\tsm$, $V_j$ a level of $\tsm$ and $v_{i,j}$ a vertex description. $V_0$ consists of the single element $(1,0,k_0,q_0)$.
\end{dfn} 

By constructing $\tsm$ we mean first deciding if $M$ is finite and then finding the description of $\tsm$ described above.

To interpret the definition above we fix the ordering of ideals from $S$ coprime to discriminant of $\qa$ and conjugacy classes of maximal orders of $\qa$. Recall that the number of conjugacy classes of maximal orders of $\qa$ is always finite. Let $\p_1,\ldots,\p_l$ be ideals from $S$ coprime to discriminant of $\qa$. Let $\mo_1,\ldots,\mo_m$ be representatives of all conjugacy classes of maximal orders of $\qa$. 

The vertex description $v$ defines a vertex of $\tsm$ which is a maximal order. Let MAX-ORDER$\at{v}=\at{s,q}$, then the corresponding maximal order is $q\mo_s q^{-1}$. This representation is valid because for any maximal order of $\qa$ we can find such pair $\at{s,q}$. In particular, MAX-ORDER$\at{v_{0,1}}=\at{s_0,q_0}$ where $\mo = q_0 \mo_{s_0} q_0^{-1}$. For all $v$ from $V_k$ it is the case that MAX-ORDER$\at{v}$ correspond to maximal orders $\mo'$ such that $\cmi{\mo,\mo'}=k$. 

The vertex description also contains information about edges of $\tsm$. For $j=\text{PARENT-ID}\at{v}$ the maximal order $\mo'$ described by $v_{k-1,j}$ is the unique $S$-neigbour of $q\mo_s q^{-1}$ such that $\cmi{\mo',\mo}=k-1$. More precisely,  $q\mo_s q^{-1}$ is a $\p$-neighbor of $\mo'$, where $\p$ has index IDEAL-ID$\at{v}$ in the sequence $\p_1,\ldots,\p_l$. If the vertex description sequence $v_1,\ldots,v_n$ corresponds to a path in $\tsm$, it must be the case that IDEAL-ID$\at{v_1},\ldots,$ IDEAL-ID$\at{v_n}$ is a 
non-decreasing sequence of integers. This is because every such path corresponds to a factorization of a primitive ideal into a sequence of the form \eq{ordered}. 

The algorithm for building $\tsm$ relies on the description of the $S$-neighbors of all conjugacy classes of maximal orders given by the {\it maximal orders adjacency structure} (see \defin{adj-desc} ). It is a part of the procedure FIND-S-GENERATORS on \fig{gs-alg}. 

We do not provide pseudo-code of some procedures used in pseudo-code as algorithms implementing them are well-known and their implementations are available. For example, many of them are part of software package MAGMA. Below is the list of this procedures together with references describing corresponding algorithms:

\rem{Procedures used in pseudo-code}\label{rem:procedures}

\begin{itemize}

	\item Finding generators of a maximal order (\fig{neigbours}, line \ref{line:gens}). Implementation is available in MAGMA. 
	\item Testing the membership in a maximal order (\fig{neigbours}, line \ref{line:membership}). Implementation is available in MAGMA. 
	\item Computing the discriminant of quaternion algebra (\fig{tree-size}, line \ref{line:disc}). Implementation is available in MAGMA. 
	\item LENGTH. Returns length of the sequence. 
	\item TOTAL-IDEALS. Number of ideals in the set $S$ that does not divide the discriminant. Testing if given ideal divides discriminant of the algebra can be preformed using MAGMA.
	\item For MAX-ORDER, PARENT-ID, IDEAL-ID see \defin{tree-desc}.
	\item We compute valuation $v_S\at{x} = \sum_{\p \in S}v_{\p}\at{x \zf}$, where $v_{\p}$ is a $\p$-adic valuation of the ideal $x\nf$ of number field $F$. Algorithms for computing $v_{\p}$ are well-known.  
	\item  Computing the generators of prime ideals whose norm divides the discriminant of $\qa$ (\fig{exact-synthesis}, line \ref{line:prime-gens}). The algorithm is well known \cite{KV}, its implementation is available in MAGMA.  
	\item IDEAL-GENERATOR (\fig{exact-synthesis}, line \ref{line:ideal-gen}). Finds a generator of a right principal $\mo$-ideal. The algorithm is well known \cite{KV} and is implemented in MAGMA.  
	\item UNIT-DECOMPOSE (\fig{exact-synthesis}, line \ref{line:units}). Decomposes a unit of a maximal order $\mo$ into a product of generators of $\mo^\times$. See \sec{unit-groups} for the related discussion. 
	\item Algorithms for conjugation of maximal orders by an element, multiplication of ideals and maximal orders, ideals inversion, conjugation of ideals by quaternion are well known. (\fig{exact-synthesis}, line \ref{line:multiply})
\end{itemize}

\begin{figure}
\begin{algorithmic}[1]
\Require Maximal order $\mo$ of $\qa$, set $S$ of prime ideals of $\zf$
\Procedure{FIND-S-GENERATORS}{$\mo$,$S$}
\State $M \gets $\Call{SPANNING-TREE-SIZE}{$\mo$,$S$}
\If{$M = \infty $}
\State \Return $\varnothing$
\Else
\State $\Adj \gets $\Call{MAX-ORDERS-ADJ}{$\mo$,$S$}
\State $\at{s_0,q_0} \gets$ \Call{CONJ-CLASS-DESCR}{$\mo$}
\State $V \gets$ sequence of levels of $\tsm$ of size $M$, each level is empty
\State $V_0 \gets \at{1,0,s_0,q_0}$ \label{line:level-zero} \Comment Sequence of length one
\State $V_1 \gets $ \Call{S-NEIGHBORS}{$\Adj$,$1$,$V_0[1]$} \Comment See \fig{neigbours} \label{line:level-one}
\ForAll{ $k = 2,\ldots,M$ } 
\State $V_k \gets $ \Call{NEW-LEVEL}{$\Adj$,$s_0$,$V_{k-1}$,$V_{k-2}$} \label{line:level-k}
\EndFor
\State \Return $\set{ q : \at{s_0,q} = \text{MAX-ORDER}\at{v}, v \in V_k, k= 1,\ldots,M }$
\EndIf
\EndProcedure
\Ensure Set of canonical generators $\gens$ if $\gs$ finite and $\varnothing$ otherwise
\Statex

\Require Maximal orders adjacency description $\Adj$, $s_0$ -- conjugacy class of $\mo$,
\Statex vertices description of the current layer $V_{c}$, of the previous layer $V_{pr}$. 
\Procedure{NEW-LEVEL}{$\Adj$,$s_0$,$V_{c}$,$V_{pr}$}
\State $V_{new} \gets $ empty sequence of vertex descriptions
\ForAll{ $m$ = $1,\ldots,$ \Call{LENGTH}{$V_c$}}
\State $\at{s',q'} \gets \text{MAX-ORDER}\at{V_c[m]}$
\If{ $s' \ne s_0$ } 
\State $N \gets $\Call{S-NEIGHBORS}{$\Adj$,$m$,$V_c[m]$} \Comment See \fig{neigbours}
\State $O \gets$ \Call{MAX-ORDER}{$V_{pr}[\text{PARENT-ID}\at{V_c[m]}]$} 
\ForAll{ $v' \in N $}
\If{ \Call{IDEAL-ID}{$v$} $=$ \Call{IDEAL-ID}{$v'$} }
\If{ $\text{IS-EQUAL} \at{O,\text{MAX-ORDER}\at{v'}}$ } \Comment See \fig{neigbours}
\State Append $v'$ to $V_{new}$	\label{line:add-sometimes}
\EndIf
\Else
\State Append $v'$ to $V_{new}$	\label{line:add-always}
\EndIf
\EndFor
\EndIf
\EndFor
\State \Return $V_{new}$
\EndProcedure

\end{algorithmic}

\caption{\label{fig:gs-alg} The algorithm for constructing spanning tree of $\gs$. }
\end{figure}

\begin{figure}
\begin{algorithmic}[1]
\Require Maximal orders adjacency description $\Adj$, 
\Statex index of the vertex description in its level $m$, 
\Statex vertex description $v$
\Procedure{S-NEIGHBORS}{$\Adj$,$m$,$v$}
\State $\at{s,q} \gets$ \Call{MAX-ORDER}{$v$} 
\State $N \gets$ empty sequence of vertex descriptions 
\ForAll{$k=\text{IDEAL-ID}\at{v},\ldots,\text{TOTAL-IDEALS}$}  \Comment See \rema{procedures}
	\State Append $ v \in \set{\at{k,m,s',qq'} : \at{s',q'} \in \Adj_{s,k} }$ to $N$ \label{line:s-neighbors}
\EndFor
\State \Return $N$
\EndProcedure 
\Ensure Vertex descriptions of $S$-neighbours of $v$

\Statex 
\Require $s_i,q_i$ defining a maximal order $q_i \mo_{s_i} q_i^{-1} $
\Statex $\mo_{s_i}$ is the maximal order with index $s_i$ in the 
\Statex sequence output by \Call{CONJ-CLASSES-LIST}{}  \Comment See  \rema{procedures}
\Procedure{IS-EQUAL}{$\at{s_1,q_1}$,$\at{s_2,q_2}$}
\If{$s_1 = s_2$}
\State $\mo_1, \ldots, \mo_l \gets$ \Call{CONJ-CLASSES-LIST}{}  \Comment See  \rema{procedures}
\State Let $g_1,\ldots,g_m$ be generators of $\mo_{s_1}$ \label{line:gens}
\State $q \gets q^{-1}_1 q_2$
\State \Return TRUE if all $q g_k q^{-1}$ are from $\mo_{s_1}$, FALSE otherwise \label{line:membership}
\Else 
\State \Return FALSE
\EndIf
\EndProcedure 
\Ensure TRUE if maximal orders are equal and FALSE otherwise

\Statex 
\end{algorithmic}
\caption{\label{fig:neigbours} Subroutines used in the algorithm for building spanning tree of $\gs$ and finding its leaves. }
\end{figure}

\begin{figure}
\begin{algorithmic}[1]
\Statex 
\Require Maximal order $\mo$ of $\qa$, set $S$ of prime ideals of $\zf$
\Procedure{SPANNING-TREE-SIZE}{$\mo$,$S$}
\State $\Adj_{1,1},\ldots,\Adj_{m,l} \gets$ \Call{MAX-ORDERS-ADJ}{$S$}
\State $\at{s_0,q_0} \gets$ \Call{CONJ-CLASS-DESCR}{$\mo$}
\State $E_1 \gets \set{ \at{s_0, j, i} : \at{j,q} \in \Adj_{s_0,i}, i=1,\ldots,l, j \ne s_0 }$ \label{line:e1}
\State $k \gets 1$
\While{$E_k$ is not empty}
\ForAll{$\at{s,j,i} \in E_k$ }
\If{$\#\set{\at{s,q} : \at{s,q} \in \Adj_{j,i}, s \ne s_0 } > 1 $}
\State $E_{k+1} \gets E_{k+1} \cup \set{\at{j,s,i} } $ \label{line:group-one-a}
\EndIf
\State $E_{k+1} \gets E_{k+1} \cup \set{ \at{j,s',i} : \at{s',q} \in \Adj_{j,i'}, s' \ne s_0, s' \ne s } $ \label{line:group-one-b}
\ForAll{$i' = i+1,\ldots,l$}
\State $E_{k+1} \gets E_{k+1} \cup \set{\at{j,s',i'} : \at{s',q} \in \Adj_{j,i'}, s' \ne s_0 } $ \label{line:group-two}
\EndFor
\EndFor
\If{ $E_{k+1}$ is equal to one of $E_1,\ldots,E_k$ }
\State \Return $\infty$
\Else
\State $k \gets k + 1$
\EndIf
\EndWhile
\State \Return $k$
\EndProcedure 
\Ensure Depth of the $\tsm$ if its finite and $\infty$ otherwise 
\end{algorithmic}
\caption{\label{fig:tree-size} The algorithm for finding depth of the spanning tree of $\gs$}
\end{figure}

\begin{figure}
\begin{algorithmic}[1]
\Require Maximal order $\mo$ of $\qa$, set $S$ of prime ideals of $\zf$,
\Statex an element $q$ of $\mo$ such that $\nrd\at{q}\zf$ factors into ideals from $S$ 
\Statex (assumes that two side ideal class group of $\qa$ is trivial)
\Procedure{EXACT-SYNTHESIS-1}{$\mo$,$S$,$q$}
\State $\gens \gets$\Call{FIND-S-GENERATORS}{$\mo$,$S$} \Comment Precomputed, \fig{gs-alg}
\State $G \gets \set{ \text{PRIMITIVE-REPR}\at{\mo,x} : x \in \gens }  $ \Comment Precomputed 
\State Let $S_0 = \set{\p_1,\ldots,\p_l}$ be all elements of $S$
\Statex that does not divide  the  discriminant of $\qa$ 
\State $q \gets \text{PRIMITIVE-REPR}\at{\mo,q}$
\State $C \gets $ empty sequence of quaternions 
\While{$v_{S_0}\at{\nrd\at{q}} > 0 $}
\State Find $q_{\min}$ from $G$ such that $q^{-1}_{\min}q$ is in $\mo$
\Statex and $v_{S_0}\at{\nrd\at{q^{-1}_{\min}q}}$ minimal possible 
\State Add $q_{\min}$ to the end of $C$,  $q \gets q^{-1}_{\min}q$
\EndWhile
\State Add \Call{TWO-SIDED-DECOMPOSE}{$q$,$\mo$,$S$} to the end of $C$
\State \Return $C$
\EndProcedure 
\Ensure Outputs $q_1, \ldots, q_m, u_1, \ldots, u_n, \alpha $ such that there product is $q$,
\Statex  where $q_i$ are from $\gens \cup \gtwo$ 
\Statex and $u_j$ are generators of the unit group of $\mo$
\Statex

\Procedure{TWO-SIDED-DECOMPOSE}{$q$,$\mo$,$S$}
\Statex (assumes that two side ideal class group of $\qa$ is trivial)
\State $Q_1,\ldots,Q_M$ - generators of prime ideals with the norm 
\Statex that divides discriminant of $\qa$ and from $S$ \label{line:prime-gens}
\State $C \gets $ empty sequence of quaternions 
\State Find $\alpha$ from $\nf$ such that $\alpha q$ is integral and
\Statex $\nrd\at{\alpha q}$ does not divide the discriminant of $\qa$, $q \gets \alpha q$
\While{$v_{S}\at{\nrd\at{q}} > 0 $}
\State Find $q_{\min}$ from $Q_1,\ldots,Q_M$ such that $q^{-1}_{\min}q$ is in $\mo$
\State Add $q_{\min}$ to the end of $C$,  $q \gets q^{-1}_{\min}q$
\EndWhile
\State Add \Call{UNIT-DECOMPOSE}{$q$} to the end of $C$ \label{line:units} \Comment See \sec{unit-groups}
\State \Return $C$, $\alpha$
\EndProcedure  
\Statex

\Require Maximal order $\mo$ of $\qa$, an element $q$ of $\mo$ 
\Statex (assumes that two side ideal class group of $\qa$ is trivial)
\Procedure{PRIMITIVE-REPR}{$\mo$,$q$}
\State $I \gets \at{q\mo q^{-1}}\mo $ \label{line:multiply}
\State $I_{pr} \gets \text{PRIMITIVE-IDEAL}\at{I}$ \Comment See \fig{prim}
\State \Return $\text{IDEAL-GENERATOR}\at{I}$ \label{line:ideal-gen} \Comment See \rema{procedures}
\EndProcedure
\Statex

\end{algorithmic}
\caption{\label{fig:exact-synthesis} Exact synthesis algorithm for finding decomposition described in \theo{factor-2} and related subroutines. }
\end{figure}

Now we discuss the details of the FIND-S-GENERATORS procedure. Its correctness immediately follows from the discussion. The procedure builds levels (see \defin{tree-desc}) of the spanning tree one by one. Level zero (line \ref{line:level-zero}) contains only one vertex corresponding to $\mo$. We computed the description of $\mo$ using CONJ-CLASS-DESCR. Level one (line \ref{line:level-one}) is constructed using the procedure S-NEIGHBORS (\fig{neigbours}),  which is also a core subroutine for constructing all subsequent levels. We discuss it in more detail next. 

Given vertex description $v$ (see \defin{tree-desc}) and maximal orders adjacency structure (see \defin{adj-desc}), the procedure S-NEIGHBORS constructs a list of vertex descriptions of S-neighbors of the maximal order $\mo'$ corresponding to $\at{s,q}=$MAX-ORDER$\at{v}$. More precisely, it gives a list of all $\p_k$ neighbors of $\mo'$ with $k \ge $ IDEAL-ID$\at{v}$. We get the description of all $\p_k$ neighbors of $\mo'$ using the maximal order adjacency structure. Given a list 
$
 \Adj_{s,k} = (s_1,q_1),\ldots,(s_{N},q_N)
$
of descriptions of $\p_k$-neighbors of $\mo_s$, the list of its $\p_k$-neighbors is
$
q_1 \mo_{s_1} q^{-1}_1, \ldots, q_N \mo_{s_N} q^{-1}_N
$. 
As $\mo' = q\mo_s q^{-1}$ we see that all $\p_k$-neighbors of $\mo'$ are of the form  $q q_j \mo_{s_j} q^{-1}_j q^{-1}$ and their description is $\at{s_j,q q_j}$. This is precisely how we construct the sequence $N$ on line \ref{line:s-neighbors} of the S-NEIGHBORS procedure. Level one of the spanning tree must include all $S$-neighbors of $\mo$. This is precisely what happens because we have IDEAL-ID$\at{V_0[1]}$ equal to one. Currently we have shown that our algorithms builds levels zero and one of the spanning tree correctly. 

Let us now discuss the procedure NEW-LEVEL that builds level $k$ of the tree given levels $k-1$ and $k-2$. We will show that it correctly builds level $k$, under the assumption that levels $k-1,k-2$ were built correctly. To build a new level we iterate through all vertices $v$ of level $k-1$ that are not leaves (in other words, that are not of conjugacy type $s_0$). For each $v$ and for each $j\ge$ IDEAL-ID$\at{v}$, we find the $\p_j$-neighbors of the maximal order represented by $v$. They are candidates for elements of level $k$. The extra constraint on $j$ ensures that we will only get paths in the tree that correspond to factorizations with ordered norms of factors. By \propos{connect} we know that there are two alternatives for $\p_j$-neighbors of vertices in level $k-1$: They are either in level $k-2$ or in level $k$. When $j>$ IDEAL-ID$\at{v}$, the only possibility is that the $\p_j$-neighbors are in level $k$. 
This follows from \propos{product}. This is why in this case we add the corresponding vertex description to $V_{new}$ on line \ref{line:add-always} without extra checks. When $j$=IDEAL-ID$\at{v}$, we need to verify that the order we are adding belongs to the level $k$. Because of the uniqueness of the factorization, the only order that is (a) a $\p_j$-neighbor of the order described by $v$ and (b) belongs to level $k-2$ is the order $\mo''$ corresponding to PARENT-ID$\at{v}$. 
In the procedure NEW-LEVEL, we add to $V_{new}$ precisely the orders not equal to $\mo''$ on line  \ref{line:add-sometimes}. We have shown that NEW-LEVEL only adds orders to level $k$ that belong to level $k$. It remains to show that it finds all of them. Suppose the order $\mo'$ should be on level $k$; then there is a unique ordered factorization of the primitive ideal that contains $\mo'\mo$. This immediately implies that $\mo'$ is a $\p_j$-neighbor of some order on level $k-1$ and that the IDEAL-ID of the corresponding vertex is less than $j$. This completes the proof of the correctness of NEW-LEVEL. 

Now we discuss the SPANNING-TREE-SIZE procedure that determines the number of levels in the spanning tree. In the SPANNING-TREE-SIZE procedure, we compute a rough description of all edges of $\tsm$. Now we describe what this means precisely. Note that by \propos{connect} every edge of $\tsm$ connects maximal orders of level $k-1$ and $k$. Let $\mo'$ be the maximal order of level $k-1$ and $\mo''$ be the maximal order of the level $k$. With each edge, we associate a tuple $\at{s,j,i}$ that we call an edge description. Integers $s$ and $j$ denote the conjugacy class of $\mo'$ and $\mo''$. Integer $i$ in the description means that $\mo'$ and $\mo''$ are $\p_i$-neighbors. The sets $E_k$ that we build in the algorithm are sets of all edge descriptions corresponding to the edges connecting levels $k-1$ and $k$, excluding edges that are connected to leaves on level $k$. The core observation is that we can build $E_{k+1}$ from $E_{k}$, only using the maximal orders adjacency structure $\Adj$. The spanning tree $\tsm$ has depth $N$ if and only if $E_{N}$ is empty but $E_{N-1}$ is not. Also note that the number of different edge descriptions is finite (and completely described by $\Adj$), so therefore the number of possible sets $E_k$ is also finite. This implies that $\tsm$ has infinite depth if and only if there exist $k$ and $n$ such that $E_n = E_k$.

Now we discuss how to build $E_1$ and how to build $E_{k+1}$ from $E_{k}$. Elements of $E_1$ have the form $\at{s_0,j,i}$, where $j$ goes through the labels of all conjugacy classes not equal to $s_0$ of $\p_i$-neighbors of $\mo$, $s_0$ is a conjugacy class of $\mo$ and there is no restriction on $i$. This is precisely how we build $E_1$ in SPANNING-TREE-SIZE on line \ref{line:e1}. Let us now look how to build $E_{k+1}$ given $E_{k}$. We iterate through all edge descriptions $E_k$. Each edge description $\at{s,j,i}$ corresponds to all possible pairs of orders $\at{\mo',\mo''}$ such that: 
\begin{itemize}
	\item $\mo'$ belongs to level $k-1$ of $\tsm$, conjugate to $\mo_s$ 
	\item $\mo''$ belongs to level $k$ of $\tsm$, conjugate to $\mo_j$ 
	\item $\mo'$ and $\mo''$ are $\p_i$-neighbors 
\end{itemize}
There are two sets of maximal orders that are in level $k+1$ and are $S$-neighbors of $\mo''$. One set consists of the $\p_i$-neighbors of $\mo'$ and the other set contains the $\p_{i'}$-neighbors of $\mo''$ for $i' > i$. Note that number of $\p$-neighbors of $\mo''$ and their conjugacy classes depend only on $j$ and the same for all $\mo''$. For the second group we add edge descriptions $\at{j,s',i'}$ where $s'\ne s_0$ goes through all conjugacy classes of all $\p_{i'}$ neighbors of $\mo''$ (this is done on line \ref{line:group-two}). For the first set we need to take into account that $\mo''$ has one $\p_i$-neighbor in level $k-1$ that is precisely $\mo'$. Therefore we need to add the edge description $\at{j,s,i}$ only in the case when $\mo''$ has more than one $\p_j$-neighbor that is conjugate to $\mo_s$. This this precisely what is done on line~\ref{line:group-one-a}. We also need to add the edge descriptions corresponding to all $\p_i$-neighbors of $\mo''$ that are not conjugates of $\mo_s$. This is what is done on line~\ref{line:group-one-b}. 

The proof of correctness of SPANNING-TREE-SIZE uses the same ideas as the proof of correctness of NEW-LEVEL and we do not provide it here. It is also interesting to note that one can count the number of leaves of $\tsm$ without building it using a slightly modified version of SPANNING-TREE-SIZE that has edge multiplicities in its edge description structure. 

\subsubsection{Canonical form of the factorization}

Now we will briefly discuss why the factorization of elements of $\mos$ considered in \theo{mos-factor} corresponds to a canonical form. Consider $q$ from $\mos$. According to \theo{mos-factor}, we can write $q=q_1 \ldots q_n q_0$, where $q_1,\ldots,q_n \in \gens$ and where $q_0$ generates a two-sided ideal of $\mo$. Let us now assign a label to each $q_k$ from $\gens$. Let $I_k$ be the primitive ideal containing $q_k\mo$. Let $I_{k,1} \cdots I_{k,n(k)}$ be a factorization of $I_k$ into primitive maximal ideals. The label of $q_k$ will consist of left and right parts. The left will be the label of the maximal ideal $I_{k,1}$, and the right part will be the label of the maximal ideal $I_{k,n(k)}$. The label of a maximal ideal consists of two parts. The first part is the index of its norm $\p$ in the list $\p_1,\ldots,\p_l$ of elements of $S$ that do not divide the discriminant of $\qa$. The second is the point of the projective line $\pone\at{\zf/\p}$ corresponding to the ideal by \lemm{prim-enum}. Let us order the factorization of the primitive ideal that contains $q\mo$ according to \propos{ordered} and build the factorization $q_1 \ldots q_n q_0 $ out of it. Let $\at{\at{i_k,(x_k,y_k)},\at{j_k,(z_k,w_k)}}$ be a label of $q_k$. It is not difficult to see that $j_k \le i_{k+1}$. In the case when $j_k = i_{k+1}$, it must be the case that $(x_k,y_k) \ne (z_k,w_k)$. On the other hand, if we obtained two different factorizations in terms of $\gens$ satisfying the above-mentioned condition on labels, then they must correspond to two different elements of $\mos$. This follows from the uniqueness of the ordered factorization. Finally, for any factorization in terms of $\gens$ that satisfies the label constraints, it is true that $\cmi{q_1\ldots q_n q_0 } = \sum_{k=1}^n \cmi{q_k}$. Later we will see that these generic ideas explain some of the canonical forms of quantum circuits that can be found in the literature. 


\section{Applications} \label{sec:app}


The goal of this section is to explain how to use tools described in the previous section to study the following questions about two by two unitaries: 
\begin{itemize}
	\item Given some finite set $G$ of unitaries, find a special form of unitaries containing the group they generate. 
	\item Are all unitaries of this special form representable by elements from the set $G$? If so, how can we find  such a representation. If not, 
	\begin{itemize}
		\item How to find the counterexample?
		\item Can one find a bigger set $G'$ of unitaries that generates all matrices of the mentioned special form?  
	\end{itemize} 
	\item What is a good general approach to specifying this special form of the unitaries? 
	\item Is there some canonical set $G''$ of unitaries  such that any unitary of the mentioned special form can be uniquely written using elements of $G''$? Is there a canonical form for unitaries with this special form?
\end{itemize}
These questions are usually studied under the name of exact synthesis of unitaries. Our goal is to provide guidance for choosing a number field $F$, a quaternion algebra $\qa$, a maximal order $\mo$ in the quaternion algebra and the set $S$ of ideals of $\zf$ to study some interesting instances of exact synthesis problems. Roughly speaking, all these choices depend on the number field over which the unitary matrices are represented. It is also much simpler to map special unitaries to quaternions than to map general unitaries to quaternions. However, the simple way of turning a general unitary $U$ into a special unitary $U / \sqrt{\det U}$ is not applicable in our case, as taking square roots generally makes number fields more complicated. In this section, we first discuss an alternative, unnormalized realization of unitaries by quaternions that enables the use of simpler number fields.  Then, we discuss how to make the choices mentioned above by working with the unnormalized realization of unitaries from the gate set we study or related to the special form of the unitaries we study.

\subsection{Choosing a quaternion algebra using the unnormalized realization of unitaries } Let $K$ be a CM field (i.e.\ a totally imaginary extension of a totally real number field) with real subfield $\nf = K \cap \r$. Consider the following form of unitary matrices: 
\[
U = \frac{1}{\sqrt{d}}\UG{x}{z\sqrt{b}}{y\sqrt{b}}{w}, x,y,z,w \in K, b,d \in \nf.  
\]
Most unitaries considered in the applications have the form described above. We can rewrite $U$ as a product  $U_1 U_2 $ of two unitaries, where 
\[
U_1 = \frac{1}{\sqrt{d}}\UG{x}{-y^{\ast}\sqrt{b}}{y\sqrt{b}}{x^{\ast}}, U_2 = \UG{1}{0}{0}{\alpha}, \alpha = \frac{\det U}{\det U_1} \in K.
\]
Now we will show that we can represent $U$ using the set of matrices 
\[
M\of{K,b} = \set{ \UG{x}{-y^{\ast}\sqrt{b}}{y\sqrt{b}}{x^{\ast}} : x,y \in K, b \in F }.
\]
We chose $b$ to be a part of the definitions of the set of unitaries because in most applications $\sqrt{b}$ is usually fixed for the gate set that one studies. We say that the matrix $M$ from $M\of{K,b}$ represents $U$ if the following holds for all density matrices $\rho$: 
\[
	U\rho U^{\dagger} = \frac{M \rho M^\dagger}{ \det M}.
\]
Note that if $M_1$ represents $U_1$ and $M_2$ represents $U_2$ then $M_1 M_2$ represents $U_1 U_2$. It is not difficult to see that when $\al \neq -1$, the above unitaries $U_1$ and  $U_2$ are represented by
\[ 
M_1 =  \UG{x}{-y^{\ast}\sqrt{b}}{y\sqrt{b}}{x^{\ast}}, M_2 = \UG{1+\alpha^{\ast}}{0}{0}{1 + \alpha}. 
\]
Note that by using this realization, we got rid of $\sqrt{d}$ and the matrices we consider are special unitaries up to a scalar. Another useful property of the unnormalized representation is that $M$ and $\beta M$ for $\beta \in F$ represent the same unitary matrix. 

Now we show how to choose a quaternion algebra corresponding to $M\of{K,b}$. Note that $K$ can always be represented as $F(\sqrt{-D})$, where $D$ is a totally positive element of $F$. Any element of $K$ can be written as $x_1+x_2\sqrt{-D}$ for $x_1,x_2$ from $F$. Using $\sqrt{-D}$, an unnormalized realization of the Pauli matrix $Z$ (the case $\alpha=-1$ excluded above) can be written as
\[
\UG{\sqrt{-D}}{0}{0}{-\sqrt{-D}}.
\]
An arbitrary matrix $M$ from $M\of{K,b}$ can be written as 
\[ 
 \UG{x_1 + \sqrt{-D}x_2}{-\at{y_1 - \sqrt{-D}y_2}\sqrt{b}}{\at{y_1 + \sqrt{-D}y_2} \sqrt{b}}{x_1 - \sqrt{-D}x_2}, \,\,\,\,x_1,x_2,y_1,y_2 \in F.
\] 
Using Pauli matrices $I,X,Y,Z$, we can write a matrix from $M\of{K,b}$ as
\begin{equation}
 x_1 I + x_2 \sqrt{-D}Z - y_1 \sqrt{-b}Y + y_2 \sqrt{-Db} X. \label{eq:pauli-repr}
\end{equation}

 We now map elements of $M\of{K,b}$ to the quaternion algebra $Q = \at{\frac{-D,-b}{F}}$. Let $\i,\j,\k$ be elements of $\qa$ such that $\i^2 = -D, \j^2 = -b$ and $\k = \i\j$. Define an $F$-linear map $\kappa \colon M\of{F,b} \to Q$ via
 \[ 
 \kappa\at{\sqrt{-D}Z} = \i, \,\,\,\, \kappa\at{-\sqrt{-b}Y} =\j,\,\,\,\, \kappa\at{\sqrt{-Db} X} = \k.
 \]
 The quaternion $\kappa\at{M}$ corresponding to the matrix $M$ in \eq{pauli-repr} is
$
 x_1 + x_2 \i + y_1 \j + y_2 \k
$. 
The map $\kappa$ has several useful properties. The determinant of $M$ is equal to the norm of $\kappa\at{M}$. The trace of $M$ is equal to the reduced trace of $\kappa\at{M}$. The conjugate $x_1 - x_2 \i - y_1 \j - y_2 \k$ of the quaternion $\kappa\at{M}$ is equal to $\kappa\at{M^{\dagger}}$. 

We have shown how to establish the correspondence between unitaries and quaternions using an unnormalized realization. It is useful to note that the inverse of a unitary $U$ is represented by a conjugate of the quaternion corresponding to $U$. Table~\ref{tab:qa} shows the quaternion algebras and number fields corresponding to some well known gate sets. Some of our examples are related to cyclotomic number fields $\q\at{\zeta_n}$ where $\zeta_n$ is $n$-th root of unity. The real subfield in this case is $\q\at{\zeta_n+\zeta^{-1}_n}$ when $n\ne 4$ and $\q$ otherwise. Let $D_n$ be such that $\q\at{\zeta_n}=F\of{\sqrt{-D_n}}$ for $F $ being the real subfield. When $4 | n$ we can choose $D_n = 1$, in all other cases we choose $D_n = 4 -\at{\zeta_n + \zeta_n^{-1} }^2$. 

\bgroup
\def\arraystretch{1}
\begin{table}
{\tiny
\begin{centering}
\begin{tabular}{|c|c|c|c|c|c|c|}
\hline 
Gate set & Matrices & Projective      & $K $ & $b$ & $F$ & $D$ \tabularnewline
         & example  &  representation &      &     &     &    \tabularnewline
\hline 
\hline 
Clifford+T & $T=\UG{1}{0}{0}{\zeta_8}$ & $\UG{1+\zeta_8^{\ast}}{0}{0}{1+\zeta_8}$  & $\q\at{\zeta_8}$ & $1$ & $\q\at{\zeta_8+\zeta^{-1}_8}$ & 1 \tabularnewline
           & $H=\frac{1}{\sqrt{2}}\UG{1}{1}{1}{-1}$ & $\UG{i}{i}{i}{-i}$ &  & & & \tabularnewline
\hline 
Clifford- &  $T_n=\UG{1}{0}{0}{\zeta_n}$  & $\UG{1+\zeta_n^{\ast}}{0}{0}{1+\zeta_n}$ & $\q\at{\zeta_n}$ & $1$ & $\q\at{\zeta_n+\zeta^{-1}_n}$ & $D_n$ \tabularnewline
cyclotomic\cite{CYCL}& & & & & & \tabularnewline
\hline 
V-basis\cite{BGS,S1,S2} & $\frac{1}{\sqrt{5}}(I + 2iZ)$ & $ I + 2iZ$   & $\q\at{i}$ & 1 & $\q$ & 1 \tabularnewline
        & $\frac{1}{\sqrt{5}}(I + 2iX)$ & $ I + 2iX$   &  & & &   \tabularnewline
\hline 
Fibonacci\cite{KBS}& $\UG{1}{0}{0}{\zeta_{10}}$  & $\UG{1 + \zeta^{\ast}_{10} }{0}{0}{1 + \zeta_{10}}$  &  $\q\at{\zeta_{10}}$ & $\frac{\sqrt{5}-1}{2}$ & $\q\at{\zeta_{10}+\zeta^{-1}_{10}}$  & $2-b$  \tabularnewline
 & $\UG{b}{-\sqrt{b}}{\sqrt{b}}{b}$ & $\UG{b}{-\sqrt{b}}{\sqrt{b}}{b}$ & & & & \tabularnewline
 & $\UG{1}{0}{0}{-1}$ & $\UG{\sqrt{b-2}}{0}{0}{-\sqrt{b-2}}$ & & & & \tabularnewline
\hline 
$\mathfrak{su}\left(2\right)_{k}$\cite{HZBS}&  $q^{1/4}\UG{-q^{\ast}}{0}{0}{1}$ & $ \UG{1-q^{\ast}}{0}{0}{1-q}$ & $\q\at{q}$ & $\floor{3}_q$ & $\q\at{q+q^{-1}}$ & $D_{k+2}$ \tabularnewline
 &  $\frac{q^{-1/4}}{\floor{2}_q} \UG{q}{-\sqrt{b}}{\sqrt{b}}{q^{\ast}}$ & $\UG{q}{-\sqrt{b}}{\sqrt{b}}{q^{\ast}}$ & & &  & \tabularnewline
 & $\UG{1}{0}{0}{-1}$ & $ \UG{\sqrt{-D}}{0}{0}{-\sqrt{-D}}$ & & & & \tabularnewline
& $q = \zeta_{k+2}$,  & & & & & \tabularnewline
&  $\floor{m}_q = \frac{q^{m/2}-q^{-m/2}}{q^{1/2}-q^{-1/2}} $ & & & & & \tabularnewline
\hline 
$\set{B,K,Z}$ \cite{BKZ} & $K=\UG{1}{0}{0}{\frac{-1+4\sqrt{-3}}{7}}$ & $3I - 2\sqrt{-3}Z$ & $\q\at{\zeta_6}$ & $-1$ & $\q$ & $-3$  \tabularnewline
 \hline 
\end{tabular}
\par\end{centering}
}
\caption{ \label{tab:qa} Examples of choosing quaternion algebra for different gate sets }
\end{table}

\subsection{ Choosing the maximal order and set of ideals $S$ } 
To apply the framework we developed it is crucial to represent unitaries using integral quaternions. A simple way to ensure that the quaternion $x_1 + x_2 \i + x_3 \j + x_4 \k$ is integral is to rescale it by an element $\beta \in \nf$ such that $\beta x_k  \in \zf$. This will ensure that all quaternions corresponding to the gate set of interest belong to the order
\[
L_{\qa} = \zf+\zf \i + \zf \j + \zf \k 
\]
Next we compute a maximal order that contains $L_{\qa}$ using known algorithms. For some classes of quaternion algebras there are known explicit constructions of maximal orders containing $L_{\qa}$ (see e.g.\ Proposition 8.7.3 \cite{Martinet}). Next we discuss a simple generalization of one of examples from  \cite{Martinet}. 

We describe an explicit example of a maximal order in the quaternion algebra $\at{\frac{-1,-1}{F}}$. Suppose $\zf$ contains an element $\xi$ such that $\xi^2$ is equal to $2$ up to a unit of $\zf$. Consider the following order: 
\begin{equation}
 \mo_{F,\xi} = \zf + \frac{\xi}{2} \zf \at{\i+1} + \frac{\xi}{2}\zf \at{\j+1} + \frac{1}{2}\zf \at{1+\i+\j+\k}  \label{eq:max-order}
\end{equation}
By direct computation we can check that the discriminant of the order above is equal to $\zf$. This implies that the order is maximal and the corresponding quaternion algebra has discriminant $\zf$. One example of $F$ that satisfy the required property is $\q\at{\zeta_{8n}+\zeta^{\ast}_{8n}}$ which is directly related to some of the Clifford-cyclotomic gate sets. 

Let us now discuss how to pick the set $S$ of prime ideals of $\zf$ . We assume that we started with some set of gates and now have a set $G_0$ of quaternions in a quaternion algebra $Q$. The set $G_0$ is also a subset of a maximal order $\mo$ we constructed earlier in this section. Given $q \in G_0$, then $q\mo$ is a right ideal of $\mo$. We can write it as 
\begin{equation}
q\mo = I \P_1 \cdots \P_k \cdot \at{\a \mo}, \label{eq:s-choice}
\end{equation}
where $I$ is a primitive ideal, the $\P_j$ are prime ideals of $\CM$ whose norms divide the discriminant of $Q$ and $\a$ is an ideal of $\zf$. If two quaternions have the same factorization \eq{s-choice} up to an ideal $\a$ they correspond to the same unitary and therefore $\a$ is irrelevant for our purposes. We then take $S$ to be the set of prime ideals of $\zf$ that divide the norm of $I$ or that divide the norm of some  $\P_j$. In Table~\ref{tab:s} we provide examples of $S$ for several gate sets of interest. 

\bgroup
\def\arraystretch{1.5}
\begin{table}
\begin{centering}
\begin{tabular}{|c|c|c|c|}

\hline
Gate Set & $\nf$ & $\mo$ & $S$ \tabularnewline
\hline
\hline 
Clifford+T & $\q\at{\sqrt{2}}$ & $\bZ[\sqrt{2}] + \bZ[\sqrt{2}]\frac{1 + \i}{\sqrt 2} +  \bZ[\sqrt{2}]\frac{1 + \j}{\sqrt 2} + \bZ[\sqrt 2] \frac{1 + \i + \j + \k}{2}$ & $ \set{ \mfp_2}$ \tabularnewline
\hline 
V-basis & $\q$ & $\bZ + \bZ\i + \bZ\j + \bZ \frac{1+\i+\j+\k}{2}$ &  $ \set{ 5\z }$ \tabularnewline
\hline 
Clifford+T+V & $\q\at{\sqrt{2}}$ & $\bZ[\sqrt{2}] + \bZ[\sqrt{2}]\frac{1 + \i}{\sqrt 2} +  \bZ[\sqrt{2}]\frac{1 + \j}{\sqrt 2} + \bZ[\sqrt 2] \frac{1 + \i + \j + \k}{2}$ &  $\set{ \mfp_2 , 5\zf }$ \tabularnewline
\hline 
Fibonacci & $\q\at{\varphi},$ & $\bZ[\ph] +  \bZ[\ph] \frac{1+\i}{2} + \bZ[\ph] \j + \bZ[\ph] \frac{\j + \k}{2}$ & $\{\mfp_5\}$ \\
\hline 
$\set{B,K,Z}$ \cite{BKZ} & $\q$ & $\bZ + \bZ\frac{\i+1}{2} + \bZ\j + \bZ \frac{\j + \k}{2}$ & $\set{3\z,7\z}$ \tabularnewline
\hline 
\end{tabular}
\end{centering}
\caption[examples]{ \label{tab:s} Examples of maximal ideals and sets of prime ideals for different gate sets.  Here, $\mfp_2 = (2 + \sqrt 2)\bZ[\sqrt 2] = |1 + \zeta_8|^2 \bZ[\sqrt 2]$ satisfies $\mfp_2^2 = 2 \bZ[\sqrt 2]$, $\ph = \frac{1 + \sqrt 5}{2}$ generates the ring $\bZ[\ph]$ of integers of $\bQ(\sqrt{5})$, and where $\mfp_5 = \frac{5 - \sqrt{5}}{2} \bZ[\ph]$ satisfies $\mfp_5^2 = 5 \bZ[\ph]$.}
\label{table:2}
\end{table}
\egroup 

\subsection{Remarks on previously-studied gate sets} The questions related to the exact synthesis and canonical forms of Clifford+T gate set were widely studied before \cite{GS2,KMM1,GKMR,MA,BS}. Here we briefly summarize the result of applying the framework developed in this paper (see also Tables~\ref{tab:qa},\ref{tab:s}). The quaternion algebra corresponding to Clifford+T is totally definite. The unit group of the maximal order \eq{max-order} is finite and equal to the binary octahedral group, which is otherwise known as the single qubit Clifford group. Interestingly, there is an Euclidean algorithm for this quaternion algebra~\cite{EQ}. This implies that in the Clifford+T case any primitive ideal is principal. Using the generic approach to constructing canonical forms described in the end of \sec{graph}, we precisely obtain the canonical form described in Ref.~\cite{GKMR}. Our framework also leads to a very similar exact synthesis algorithm. Finally, it leads to the same description of exactly synthesizable unitaries. The Clifford+T gate set can be extended by adding single qubit unitaries that can be implemented using RUS circuits \cite{Paetznick2013,RUS1,RUS2}. Our framework can also be applied to such extended gate sets by adding extra prime ideals to the set $S$.

The approach taken in~\cite{KBS} to the exact synthesis of braids for Fibonacci anyons is quite different from the Clifford+T and Clifford-cyclotomic gate sets. This difference is explained by our framework: the relevant quaternion algebra is indefinite and therefore has an infinite unit group. Luckily, in this case the quaternion algebra is defined over the totally real field $\bQ(\sqrt 5)$ and splits at exactly one real place. Therefore, the methods developed in Ref.~\cite{V} apply directly. In this case the set $S$ contains the unique ideal whose norm divides the discriminant of the quaternion algebra. This means that all exact synthesis questions in this case are related to two-sided ideals and the unit group of the maximal order.  

Finally, our methods provide an alternative proof of Theorem~5.1 in~\cite{CYCL}, which characterizes the gates that can be exactly synthesized from some Clifford-cyclotomic gate sets. They also provide an algorithmic solution to the question: What unitaries must one add to the Clifford-cyclotomic gate set to be able to synthesize all unitaries over the ring $\z\of{\frac{1}{2},\zeta_n}$? In the general case, computing the gate set based on the quaternion algebra $\at{\frac{a,b}{F}}$, the maximal order $\mo$ and a set $S$ of prime ideals of $\zf$ is at least as hard as computing the unit group $\zf^\times$~\cite{KV}. The problem of computing the unit group of an arbitrary degree number field is known to be hard for classical computers and solvable on quantum computers~\cite{QUnit}. For small $n$ the question can be answered using computational number theory packages, like MAGMA. For large $n$, the question quickly becomes computationally intractable.

\section{Summary and open questions}

The developed framework allows us to state the question of approximating operators $e^{i\phi Z/2}$ by exactly synthesizable unitaries in a simple way. Consider unitaries corresponding to a quaternion algebra $\at{\frac{a,b}{F}}$, a set of prime ideals $S$ of $\zf$ and a maximal order $\mo$ containing $\zf + \zf \i + \zf \j +\zf \k$. The unitaries can be written  in the form 
\[ 
U = \frac{1}{\sqrt{x_1^2 - a x_2^2 - b x_3^2 + ab x_4^2 }} \at{ x_1 I + \sqrt{a}Z x_2 - \sqrt{b}Y x_3 + \sqrt{ab}X x_4 }, x_j \in \zf,
\]
where the ideal $\a = \at{x_1^2 - a x_2^2 - b x_3^2 + ab x_4^2}\zf$ must factor into ideals from $S$. For example, in the case when $F$ is a totally real number field and $\at{\frac{a,b}{F}}$ is a totally definite quaternion algebra, the integer $\sum_{\p \in S} v_{\p}\at{\a}$ can be used to bound the number of gates required to implement~$U$. This is precisely what is needed for the number-theoretic approximation algorithm: a simple description of the exactly synthesizable unitaries that require a bounded number of gates to be implemented. Developing the details of such an approximation algorithm for definite and indefinite quaternion algebras will be addressed elsewhere~\cite{InPrep}. As soon as an exact unitary is output by an approximation algorithm, one of the exact synthesis algorithms described in this paper can be used to obtain the circuit implementing it. 

There is an interesting open question related to the situation when we do not have all the generators described by quaternion algebra $\at{\frac{a,b}{F}}$, the maximal order $\mo$ and the set $S$. We will illustrate it on a simple example of the $V$-basis (see Table~\ref{table:2}). Consider the gate set $G^{\ast}$ consisting only of $V_0 = \frac{1}{\sqrt{5}}\at{I + 2iX}$ and $ V_1 = \frac{1}{\sqrt{5}}\at{I + 2iZ}$ together with their inverses.  For a binary string $b(1),\ldots,b(n)$, the unitary $U_b = V_{b(1)}\ldots V_{b(n)}$ can be easily decoded using an exact synthesis algorithm. The gate set $G^{\ast}$ is universal and moreover, it is {\it efficiently} universal  (i.e.\ can approximate an arbitrary unitary within precision $\varepsilon$ using a circuit of length $O\at{\log\at{1/\varepsilon}}$) by results contained in \cite{H1,B1}. The same is true for the complete $V$-basis. However, length-$n$ circuits over the full $V$-basis produce a much denser covering of unitaries than is possible with length-$n$ circuits over $G^{\ast}$. Let us now choose $\varepsilon_0$ such that the closed ball $B[U_b,\varepsilon_0]$ of radius $\varepsilon_0$ and with center in $U_b$ contains only one element of $\tpl{G^{\ast}}$ but $2^{\Omega(n)}$ elements representable by circuit of length $n$ over $V$-basis. Let us now pick a random unitary $U'$ inside $B[U_b,\varepsilon_0]$. Is there a polynomial-time algorithm for recovering $b$ from $U'$? It is not difficult to generalize the question to more general gate sets that can be obtained by our framework.  
\bibliography{library}

\begin{thebibliography}{10}

\bibitem{InPrep}
{In preparation}.

\bibitem{BBG}
Andreas Blass, Alex Bocharov, and Yuri Gurevich.
\newblock {Optimal Ancilla-free Pauli+V Circuits for Axial Rotations}.
\newblock \href {http://arxiv.org/abs/1412.1033} {\path{arXiv:1412.1033}}.

\bibitem{BGS}
Alex Bocharov, Yuri Gurevich, and Krysta~M. Svore.
\newblock {Efficient Decomposition of Single-Qubit Gates into V Basis
  Circuits}.
\newblock {\em Physical Review A}, 88(1):1--13, July 2013.
\newblock \href {http://arxiv.org/abs/arXiv:1303.1411v1}
  {\path{arXiv:arXiv:1303.1411v1}}, \href
  {http://dx.doi.org/10.1103/PhysRevA.88.012313}
  {\path{doi:10.1103/PhysRevA.88.012313}}.

\bibitem{RUS2}
Alex Bocharov, Martin Roetteler, and Krysta~M. Svore.
\newblock {Efficient synthesis of probabilistic quantum circuits with
  fallback}.
\newblock \href {http://arxiv.org/abs/1409.3552} {\path{arXiv:1409.3552}}.

\bibitem{RUS1}
Alex Bocharov, Martin Roetteler, and Krysta~M Svore.
\newblock {Efficient Synthesis of Universal Repeat-Until-Success Circuits}.
\newblock 2014.
\newblock \href {http://arxiv.org/abs/1404.5320} {\path{arXiv:1404.5320}},
  \href {http://dx.doi.org/10.1103/PhysRevLett.114.080502}
  {\path{doi:10.1103/PhysRevLett.114.080502}}.

\bibitem{BS}
Alex Bocharov and Krysta~M. Svore.
\newblock {Resource-Optimal Single-Qubit Quantum Circuits}.
\newblock {\em Physical Review Letters}, 109(19):190501, November 2012.
\newblock \href {http://arxiv.org/abs/1206.3223} {\path{arXiv:1206.3223}},
  \href {http://dx.doi.org/10.1103/PhysRevLett.109.190501}
  {\path{doi:10.1103/PhysRevLett.109.190501}}.

\bibitem{B1}
Jean Bourgain and Alexander Gamburd.
\newblock {On the spectral gap for finitely-generated subgroups of SU (2)}.
\newblock {\em Inventiones mathematicae}, 171(1):83--121, 2008.

\bibitem{EQ}
Jean-Paul Cerri, Jerome Chaubert, and Pierre Lezowski.
\newblock {Euclidean totally definite quaternion fields over the rational field
  and over quadratic number fields}.
\newblock {\em International Journal of Number Theory}, 09(03):653--673, 2013.
\newblock \href {http://dx.doi.org/10.1142/S1793042112501540}
  {\path{doi:10.1142/S1793042112501540}}.

\bibitem{DN}
Christopher~M. Dawson and Michael~A. Nielsen.
\newblock {The Solovay-Kitaev algorithm}.
\newblock {\em Quantum Information \& Computation}, 6(1):81--95, May 2005.
\newblock \href {http://arxiv.org/abs/0505030} {\path{arXiv:0505030}}.

\bibitem{DCS}
Guillaume Duclos-Cianci and Krysta~M. Svore.
\newblock {Distillation of nonstabilizer states for universal quantum
  computation}.
\newblock {\em Phys. Rev. A}, 88(4):42325, October 2013.
\newblock \href {http://arxiv.org/abs/1210.1980} {\path{arXiv:1210.1980}},
  \href {http://dx.doi.org/10.1103/PhysRevA.88.042325}
  {\path{doi:10.1103/PhysRevA.88.042325}}.

\bibitem{QUnit}
Kirsten Eisentrager, Sean Hallgren, A.~Yu. Kitaev, and Fang Song.
\newblock {A quantum algorithm for computing the unit group of an arbitrary
  degree number field}.
\newblock \href {http://dx.doi.org/10.1145/2591796.2591860}
  {\path{doi:10.1145/2591796.2591860}}.

\bibitem{CYCL}
Simon Forest, David Gosset, Vadym Kliuchnikov, and David Mckinnon.
\newblock {Exact Synthesis of Single-qubit Unitaries over Clifford-cyclotomic
  Gate Sets}.
\newblock 2015.
\newblock \href {http://arxiv.org/abs/1501.04944} {\path{arXiv:1501.04944}}.

\bibitem{GS2}
Brett Giles and Peter Selinger.
\newblock {Remarks on Matsumoto and Amano's normal form for single-qubit
  Clifford+T operators}.
\newblock December 2013.
\newblock \href {http://arxiv.org/abs/1312.6584} {\path{arXiv:1312.6584}}.

\bibitem{GKMR}
David Gosset, Vadym Kliuchnikov, Michele Mosca, and Vincent Russo.
\newblock {An algorithm for the T-count}.
\newblock August 2013.
\newblock \href {http://arxiv.org/abs/1308.4134} {\path{arXiv:1308.4134}}.

\bibitem{H1}
Aram~W. Harrow, Benjamin Recht, and Isaac~L. Chuang.
\newblock {Efficient discrete approximations of quantum gates}.
\newblock {\em Journal of Mathematical Physics}, 43(9):4445, November 2002.
\newblock \href {http://arxiv.org/abs/0111031} {\path{arXiv:0111031}}, \href
  {http://dx.doi.org/10.1063/1.1495899} {\path{doi:10.1063/1.1495899}}.

\bibitem{HZBS}
Layla Hormozi, Georgios Zikos, Nicholas Bonesteel, and Steven~H. Simon.
\newblock {Topological quantum compiling}.
\newblock {\em Physical Review B}, 75(16):165310, April 2007.
\newblock \href {http://dx.doi.org/10.1103/PhysRevB.75.165310}
  {\path{doi:10.1103/PhysRevB.75.165310}}.

\bibitem{FSt}
N.~Cody Jones.
\newblock {Distillation protocols for Fourier states in quantum computing}.
\newblock March 2013.
\newblock \href {http://arxiv.org/abs/1303.3066} {\path{arXiv:1303.3066}}.

\bibitem{J}
N.~Cody Jones, James~D. Whitfield, Peter~L. McMahon, Man-hong Yung, Rodney~Van
  Meter, Al\'{a}n Aspuru-Guzik, and Yoshihisa Yamamoto.
\newblock {Faster quantum chemistry simulation on fault-tolerant quantum
  computers}.
\newblock {\em New Journal of Physics}, 14(11):115023, November 2012.
\newblock \href {http://arxiv.org/abs/1204.0567} {\path{arXiv:1204.0567}},
  \href {http://dx.doi.org/10.1088/1367-2630/14/11/115023}
  {\path{doi:10.1088/1367-2630/14/11/115023}}.

\bibitem{KV}
Markus Kirschmer and John Voight.
\newblock {Algorithmic enumeration of ideal classes for quaternion orders}.
\newblock {\em SIAM Journal on Computing}, 39(5):37, 2008.
\newblock \href {http://arxiv.org/abs/0808.3833} {\path{arXiv:0808.3833}},
  \href {http://dx.doi.org/10.1137/120866063} {\path{doi:10.1137/120866063}}.

\bibitem{bk:ksv}
Alexei Kitaev, Alexander Shen, and Mikhail Vyalyi.
\newblock {\em {Classical and Quantum Computation}}.
\newblock Graduate studies in mathematics, v. 47. American Mathematical
  Society, Boston, MA, USA, 2002.

\bibitem{TH}
Vadym Kliuchnikov.
\newblock {\em {New methods for Quantum Compiling}}.
\newblock PhD thesis, University of Waterloo, 2014.
\newblock URL: \url{https://uwspace.uwaterloo.ca/handle/10012/8394}.

\bibitem{KBS}
Vadym Kliuchnikov, Alex Bocharov, and Krysta~M. Svore.
\newblock {Asymptotically Optimal Topological Quantum Compiling}.
\newblock {\em Physical Review Letters}, 112(14), April 2014.
\newblock \href {http://arxiv.org/abs/1310.4150} {\path{arXiv:1310.4150}},
  \href {http://dx.doi.org/10.1103/PhysRevLett.112.140504}
  {\path{doi:10.1103/PhysRevLett.112.140504}}.

\bibitem{KMM2}
Vadym Kliuchnikov, Dmitri Maslov, and Michele Mosca.
\newblock {Asymptotically optimal approximation of single qubit unitaries by
  Clifford and T circuits using a constant number of ancillary qubits}.
\newblock {\em Physical Review Letters}, 110(19):1--5, December 2012.
\newblock \href {http://arxiv.org/abs/1212.0822} {\path{arXiv:1212.0822}},
  \href {http://dx.doi.org/10.1103/PhysRevLett.110.190502}
  {\path{doi:10.1103/PhysRevLett.110.190502}}.

\bibitem{KMM3}
Vadym Kliuchnikov, Dmitri Maslov, and Michele Mosca.
\newblock {Practical approximation of single-qubit unitaries by single-qubit
  quantum Clifford and T circuits}.
\newblock December 2012.
\newblock \href {http://arxiv.org/abs/1212.6964} {\path{arXiv:1212.6964}}.

\bibitem{KMM1}
Vadym Kliuchnikov, Dmitri Maslov, and Michele Mosca.
\newblock {Fast and efficient exact synthesis of single qubit unitaries
  generated by Clifford and T gates}.
\newblock {\em Quantum Information \& Computation}, 13(7-8):0607--0630, June
  2013.
\newblock URL: \url{http://arxiv.org/abs/1206.5236}, \href
  {http://arxiv.org/abs/1206.5236} {\path{arXiv:1206.5236}}.

\bibitem{BKZ}
Claire Levaillant, Bela Bauer, Michael Freedman, Zhenghan Wang, and Parsa
  Bonderson.
\newblock {Fusion and Measurement Operations for $SU(2)_4$ Anyons}.
\newblock \href {http://arxiv.org/abs/1504.02098} {\path{arXiv:1504.02098}}.

\bibitem{S2}
A.~Lubotzky, R.~Phillips, and P.~Sarnak.
\newblock {Hecke operators and distributing points on the sphere I}.
\newblock {\em Communications on Pure and Applied Mathematics},
  39(S1):S149--S186, 1986.
\newblock \href {http://dx.doi.org/10.1002/cpa.3160390710}
  {\path{doi:10.1002/cpa.3160390710}}.

\bibitem{S1}
A.~Lubotzky, R.~Phillips, and P.~Sarnak.
\newblock {Hecke operators and distributing points on S2. II}.
\newblock {\em Communications on Pure and Applied Mathematics}, 40(4):401--420,
  July 1987.
\newblock \href {http://dx.doi.org/10.1002/cpa.3160400402}
  {\path{doi:10.1002/cpa.3160400402}}.

\bibitem{Martinet}
Jacques Martinet.
\newblock {\em {Perfect Lattices in Euclidean Spaces}}, volume 327.
\newblock Springer-Verlag Berlin Heidelberg, 2003.
\newblock URL: \url{http://link.springer.com/10.1007/978-3-662-05167-2}, \href
  {http://dx.doi.org/10.1007/978-3-662-05167-2}
  {\path{doi:10.1007/978-3-662-05167-2}}.

\bibitem{MA}
Ken Matsumoto and Kazuyuki Amano.
\newblock {Representation of Quantum Circuits with Clifford and pi/8 Gates}.
\newblock June 2008.
\newblock \href {http://arxiv.org/abs/0806.3834} {\path{arXiv:0806.3834}}.

\bibitem{Paetznick2013}
Adam Paetznick and Krysta~M. Svore.
\newblock {Repeat-Until-Success: Non-deterministic decomposition of
  single-qubit unitaries}.
\newblock November 2013.
\newblock \href {http://arxiv.org/abs/1311.1074} {\path{arXiv:1311.1074}}.

\bibitem{RS}
Michael~O. Rabin and Jeffery~O. Shallit.
\newblock {Randomized algorithms in number theory}.
\newblock {\em Communications on Pure and Applied Mathematics},
  39(S1):S239--S256, 1986.
\newblock \href {http://dx.doi.org/10.1002/cpa.3160390713}
  {\path{doi:10.1002/cpa.3160390713}}.

\bibitem{R}
Irving Reiner.
\newblock {\em {Maximal orders}}.
\newblock L.M.S. monographs. Academic Press, 1975.

\bibitem{RossV}
Neil~J. Ross.
\newblock {Optimal ancilla-free Clifford+V approximation of z-rotations}.
\newblock \href {http://arxiv.org/abs/1409.4355} {\path{arXiv:1409.4355}}.

\bibitem{S}
Peter Selinger.
\newblock {Efficient Clifford+T approximation of single-qubit operators}.
\newblock December 2012.
\newblock \href {http://arxiv.org/abs/1212.6253} {\path{arXiv:1212.6253}}.

\bibitem{V}
John Voight.
\newblock {Computing fundamental domains for Fuchsian groups}.
\newblock {\em Journal de th\'{e}orie des nombres de Bordeaux}, 21(2):p.
  467--489, 2009.
\newblock \href {http://arxiv.org/abs/0802.0196} {\path{arXiv:0802.0196}},
  \href {http://dx.doi.org/10.5802/jtnb.683} {\path{doi:10.5802/jtnb.683}}.

\bibitem{WK}
Nathan Wiebe and Vadym Kliuchnikov.
\newblock {Floating point representations in quantum circuit synthesis}.
\newblock {\em New Journal of Physics}, 15(9):093041, September 2013.
\newblock \href {http://arxiv.org/abs/1305.5528} {\path{arXiv:1305.5528}},
  \href {http://dx.doi.org/10.1088/1367-2630/15/9/093041}
  {\path{doi:10.1088/1367-2630/15/9/093041}}.

\end{thebibliography}

\end{document}